\newcommand{\Ex}{\expec}
\newcommand{\calV}{\Vcal}
\newcommand{\calW}{\Wcal}
\newcommand{\remainder}{\varrho}
\numberwithin{equation}{section} 
\newcommand{\pow}[1]{^{(#1)}}
\newcommand{\lp}{\left(}
\newcommand{\rp}{\right)}
\newcommand{\cR}{\mathcal{R}}
\newcommand{\cQ}{\mathcal{Q}}
\newcommand{\lb}{\left[}
\newcommand{\rb}{\right]}
\newcommand{\Var}{\mathrm{Var}}
\newcommand{\Cov}{\mathrm{Cov}} 
\newcommand{\bsq}{\vrule height .9ex width .8ex depth -.1ex}
\newcommand{\JGF}{J^{GF}}
\newenvironment{proofof}[1]{%
	\noindent\begin{proof}[{\sc Proof of #1}]%
	}{%
	\end{proof}%
}
\newcommand{\Vbb}{\mathbb{V}}
\newcommand{\Wbb}{\mathbb{W}}
\newcommand{\Fcal}{\mathcal{F}}
\newcommand{\Ocal}{\mathcal{O}}
\newcommand{\Pcal}{\mathcal{P}}
\newcommand{\Vcal}{\mathcal{V}}
\newcommand{\Wcal}{\mathcal{W}}
\DeclareMathAlphabet{\mathdutchcal}{U}{dutchcal}{m}{n}
\newcommand{\ocal}{\mathbcal{o}}
\newcommand{\hx}{\hat{x}}
\newcommand\numberthis{\addtocounter{equation}{1}\tag{\theequation}}
\newcommand{\whiteqed}{\hfill$\square$\par\bigskip}
\newtheorem{assumption}
{Assumption}
\newtheorem{definition}{Definition}[section]
\newtheorem{remark}{Remark}
\newcommand{\expec}{\mathbb{E}}
\newcommand{\beq}{\begin{eqnarray*}}
\newcommand{\eeq}{\end{eqnarray*}}
\newcommand{\beqn}{\begin{eqnarray}}
\newcommand{\eeqn}{\end{eqnarray}}
\newcommand{\ben}{\begin{enumerate}}
\newcommand{\een}{\end{enumerate}}
\newcommand{\bit}{\begin{itemize}}
\newcommand{\eit}{\end{itemize}}
\newcommand{\hide}[1]{}
\newcommand{\Y}{\boldsymbol{Y}}
\newcommand{\eps}{\varepsilon}
\newcommand{\vertiii}[1]{{\left\vert\kern-0.25ex\left\vert\kern-0.25ex\left\vert #1 
    \right\vert\kern-0.25ex\right\vert\kern-0.25ex\right\vert}}
\renewcommand{\epsilon}{\eps}
\newtheorem{lemma}{Lemma}
\newtheorem{theorem}{Theorem}
\newtheorem{corollary}{Corollary}
\begin{document}
%
\title{Goggin's Corrected Kalman Filter: Guarantees and Filtering Regimes}
%
%
%
\author{Imon~Banerjee
        and~Itai~Gurvich%
\thanks{I. Banerjee is with the Department of Industrial Engineering and Management Sciences,
Northwestern University, Evanston, IL 60208 USA (e-mail: imon.banerjee@northwestern.edu).}%
\thanks{I. Gurvich is with the Kellogg School of Management, Northwestern University,
Evanston, IL 60208 USA (e-mail: i-gurvich@kellogg.northwestern.edu).}%
}

%
%

\markboth{Submitted to \textit{IEEE Transactions on Information Theory}}%
{Banerjee and Gurvich: Goggin's Corrected Kalman Filter: Guarantees and Filtering Regimes}

%



\maketitle

\begin{abstract} We revisit a nonlinear filter for linear state-space models with {\em non-Gaussian} signal and observation noise, introduced by Eimear Goggin in \cite{goggin1992convergence}. Goggin showed that applying the Kalman filter (KF) to score-transformed observations is asymptotically optimal in a specific signal-to-noise ratio (SNR) regime. We provide non-asymptotic convergence rates across a broad range of SNR regimes, recovering Goggin's setting as a special case, with bounds explicit in the SNR.

Our analysis combines two ingredients: a posterior Cram\'er--Rao lower bound for filtering and convergence-rate bounds in the Fisher information central limit theorem. We also map the parameters space into filtering regimes, identify degenerate regimes---where simple filters are nearly optimal---and isolate a {\em balanced} regime, where Goggin's filter has the most value.
\end{abstract}


%
\IEEEpeerreviewmaketitle

\section{Introduction} 
We study the basic filtering problem in a linear state-space model: estimate the unobserved state process $(X_t, t \in \mathbb{N})$ from noisy observations $(Y_t, t \in \mathbb{N})$:
\begin{equation}\begin{split}  
X_{t+1} & = \gamma X_t + w_t, ~t\in \{0\}\cup \mathbb{N},\\
Y_t& = X_t+ v_t,~t\in\mathbb{N},\end{split} \tag{Linear System}\label{eq:linearsystemintro}
\end{equation}
where $(w_t)$ and $(v_t)$ are i.i.d. zero-mean random variables. We normalize the units so that $\Var(w_t)=\Var(v_t)=1$. 

At each time $t$, an estimator $\hat{x}_t$ is a mapping from $(Y_1,\ldots,Y_t)$ to $\mathbb{R}$. When $w$ and $v$ are Gaussian, the sequential estimator produced by the Kalman filter (KF) is unbiased (that is, $\Ex[\hat{x}_t]=\Ex[X_t]$) and its mean squared error (MSE), $\mathbb{E}[(\hat{x}_t - X_t)^2]$, achieves the Cram\'er--Rao lower bound (CRLB), making it the minimum variance unbiased estimator (MVUE). If either $w$ or $v$ are non-Gaussian, the KF is no longer MVUE, although it remains the best possible unbiased linear filter (BLUE). 

A natural question is whether the appealing simplicity of the KF can be retained while achieving near minimality of variance in the non-Gaussian case. Goggin \cite{goggin1992convergence} answered this question in the following scaling  \footnote{\cite{goggin1992convergence} uses the observation process $\bar{Y}_t: = (1/N) Y_t= \frac{1}{N}X_t+  v_t/\sqrt{N}$. We equivalently multiply both sides by $N$.}  of (\ref{eq:linearsystemintro}) 
\begin{equation}\begin{split} 
    X_{t+1} & = \lp 1-\frac{1}{N} \rp X_t + \frac{1}{\sqrt{N}} w_t,~t\in \{0\}\cup \mathbb{N},\\
    Y_t& = X_t + \sqrt{N} v_t,~t\in\mathbb{N}.\end{split} \label{eq:goggin_regime}
\end{equation}

Let $h$ denote the density of $v$ and $\phi=-h'/h$ be the negative of the score function. \cite[Section IV]{goggin1992convergence} demonstrates that running the Kalman filter on the modified system
\begin{equation}
\begin{split}
    X_{t+1} & = \lp 1-\frac{1}{N} \rp X_t + \frac{1}{\sqrt{N}} w_t, ~t\in \{0\}\cup \mathbb{N},\\
    Z_t & = \sqrt{N}\phi\lp \frac{Y_t}{\sqrt{N}} \rp, ~t\in \mathbb{N}.\label{eq:goggin_original}
\end{split}
\end{equation}
yields an estimate of the state that is asymptotically optimal as $N\rightarrow \infty$. Since non-Gaussian systems are often estimated via heuristic methods like moment-matching \citep{clark_new_2006,shimizu_moment-based_2023}, the Goggin filter in \cref{eq:goggin_original} provides a simple, provably efficient alternative in this specific signal-to-noise ratio (SNR) regime; see discussion in Section \ref{sec:model}.

In this paper we
\begin{enumerate}
    \item 
     provide non-asymptotic (in $N$) performance guarantees for Goggin's filter. This is our \emph{pre-limit} counterpart to Goggin's asymptotic optimality results, and
\item expand the scaling in \cref{eq:goggin_regime} to a wide range of SNR regimes, and characterize precisely the settings where the Goggin filter \cref{eq:goggin_original} outperforms the Kalman filter.
    
    \end{enumerate}
To establish these results, we first expand Goggin's setting in \cref{eq:goggin_regime}. Specifically, we consider the optimal state estimation for the following generalization of \cref{eq:goggin_regime}:
\begin{equation}\begin{split} 
    X_{t+1} & = \lp 1-\frac{1}{N} \rp X_t + \frac{1}{\sqrt{N}} w_t,~t\in \{0\}\cup \mathbb{N},\\
    Y_t& = X_t + s_N v_t,~t\in\mathbb{N};\end{split} \label{eq:our_regime}
\end{equation} $s_N=\sqrt{N}$ is the special case in \cref{eq:goggin_regime}. Depending on the specific value of $s_N$, we run the Kalman filter on two versions of the the modified (possibly nonlinear) system (see Definitions \ref{def:goggin-filter} and \ref{def:centered-goggin-filter} for full details) \begin{equation}
\begin{split}
    X_{t+1} & = \lp 1-\frac{1}{N} \rp X_t + \frac{1}{\sqrt{N}} w_t, ~t\in \{0\}\cup \mathbb{N},\\
    Z_t & = s_N\phi\lp \frac{Y_t}{s_N} \rp, ~ t\in \mathbb{N}.
\end{split}\tag{Scaled System} \label{eq:goggin_filter}
\end{equation} The dynamics can be further generalized to have $ X_{t+1} = \lp 1-N^{-1} \rp X_t + g_N w_t, ~t\in \{0\}\cup \mathbb{N}$ for $g_N>0$ that is not necessarily $1/\sqrt{N}$. We will show in Section \ref{sec:model} that outside of the regime in which $(1-(1-N^{-1})^2)/g_N^2=\Theta(1)$---which automatically implies $g_N=\frac{1}{\sqrt{N}}(1+\ocal(1))$---simpler filters are nearly optimal. In other words, such generalizations do not yield more insights.

\subsection{Technical Contributions}\label{sec:technical-contrib}
Our main contribution is a theoretical demonstration that the MSE of Goggin's filter as applied to (\ref{eq:goggin_filter}) (see  Definition \ref{def:goggin-filter}) or its centered counterpart, (see Definition \ref{def:centered-goggin-filter}) approximates well the Cram\'er--Rao lower bound (Theorem \ref{thm:main}) in the pre-limit regime of $N<\infty$. Our analysis uses recently-developed tools such as convergence rates for Fisher information (Lemma \ref{lem:CLT_gn}), allowing us to derive the sub-optimality \textit{rate} of the proposed filter. We also characterize three filtering regimes based on the system's signal-to-noise ratio (SNR): high SNR, low SNR, and a balanced regime where the Goggin filter (GF) provably outperforms the Kalman filter (KF). 

\textbf{Filtering Regimes:} At the most basic level, our paper strengthens Goggin's \cite{goggin1992convergence} limit-theorem result by proving that the filter's steady-state convergence {\em rate} is $1/\sqrt{N}$ based on first principles; see Corollary \ref{cor:rateGoggin}. Yet, we go beyond Goggin's limit theorem to cover a wider range of parameter settings, or {\em regimes}. Goggin's analysis---based on a limit theorem for conditional expectations---focuses on the case in \cref{eq:goggin_regime}, where the best possible mean squared error, $MSE^{\star}$, is of the same order of magnitude of the variance of the (unobserved) process: $MSE^{\star}\approx \Ex[X_t^2]$. 

Our Lemmas \ref{lem:toonoisy}, \ref{lem:nonoise} show the sufficiency of trivial filters in the high and low SNR regimes, whereas Lemma \ref{lem:balanced} derives how much one can improve over the Kalman filter in the balanced SNR region. The intuitions for the former lemmas are clear: if the observation noise is small (large) enough, one should use the observation(expected observation) as the estimate of the state. But a direct proof via CRLB is insufficient, necessitating a simple Le Cam-type (minimaxity) analysis that, through a study of the posterior probabilities, shows that one cannot do (much) better than using the ``current'' observation alone. 

Having identified these extreme regimes, we move to the derivation of  pre-limit guarantees for the Goggin filter in the intermediate {\em balanced} SNR regime.

\textbf{Goggin's filter:} 
We study two versions of the filter that differ in the way the score $\phi$ is applied to the observation.  We refer to these as the {\em centered} and {\em non-centered} GF. For both, we prove a lower bound and a  matching upper bound.

The pre-limit view allows us to study a wide range of SNR values for these filters. Together with our mapping of regimes, our results show that GF, together with two trivial filters---taking the observation as the true state and taking the expected state as the true state---suffice for near-complete coverage of the spectrum of regimes.

A simple example conveys the intuition behind Goggin's filter. Fix $\expec[\cdot]$, $\Var(\cdot)$, and $I(\cdot)$ to be the canonical expectation, variance, and Fisher information, respectively, and consider the case where $X_0=x_0$ is deterministic and no process noise is present. This is the problem of estimating a (location) parameter $x_0$ from i.i.d. noisy observations $Y_t = x_0+s_N v_t, t\in\mathbb{N}$. It is a fact that the average 
$$\hat{x}^A_t= \frac{1}{t}\sum_{s=1}^{t}Y_s,$$
is unbiased but not MVUE unless $v_t$ is Gaussian (\cite{bickel2015mathematical}). Recall that $\Var(v_1)=1\geq I(v_1)^{-1}$ with equality holding if and only if $v_1$ is Gaussian.
In this case, $\hat{x}^A_t$ achieves the Cram\'er--Rao lower bound 
$$\Var(\hat{x}^A_t) = \frac{s_N^2}{t} = \frac{s_N^2}{t I(v_1)} = CRLB.$$ 
Now, let $v_1$ be \textit{non-Gaussian} and consider the estimator
$$\hat{x}_t = \frac{s_N}{tI(v_1)}\sum_{s=1}^t \phi(Y_s/s_N).$$
Since $Y_t$'s are i.i.d. $\hat x_t$ has expectation and variance given by 
$$\Ex[\hat{x}_t] = \frac{s_N}{I(v_1)}\Ex[\phi(Y_1/s_N)], ~~~\Var(\hat{x}_t) = \frac{s_N^2}{tI^2(v_1)}\Var(\phi(Y_1/s_N)).$$ 
Informally applying a Taylor expansion to $\phi(Y_1/s_N)$ we have 
$$\phi(Y_1/s_N) = \phi(  v_1+x_0/s_N) \approx \phi(  v_1) +\phi'(v_1)x_0/s_N.$$ 
Because $(-\phi)$ is the score function of $v$, $\Ex[\phi(v_1)]=0, \Ex[\phi'(v_1)] = \Var(\phi(v_1))= I(v_1)$ so that (ignoring higher order $1/s_N^2$ terms)
$$ \Ex[\hat{x}_t]\approx x_0, \ \Var(\hat{x}_t) \approx \frac{s_N^2}{tI(v_1)}=CRLB.$$ 
Hence, this estimator {\em approximately} achieves the CRLB with a bias and variance suboptimality  that depend on the remainder in the first-order Taylor expansion.

Goggin's analysis (\cite[Theorem 2]{goggin1992convergence}) is grounded in a change of measure (likelihood) view followed by convergence of conditional expectations. To go beyond limit theorems our proofs build a bridge between two literature streams---CRLB for sequential estimation and Fisher Information CLT---to produce a useful lower bound on the best possible stationary MSE (Theorem \ref{thm:CRLB}). The Fisher information CLT justifies  aggregating the signal noise across periods. Plugged into the CRLB for sequential estimation (see \cref{eq:tichavsky}), the aggregation produces what we show to be a suitably tight lower bound. 

The observation noise cannot be aggregated without significant compromise to this tightness. This is because in the ``balanced'' regime, where the Goggin filter outperforms KF (Lemma \ref{lem:KF}), the CLT error is too big for the observation noise but ``just right'' for the signal noise.



\section{Literature review\label{sec:lit}} 

We consider partially observed linear systems with Gaussian or non-Gaussian signal and observation noise. When both noises are Gaussian, the Kalman filter \cite{kalman1960new} is optimal, minimizing the mean squared error at each time $t$. 
With non-Gaussian signal or observation noise, the KF is suboptimal and the MSE-minimizing filter can be substantially more complex. A vast literature explores strategies to handle non-linearity or non-Gaussianity, offering a wide range of heuristic approaches. For background and general treatments, see, e.g., \cite{sarkka2023bayesian,krishnamurthy2016partially}. 
Our focus is on the non-Gaussianity of the signal and observation noises; see \cite{Kitagawa1987} for early work that emphasizes non-Gaussianity noise. Closest to our work is the literature on asymptotically optimal filters.

In \cite{liptser1996robust} the authors consider the filtering of a continuous space and time diffusion process observed at discrete times. They construct a filter by introducing a transformation to the observations \`a la Goggin, and proceed to establish the asymptotic optimality of the filter as the time between observations and the observation variance shrink; \cite{Bobrovsky2001} follows up on \cite{goggin1992convergence} and considers filtering of a continuous diffusion process. Using a suitable CRLB, the authors show that the Goggin filter is asymptotically optimal. Their transformation of the observation, in contrast with Goggin's, is a centered transformation; the paper \cite{Liptser1998}  compares numerically the centered and non-centered transformations. They refer to the transformation as an application of a ``limiter''; see also \cite[Chapter 20]{Lipster1997}. 

Importantly, because the signal process
in \cite{Bobrovsky2001,Liptser1998}
is a diffusion process, the corresponding discrete-time model is one with $w_t$ (in our notation) being Gaussian. We work directly with the pre-limit discrete-time model and prove that applying a ``fake'' KF to \`a-la-Goggin transformed data produces a nearly optimal estimator for a wide spectrum of observation-noise levels (as captured in the model by $s_N$), where the optimality gap is explicitly quantified. 
 
  Instead of being based on functional central limit theorem and convergence of conditional expectations, our arguments are based on first principles that are available in the literature: (i) a Cram\'er--Rao lower bound for filtering, and (ii) convergence rate results for the Fisher information as laid developed in \cite{johnson2004information}. 

To apply the Fisher-information CLT, we adopt the assumption from \cite{johnson2004information} that the random variables in the random walk (in our case the signal noises) have a finite restricted Poincar\'e constant (Assumption \ref{asum:primitives-driving}). This condition has been recently relaxed \cite{bobkov_fisher_2014}. 
\vspace*{0.2cm}

\noindent {\bf Notation and organization.} A random variable $X$ is defined with respect to a usual filtered probability space $(\Pcal,\Fcal)$ with $\expec[X]$, $\Var[X]$, and $Cov(\cdot)$ denoting the canonical expectation, variance, and covariance. By $I(X)$, we denote the Fisher information (or  simply information) of the random variable $X$. If $\{X_1,X_2,\dots\}$ is a sequence of identically distributed random variables, $\expec [X], \Var(X)$, and $I(X)$ shall denote its common mean, variance, and information respectively. To simplify notation we sometimes write $\Sigma(X) := Var(X)$ and $\Sigma^{-1}(X) = \frac{1}{Var(X)}.$ In (\ref{eq:linearsystemintro}) the observation $Y_k$ belongs to a location family with the location being the random parameter $X_k$. We repeatedly use the facts (see, for instance, \cite{barron_entropy_1986}) that $I(aX)=I(X)/a^2$ when $X$ belongs to a location family, and $\Var(X)I(X)\geq 1$ with equality holding if and only if $X$ is Gaussian (which forms the basis of our CLT arguments). 

We indicate asymptotic order
following the usual Bachmann--Landau notation for   $\Ocal(\cdot),\Theta(\cdot), \ocal(\cdot)$ \cite{cormen_introduction_2022}. For non-negative functions $f,g$, we sometimes write $f(N)\ll g(N)$ for $f(N)=\ocal(g(N))$ and $f(N)\lesssim g(N)$ for $f(N) = \Ocal(g(N))$. 

For any integer $\tau$, we define $[\tau]:=\{1,\dots,\tau\}$ so that $\sum_{s=0}^{\tau-1}$ can be written $\sum_{s\in [\tau]}$. Finally, we introduce $\gamma:=(1-N^{-1})$ for convenience. 

The remainder of the paper is divided into three main parts: sections \ref{sec:model} and \ref{sec:GF} contain the main results: Section \ref{sec:model} identifies ``degenerate regimes'' and trivial filters, and Section \ref{sec:GF} states the convergence-rate results for the Goggin filter. Sections \ref{sec:lowerbound}-\ref{sec:GFcenteredUB} contain the main proofs, while the appendix collects proofs of auxiliary lemmas.  

\section{Filtering regimes\label{sec:model}} 

Given the distributions of $w_t,v_t$, the family (in $N$) of dynamical models in \eqref{eq:our_regime} is fully specified by the sequence of parameters $\gamma=\gamma_N$ and $s_N$. 
We partition the space of possible parameters into filtering regimes based on the relationship between $N$ and $s_N$. We fix $\gamma_N=(1-1/N)$ to normalize the process, i.e., so that $\Var[X_t]=\Theta(1)$ in stationarity. 

The best possible mean squared error, $MSE^{\star}$ is proportional to $\min\{s_N/\sqrt{N},1\};$ see Theorem \ref{thm:main}. This characterization gives a natural partition into SNR regimes. 

One regime is where the observation noise is (much) bigger than the signal noise
\[ 
s_N\gg \sqrt{N};\tag{Negligible SNR}
\] equivalently $s_N/\sqrt{N}\gg 1$. Intuitively, when the noise is excessively large, the observations are rendered uninformative and the trivial estimator $\hat{x}_t=\Ex[X_t]$ is nearly optimal (Lemma \ref{lem:toonoisy}). The proof require care as the CRLB is loose here; see Section \ref{sec:prf-regimelemmas} for full details.

This looseness of the CRLB extends to the regime where the observation noise is very small, 
\[ 
s_N\ll \frac{1}{\sqrt{N}};\tag{Large SNR}
\] equivalently $MSE^{\star}\ll 1/N$ which is the variance of the driving-noise. Intuitively, with small noise the observation is as good as any other estimator. Formally, (see Lemma \ref{lem:nonoise}) in this regime, taking the actual observation as the estimator---$\hat{x}_t=Y_t$ is nearly optimal. 

Goggin's setting is on the upper boundary of, what we call, the balanced signal-to-noise ratio (SNR) regime: 
\begin{equation}
\frac{1}{\sqrt{N}}\lesssim s_N\lesssim \sqrt{N}. \tag{Balanced SNR}
\end{equation} in \cite{goggin1992convergence} and related work (see Section \ref{sec:lit}) $s_N\approx \sqrt{N}$.  

In the next section (Theorems \ref{thm:main}  and \ref{thm:maincentered}) we show that for most of the balanced regime---not only on its upper boundary---the sub-optimality induced of the GF is order-of-magnitude smaller than the best achievable MSE.

Left out of our analysis is only the subset of the balanced regime where the SNR is small but not negligible, namely  
\begin{equation}
\frac{1}{\sqrt{N}}\lesssim s_N\lesssim 1.\tag{Low SNR} 
\end{equation} 

In this window, our bounds for the GF are no better than those for the KF. Our analysis suggests that this is because, in this regime, the first-order Taylor series approximation---which is central to the GF arguments---is too loose. 

Let $MSE^{\star}$ be the best possible steady-state MSE across all estimators $\hat{x}_t=\pi({\bf Y}_t)$ where ${\bf Y}_t=(Y_1,\ldots,Y_t)$. That is, 
\begin{equation} \label{eq:MSEdefin} MSE^{\star} =  \inf_{\pi}\liminf_{t\uparrow \infty}\Ex[(\pi({\bf Y}_t)-X_t)^2].\end{equation}

\noindent Our first order of business is to characterize the degenerate regimes: Negligible and Large SNR. 

To introduce our key assumptions, we must define the restricted Poincar\'e constant of the random variable $w$: 
\begin{equation} \label{eq:poincaredefin} R^{\star}(w) = \sup_{g\in H_1^*(w)}\frac{\Ex[g^2(w )]}{\Ex[(g'(w))^2]},\end{equation} 
where 
$H_1^*(w)$ is the space of absolutely continuous functions $g$ such that $\Var(g(w))>0$, $\Ex[g(w)]=0$, $\Ex[g^2(w)]<\infty$ and $\Ex[g'(w)]=0$. For Gaussian $w$ this constant is precisely $1/2$. 

\begin{assumption}[signal-noise distribution] \label{asum:primitives-driving} The signal-noise random variable $w$ has a continuously differentiable density on the real line and a finite restricted Poincar\'e constant: $R^{\star}(w)<\infty$. 
\end{assumption} 

\begin{assumption}[observation-noise distribution] The variable $v$ has a finite fourth moment. Its density $h$ is four-times continuously differentiable with $h>0$ on the real line. Also, $\phi=-h'/h$ is integrable, in which case, $\Ex[\phi(v)]=0$ and $\Var(\phi(v))=\Ex[\phi'(v)]=I(v)$. Finally, we assume that  $\|\phi''\|_{\infty}<\infty$. \label{asum:primitives-observation}\end{assumption} 

\noindent {\bf Assumptions \ref{asum:primitives-driving} and \ref{asum:primitives-observation} are imposed throughout.} We refer to these as our {\em basic assumptions}.

\begin{lemma}[Negligible SNR] If $s_N \gg \sqrt{N}$ then $$MSE^{\star} \geq 1/2 -\ocal(1).$$ The lower bound is achieved by the trivial estimator $\hat{x}_t^1 = \Ex[X_t]$.   
\label{lem:toonoisy}
  \end{lemma}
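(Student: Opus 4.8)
The plan is to split the claim into an easy achievability half and a more delicate lower bound. For achievability, the constant estimator $\hat{x}_t^1=\Ex[X_t]$ has mean squared error exactly $\Var(X_t)$, and solving the stationary Lyapunov equation $\Var(X_t)=\gamma^2\Var(X_t)+1/N$ with $\gamma=1-1/N$ gives $\Var(X_t)=\frac{1}{2-1/N}\to\tfrac12$; hence $MSE^{\star}\le\tfrac12+\ocal(1)$ and the bound, once proved, is attained. It remains to establish the matching lower bound $MSE^{\star}\ge\tfrac12-\ocal(1)$.

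For the lower bound I would first pass to the Bayes-optimal estimator: for each $t$ the minimizer of $\Ex[(\pi({\bf Y}_t)-X_t)^2]$ is the posterior mean, so $MSE^{\star}=\liminf_{t}\Ex[\Var(X_t\mid{\bf Y}_t)]$. By the law of total variance,
\[ \Ex[\Var(X_t\mid{\bf Y}_t)]=\Var(X_t)-\Var\big(\Ex[X_t\mid{\bf Y}_t]\big), \]
so since $\Var(X_t)\to\tfrac12$ it suffices to show the posterior mean barely moves, i.e. $\Var(\Ex[X_t\mid{\bf Y}_t])=\ocal(1)$ uniformly in $t$. This is precisely where a direct CRLB/Van Trees bound is lossy: it would control $MSE^{\star}$ by $1/(I(X_t)+J_{\mathrm{data}})$, and because the stationary law of $X_t$ is non-Gaussian one has $I(X_t)>1/\Var(X_t)=2$, pushing the bound strictly below $\tfrac12$. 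Tracking the variance of the posterior mean directly avoids replacing $\Var(X_t)$ by its Fisher-information surrogate.

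To bound $\Var(\Ex[X_t\mid{\bf Y}_t])$ I would work with the posterior-to-prior ratio $r(x,{\bf y})=p(x\mid{\bf y})/p(x)=p({\bf y}\mid X_t=x)/p({\bf y})$, which satisfies $\Ex_X[r(X,{\bf y})]=1$. Writing $\Ex[X_t\mid{\bf Y}_t={\bf y}]-\Ex[X_t]=\Cov_X\!\big(X,r(X,{\bf y})\big)$ and applying Cauchy--Schwarz in $X$ yields
\[ \Var\big(\Ex[X_t\mid{\bf Y}_t]\big)\le\Var(X_t)\cdot\Ex_{{\bf Y}_t}\!\big[\Var_X(r(X,{\bf Y}_t))\big]=\Var(X_t)\cdot\chi^2\big(P_{X_t,{\bf Y}_t}\,\|\,P_{X_t}\otimes P_{{\bf Y}_t}\big), \]
so the variance reduction is controlled by the $\chi^2$-mutual information between the state and the observation path, and the proof reduces to showing this is $\ocal(1)$. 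This is the ``study of the posterior probabilities'' in a Bayesian (Le Cam--flavored) spirit: the observations simply cannot pull the posterior away from the prior.

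The remaining---and main---obstacle is to show $\chi^2(P_{X_t,{\bf Y}_t}\|P_{X_t}\otimes P_{{\bf Y}_t})=\ocal(1)$ despite the temporal coupling of the observations. The mechanism is that $Y_s=X_s+s_Nv_s$ is extremely noisy, so the log-likelihood is nearly flat in the state, $\partial_x\log p(y_s\mid x_s)=s_N^{-1}\phi((y_s-x_s)/s_N)=\Ocal(s_N^{-1})$, and a single observation carries Fisher information only $I(v)/s_N^2$ about its state. Since $\gamma=1-1/N$ the chain decorrelates over a window of length $\Theta(N)$, so the information the whole path carries about $X_t$ accumulates to $\Theta\big(N\,I(v)/s_N^2\big)=\Theta(N/s_N^2)$, which vanishes precisely because $s_N\gg\sqrt N$. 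I expect the hard part to be making this accumulation rigorous: one must bound the fluctuation of the likelihood ratio $r$ for the entire vector ${\bf Y}_t$ while the states $X_1,\dots,X_t$ are dependent, and control tails so the $\chi^2$ integral is finite---this is where the assumptions that $v$ has a positive density, a finite fourth moment, and $\|\phi''\|_{\infty}<\infty$ (Assumption \ref{asum:primitives-observation}) enter, e.g. through a second-order Taylor expansion of $\log r$ together with the geometric weights $\gamma^{t-s}$ that damp distant observations.
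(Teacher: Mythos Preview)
Your achievability half and the total-variance decomposition are correct, and your diagnosis of why a one-shot Van Trees bound is loose is on target. But your route to the lower bound is quite different from the paper's, and the step you flag as ``the hard part'' hides a genuine difficulty that your Fisher-information heuristic does not resolve.

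The paper does \emph{not} abandon the CRLB. Instead it repairs the looseness by \emph{batching}: it looks at the state at times $k\tau$, aggregates the signal noise into $\calW_k=\tfrac{1}{\sqrt N}\sum_{s\in[\tau]}\gamma^{s-1}w_{\tau k-s}$, and applies a Fisher-information CLT (Lemma~\ref{lem:CLT_gn}) to get $I(\calW)\Sigma(\calW)=1+\Ocal(1/\tau)$. This lets it replace $I(\calW)$ by $\Sigma^{-1}(\calW)$ in the Tichavsky recursion (Theorem~\ref{thm:CRLB}); with $\tau=N$ and $s_N\gg\sqrt N$ one computes $\bar J_\infty=2+\ocal(1)$, hence $MSE^\star\ge 1/\bar J_\infty-\Ocal(1/N)=\tfrac12-\ocal(1)$. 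The looseness is cured because the \emph{aggregated} signal noise is approximately Gaussian, so $I(\calW)\approx 1/\Var(\calW)$ even though $I(w)>1$.

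Your $\chi^2$-route has a real gap. The heuristic you invoke---each $Y_s$ carries Fisher information $I(v)/s_N^2$, and $\Theta(N)$ of them are relevant, so the total is $\Theta(N/s_N^2)=\ocal(1)$---is precisely the data-information term $J_{\mathrm{data}}$ in the Van Trees bound you already rejected as lossy; it does not by itself control $\chi^2(P_{X_t,{\bf Y}_t}\|P_{X_t}\otimes P_{{\bf Y}_t})$. Fisher information is local, whereas $\chi^2$ is a second moment of the full likelihood ratio $r(x,{\bf y})$, and $\chi^2$ tensorizes multiplicatively rather than additively: with $t\to\infty$ observations, a product of factors each $1+\epsilon_s$ need not remain $1+\ocal(1)$ even when $\sum_s\epsilon_s=\ocal(1)$. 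To make your argument work you would need $\Ex[(r-1)^2]=\ocal(1)$ \emph{uniformly in $t$}, which requires global tail control of $r$; Assumption~\ref{asum:primitives-observation} (finite fourth moment, $\|\phi''\|_\infty<\infty$) is designed for Taylor remainders in the score, not for second moments of likelihood ratios, and there is no indication it suffices. Completing your route would likely demand either substantially stronger tail hypotheses on $h$ or a delicate conditional-$\chi^2$ chain-rule argument that you have not outlined.
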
 
  
\noindent All lemma proofs appear in the appendix. 
\begin{remark} It is known \cite[Theorem 2]{borovkov1984inequality} that $R^{\star}(w)<\infty$ implies the existence of all moments for $w$. This strong requirement allows us to use the Fisher-information CLT for independent but {\bf not} identically-distributed random variables in   \cite{Johnson2004,johnson2004information}, as quoted in our Lemma \ref{prop:johnson_inforbd}.  
The requirement of a finite Poincar\'e constant is relaxed in \cite{Bobkov2014,Johnson2020}, for the case of i.i.d. random variables, but to the best of our knowledge, no equivalent relaxation exists for the non-i.i.d. case. Any such relaxation would immediately apply to our own results. \hfill \bsq \vspace*{0.2cm}
\end{remark}

Lemma \ref{lem:toonoisy} is conceptually intuitive: what better can one do if the noise is too big than take the expectation as the estimate. But proving this is not as straightforward as showing that the CRLB is asymptotically achieved. This is because the CRLB turns out to be too loose. The proof relies on a CLT for the Fisher information that allows us to replace the information of the non-Gaussian $w$ with the reciprocal of its variance. It is a tool we use again in section \ref{sec:lowerbound}. 

\begin{lemma}[Large SNR] Suppose that $s_N=\ocal(1/\sqrt{N})$ and that, in addition to the basic assumptions, the density of ${w}$ is strictly positive and Lipschitz continuous on the real line. Then, $$MSE^{\star} \geq s_N^2 - \ocal(s_N^2)$$ 
with the lower bound achieved by the trivial estimator $\hat{x}_t = Y_t$.
\label{lem:nonoise}   
\end{lemma}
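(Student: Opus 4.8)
The statement has two parts, and only one of them carries content. Achievability is immediate: since $Y_t = X_t + s_N v_t$ with $\Ex[v_t^2]=\Var(v_t)=1$, the trivial estimator $\hat{x}_t = Y_t$ has $\Ex[(Y_t-X_t)^2] = s_N^2$ for every $t$, so $MSE^{\star}\le s_N^2$. The whole problem is therefore the matching lower bound $MSE^{\star}\ge s_N^2(1-\ocal(1))$. As the surrounding text warns, a direct CRLB argument is too loose here: treating $X_t$ as a location parameter observed through $Y_t$ gives Fisher information $I(v)/s_N^2$ and hence the bound $s_N^2/I(v)$, which is strictly below $s_N^2$ whenever $v$ is non-Gaussian. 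That spurious factor $I(v)$ encodes efficiency attainable only in a large-sample limit; with a single informative observation it is not attainable, and a posterior-level argument is what captures this.

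The plan is a genie (enlarged-information) reduction followed by a direct study of the posterior law of $X_t$. Since the posterior mean minimizes conditional MSE, $\inf_\pi\Ex[(\pi({\bf Y}_t)-X_t)^2]=\Ex[\Var(X_t\mid{\bf Y}_t)]$, so that $MSE^{\star}\ge\liminf_t\Ex[\Var(X_t\mid{\bf Y}_t)]$. Now enlarge the conditioning $\sigma$-field to $\mathcal{G}_t := \sigma({\bf Y}_t,\{X_s\}_{s\ne t})$. Conditioning on more information can only decrease the expected conditional variance, so $\Ex[\Var(X_t\mid \mathcal{G}_t)]\le \Ex[\Var(X_t\mid {\bf Y}_t)]$ and it suffices to bound the left-hand side from below. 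By the Markov property $X_t$ depends on $\{X_s\}_{s\ne t}$ only through the neighbors $X_{t-1},X_{t+1}$, and given all states the past observations $\{Y_s\}_{s<t}$ are conditionally independent of $X_t$; hence $\Var(X_t\mid\mathcal{G}_t)=\Var(X_t\mid X_{t-1},X_{t+1},Y_t)$. It is essential that the genie receives the neighboring \emph{states} but not the driving noises $w_{t-1},w_t$: otherwise $X_t=\gamma X_{t-1}+N^{-1/2}w_{t-1}$ would be pinned down exactly and the bound would collapse to zero.

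Let $f$ be the density of $w$. Conditioning on $X_{t-1}=a$, $X_{t+1}=b$, $Y_t=y$, the posterior density of $X_t$ at $x$ is proportional to
\begin{equation*}
\underbrace{f\!\big(\sqrt{N}(x-\gamma a)\big)\, f\!\big(\sqrt{N}(b-\gamma x)\big)}_{=:\,g(x)}\;\cdot\;\frac{1}{s_N}\,h\!\Big(\frac{y-x}{s_N}\Big).
\end{equation*}
The prior factor $g$ varies on the scale $1/\sqrt{N}$, while the observation likelihood is peaked on the much finer scale $s_N=\ocal(1/\sqrt{N})$. Substituting $u=(y-x)/s_N$, the posterior law of $u$ has density proportional to $g(y-s_N u)\,h(u)$, a reweighting of the law of $v$. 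Because $\sqrt{N}\,s_N\to 0$, the factor $g(y-s_N u)$ is nearly constant over the region where $h$ places its mass, so this law converges to the law of $v$ and $\Var(X_t\mid X_{t-1},X_{t+1},Y_t)=s_N^2\,\Var(u\mid\cdots)\to s_N^2\,\Var(v)=s_N^2$.

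The main obstacle is turning this heuristic into the rigorous lower bound $\Ex[\Var(X_t\mid X_{t-1},X_{t+1},Y_t)]\ge s_N^2(1-\ocal(1))$, i.e.\ showing the reweighting by $g$ shrinks the variance by at most an $\ocal(1)$ fraction, in expectation over the data. On the bulk $\{|u|\le M\}$ I would expand to first order: $\log g(y-s_N u)-\log g(y)\approx -s_N u\,(g'/g)(y)$, where $(g'/g)(y)=\sqrt{N}\,[(f'/f)(\sqrt{N}(y-\gamma a))-\gamma(f'/f)(\sqrt{N}(b-\gamma y))]$ involves the score of $w$; thus $s_N(g'/g)(y)=\Ocal(\sqrt{N}s_N)\times(\text{score of }w)$, with $\sqrt{N}s_N=\ocal(1)$. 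Taking expectations and using $\Ex[(f'/f)^2(w)]=I(w)<\infty$ (finite by Assumption \ref{asum:primitives-driving}) together with the finite third and fourth moments of $v$ shows the induced change in the mean and second moment of $u$ is $\ocal(1)$. The tail $\{|u|>M\}$, where $g$ may be small and the expansion fails, is controlled by truncation, using the moment bounds on $v$ and the strict positivity, Lipschitz continuity, and integrability of $f$ (precisely the extra hypotheses of this lemma) to bound the reweighting ratio. Finally, passing from the fixed-$t$ inequality to the $\liminf$ in \eqref{eq:MSEdefin} is handled by evaluating in stationarity, where the genie risk is constant in $t$ and equals $s_N^2(1-\ocal(1))$.
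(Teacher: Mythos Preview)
Your approach is correct and is a genuinely different genie from the paper's. The paper conditions only on the past: it takes the predictive law $p(X_t\mid Y_1,\ldots,Y_{t-1})$ as the Bayesian prior, observes that this law is a convolution with the density of $\frac{1}{\sqrt N}w_{t-1}$ and therefore inherits strict positivity and Lipschitz continuity from $w$, and then invokes a standalone single-observation lemma (Lemma~\ref{lem:lecamapp}) asserting that with a Lipschitz prior and observation noise of scale $\sigma$, no estimator beats $\sigma^2(1-\ocal(1))$. You instead condition on the two neighboring \emph{states} $X_{t-1},X_{t+1}$, which gives the explicit two-factor prior $g(x)=f(\sqrt N(x-\gamma a))f(\sqrt N(b-\gamma x))$ and lets you run a score-based expansion of $\log g$. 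Both routes encode the same mechanism---the prior varies on scale $1/\sqrt N$ while the likelihood is concentrated on scale $s_N\ll 1/\sqrt N$, so the prior is effectively flat---but yours makes the $\sqrt N$-scaling manifest and ties the remainder directly to $I(w)<\infty$, at the cost of needing to control the second-order term in the $\log g$ expansion and the tail truncation you sketch. The paper's route is more modular (the single-observation bound is a reusable lemma) and uses only causal information, but one must be careful that the Lipschitz constant and the value $\pi(0)$ of the predictive prior are $N$-dependent; a clean way to handle that is to rescale by $\sqrt N$ before applying Lemma~\ref{lem:lecamapp}, so that the prior's regularity constants come from $w$ itself rather than from $\frac{1}{\sqrt N}w$.
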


\begin{remark}
    The requirement in Lemma \ref{lem:nonoise} that $w$ has a Lipschitz continuous density simplifies the analysis. We do not prove that it is necessary. \hfill \bsq \vspace*{0.2cm}  
\end{remark}
The intuition is, again, clear: if the observation noise is small enough, one should use the observation as the estimate of the state. The proof is not direct because the CRLB is too loose in this setting. Instead, the proof relies on a Le Cam--type analysis that, through a study of the posterior probabilities, shows that one cannot do better than using the ``current'' observation alone. 

In the intermediate (balanced) regime, though strictly sub-optimal (see Lemma \ref{lem:KF}), the Kalman filter improves on the trivial filters in Lemmas \ref{lem:toonoisy} and \ref{lem:nonoise}.

\begin{lemma}[Balanced Regime] Suppose that $\frac{1}{\sqrt{N}}\lesssim s_N\lesssim \sqrt{N}$. Then both estimators in Lemmas \ref{lem:toonoisy} and \ref{lem:nonoise} are strictly sub-optimal: 
$$MSE\pow 1 - MSE^{KF}  = \Omega  (MSE\pow 1), ~~ MSE\pow 2- MSE^{KF}=\Omega(MSE\pow 2),$$
where $MSE^{KF}$ is the steady-state MSE of the Kalman filter, and $MSE\pow 1$ and $MSE\pow 2$ are the steady-state MSE of the trivial filters in Lemmas \ref{lem:toonoisy} and \ref{lem:nonoise} respectively. 

\label{lem:balanced} 
\end{lemma}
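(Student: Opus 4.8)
The plan is to reduce the whole statement to the closed-form steady-state error of the Kalman filter together with two monotonicity facts, and then to extract the strict constant-factor improvement at the two \emph{endpoints} of the balanced window. First I would write down the Kalman recursion for \eqref{eq:our_regime}, recalling that the (linear, ``fake'') KF uses only the second moments of the noises: process-noise variance $Q=1/N$ and observation-noise variance $R=s_N^2$. Letting $p$ denote the steady-state filtered error covariance, so that $MSE^{KF}=p$, the algebraic Riccati equation reads
\begin{equation}
\gamma^2 p^2 + \bigl(Q + (1-\gamma^2)R\bigr)\,p - QR = 0,
\label{eq:riccati-plan}
\end{equation}
and $p$ is its unique positive root. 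I would also record the two benchmark values $MSE\pow{1} = \Var(X_t) = Q/(1-\gamma^2) = 1/(2-1/N)$ and $MSE\pow{2} = R = s_N^2$, and the elementary domination $p \le \min\{MSE\pow{1},\,MSE\pow{2}\}$, which holds because the KF is BLUE while both trivial filters are affine.

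Next I would establish two monotonicity properties of the root $p=p(R)$ of \eqref{eq:riccati-plan} by implicit differentiation. Using $p\le \Var(X_t)=Q/(1-\gamma^2)$ one gets $dp/dR\ge 0$, so $p$ is nondecreasing in $R$; using $p\le R$ one gets $\tfrac{d}{dR}(p/R)\le 0$, so $p/R$ is nonincreasing in $R$. (Both numerators collapse cleanly after substituting $\gamma^2 p^2 = QR-(Q+(1-\gamma^2)R)p$ from \eqref{eq:riccati-plan}.) These two facts reduce the claim over the whole window $c_1/\sqrt N \le s_N\le c_2\sqrt N$ to its endpoints: the absolute gap $MSE\pow{1} - p$ is smallest at the \emph{largest} admissible $R=c_2^2 N$, while the relative gap $1 - p/MSE\pow{2}$ is smallest at the \emph{smallest} admissible $R = c_1^2/N$.

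I would then evaluate \eqref{eq:riccati-plan} at each endpoint as $N\to\infty$. At $R=c_2^2 N$ the quadratic limits to $p^2 + 2c_2^2 p - c_2^2=0$, whose positive root $p_\infty = c_2^2\bigl(\sqrt{1+c_2^{-2}}-1\bigr)$ satisfies $p_\infty<\tfrac12$ by the strict inequality $\sqrt{1+x}<1+x/2$; hence $MSE\pow{1} - MSE^{KF}\ge \tfrac12 - p_\infty - \ocal(1) = \Omega(1)=\Omega(MSE\pow{1})$. At $R=c_1^2/N$, writing $q=Np$, the quadratic limits to $q^2+q-c_1^2=0$ with root $q_\infty = (\sqrt{1+4c_1^2}-1)/2$, so $MSE^{KF}/MSE\pow{2}\to q_\infty/c_1^2 = (\sqrt{1+4c_1^2}-1)/(2c_1^2) < 1$, again by $\sqrt{1+x}<1+x/2$; hence $MSE\pow{2} - MSE^{KF} = \Omega(MSE\pow{2})$.

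The main obstacle is that the crude order-of-magnitude estimate $MSE^{KF}=\Theta(s_N/\sqrt N)$ does \emph{not} suffice. At the upper boundary $s_N\asymp\sqrt N$ one has $MSE^{KF}\to$ a constant close to $MSE\pow{1}=\tfrac12$, and at the lower boundary $s_N\asymp 1/\sqrt N$ one has $MSE^{KF}\asymp MSE\pow{2}$; in both cases the two quantities being compared are of the same order, so the strict constant-factor improvement cannot be seen from the scaling alone and must be read off the exact Riccati root. The argument therefore hinges on the sharp elementary bound $\sqrt{1+x}<1+x/2$ (which makes each endpoint inequality strict for \emph{every} $c_1,c_2>0$) combined with the monotonicity reduction that pins the worst case of each of the two claims to a single endpoint of the balanced window.
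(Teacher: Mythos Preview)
Your argument is correct. The Riccati identity, the two monotonicity facts (with the substitution you indicate, the numerators are exactly $Q-(1-\gamma^2)p\ge 0$ and $Q(p-R)\le 0$), and the endpoint limits all check out; in particular $p_\infty=c_2^2(\sqrt{1+c_2^{-2}}-1)=1/(1+\sqrt{1+c_2^{-2}})<1/2$ and $q_\infty/c_1^2=2/(1+\sqrt{1+4c_1^2})<1$, so the strict constants survive for every $c_1,c_2>0$.

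The paper takes a different and somewhat shorter route: instead of reducing to endpoints via monotonicity, it works with the closed-form information
\[
J=\tfrac12\Bigl(\cR+(1-\gamma^2)\cQ+\sqrt{(\cR+(1-\gamma^2)\cQ)^2+4\gamma^2\cQ\cR}\Bigr),\qquad \cQ=N,\ \cR=1/s_N^2,
\]
and simply lower-bounds the two ratios $J/((1-\gamma^2)\cQ)$ and $J/\cR$ directly: each is at least $\tfrac12(1+\sqrt{1+\Omega(1)})=1+\Omega(1)$, using only $s_N\lesssim\sqrt{N}$ (so $\cR\cQ/((1-\gamma^2)^2\cQ^2)=\Omega(1)$) and $s_N\gtrsim 1/\sqrt{N}$ (so $\cQ/\cR=\Omega(1)$), respectively. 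That yields $P\le(1-\epsilon)\Var(X_t)$ and $P\le(1-\epsilon)s_N^2$ in one shot, without any endpoint analysis. Your monotonicity reduction is more structural---it explains \emph{why} the worst case for each inequality sits at one boundary of the balanced window---while the paper's direct square-root bound is more compact and avoids differentiating the Riccati root altogether.
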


\section{The GF convergence rate in the balanced regime\label{sec:GF}} 

Recall (Assumption \ref{asum:primitives-observation}) the negative of the score function $\phi = -h'/h$ which satisfies 
$$\Ex[\phi(v)]=0, \mbox{ and } \Var(\phi(v))=I(v),$$ where $I(v)$ is the Fisher information of the distribution $h$ (sometimes referred to as the location information). We similarly $I(w)$ for the information of $w$. 

\textbf{Goggin filter:} The well-known Kalman filter produces its state estimator, $\hat{x}_t$, via the recursion \begin{equation}\tag{Kalman Filter} \hat{x}_t = \gamma\hat{x}_{t-1} + K_t(Y_t-\gamma\hat{x}_{t-1}),\end{equation} where $K_t$ is the so-called Kalman gain. 

To take care of non-Gaussianity, the setup for Goggin filter starts with the nonlinear transformation of the observations $Y_t$ and we consider two such transformations. 

The first, consistent with the original filter of Goggin, has the update 
\begin{equation}\tag{GF} \hat{x}_t = \gamma\hat{x}_{t-1} + K_t\left(s_N\phi\left(\frac{Y_t}{s_N}\right)-I(v)\gamma \hat{x}_{t-1}\right); \end{equation} 
see Definition \ref{def:goggin-filter}. This transformation is applied to the observation {\em pre-centering}. In the next variant, the transformation is applied to the centered observation.  
\begin{equation}\tag{Centered GF} \hat{x}_t = \gamma\hat{x}_{t-1} + K_t s_N\phi\left(\frac{Y_t-\gamma\hat{x}_{t-1}}{s_N}\right);\end{equation} 
see Definition \ref{def:centered-goggin-filter}. If $v$ is Gaussian, both version of the GF reduce to the Kalman filter. As in Goggin's \cite{goggin1992convergence} analysis, we transform the {\em observation} noise but do not modify the filter in any way that accounts for the non-Gaussianity of the signal-noise sequence $(w_t)$. For that noise, an aggregation effect, we will prove, ensures that the approximation gap introduced by ignoring this non-Gaussianity remains suitably small.

\begin{definition}[Goggin Filter ({\bf GF})]\label{def:goggin-filter} The estimator is given by the recursion:
$$\hat{x}_t = \gamma\hat{x}_{t-1} + K_t\left(s_N\phi\left(\frac{Y_t}{s_N}\right)-I(v)\gamma \hat{x}_{t-1}\right),\footnote{The distribution of $\hat{x}_0$ is immaterial as we consider stationary performance.} $$
where, denoting $R= s_N^2 I(v),Q=Var(w_t/\sqrt{N}) =1/N$, \begin{equation} \label{eq:KPdefin} K_t=\frac{P_tI(v)}{R},~\mbox{where $P_t$ follows the recursion } P_t = \frac{R(\gamma^2 P_{t-1} + Q)}{I^2(v)(\gamma ^2 P_{t-1}+Q)+R}.\end{equation}   
\end{definition}

The unobserved process $X_t$ has a steady-state distribution because $\gamma \in (0,1)$. When we say below that the MSE of GF, $\Ex[(\hat{x}_t-X_t)^2]$, achieves an upper bound {\em in stationarity} we mean that $\limsup_{t\uparrow} \Ex[(\hat{x}_t-X_t)^2]$ satisfies the stated upper bound; the same applies to the bias bounds. 

\begin{theorem} Suppose $1<< s_N \lesssim \sqrt{N}$ and let $\hat{x}_t$ be the (random) estimator produced by Goggin filter. Then, under the basic assumptions in stationarity, 
\begin{align} 
|\Ex[\hat{x}_t-X_t]| &= \Ocal\left(\frac{1}{s_N}\right) \tag{Bias} \\
\Ex[(\hat{x}_t-X_t)^2]& = MSE^{\star}\pm \Ocal\left(\frac{1}{s_N}\right),\tag{Variance} \end{align} where $MSE^{\star}$ is the best possible $MSE$ in \cref{eq:MSEdefin}. Moreover, $MSE^{\star} = \Theta(s_N/\sqrt{N})$ so that 
$$\frac{\Ex[(\hat{x}_t-X_t)^2]}{MSE^{\star}}= 1 + \Ocal\left(\frac{\sqrt{N}}{s_N^2}\right).$$\label{thm:main}
\end{theorem}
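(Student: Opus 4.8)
The plan is to sandwich both the Goggin-filter MSE and $MSE^{\star}$ between $P\pm\Ocal(1/s_N)$, where $P$ is the steady-state value of the Riccati sequence $P_t$ in Definition \ref{def:goggin-filter}. First I would analyze the scalar Riccati equation $P=R(\gamma^2P+Q)/(I^2(v)(\gamma^2P+Q)+R)$ with $R=s_N^2I(v)$, $Q=1/N$, $\gamma=1-1/N$. Writing $\bar P=\gamma^2P+Q$ and using $s_N\gg 1$ (so $PI^2(v)\ll R$) gives $P^2 I^2(v)\approx(\bar P-P)R\approx R/N$, hence $P=\Theta(s_N/\sqrt N)$, the gain $K=PI(v)/R=\Theta(1/(s_N\sqrt N))$, and the contraction factor $A:=\gamma(1-KI(v))$ obeys $1-A=\Theta(1/(s_N\sqrt N))$ throughout the balanced regime (one checks $KI(v)=P_tI^2(v)/R\in(0,1)$, so $0<A_t<\gamma<1$ and the filter is uniformly stable). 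The order of $MSE^{\star}$ I would pin down separately so as to avoid any Taylor approximation: the ordinary Kalman filter run on $Y_t$ is BLUE with steady MSE $\Theta(s_N/\sqrt N)$ (Lemma \ref{lem:KF}), giving $MSE^{\star}\le\Ocal(s_N/\sqrt N)$, while Theorem \ref{thm:CRLB} supplies the matching lower bound, so $MSE^{\star}=\Theta(s_N/\sqrt N)$.

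Next I would upper-bound the GF error by linearizing the transformed observation. A second-order Taylor expansion, using the bounded $\|\phi''\|_{\infty}$ from Assumption \ref{asum:primitives-observation}, gives
\[
Z_t=s_N\phi\!\lp\tfrac{Y_t}{s_N}\rp=\underbrace{s_N\phi(v_t)}_{=:N_t}+\phi'(v_t)X_t+\remainder_t,\qquad |\remainder_t|\le\tfrac12\|\phi''\|_{\infty}\tfrac{X_t^2}{s_N},
\]
so that $Z_t=I(v)X_t+N_t+M_t+\remainder_t$ with $M_t:=(\phi'(v_t)-I(v))X_t$. Thus the GF is exactly the Kalman filter for the ideal pseudo-linear observation $I(v)X_t+N_t$ (observation gain $I(v)$, noise variance $R=s_N^2I(v)$), perturbed by $M_t$ and $\remainder_t$. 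Subtracting $X_t=\gamma X_{t-1}+\tfrac{1}{\sqrt N}w_{t-1}$ from the recursion, the error $e_t=\hat x_t-X_t$ satisfies
\[
e_t=A_te_{t-1}+(K_tI(v)-1)\tfrac{1}{\sqrt N}w_{t-1}+K_tN_t+K_tM_t+K_t\remainder_t.
\]
I would then write $e_t=\tilde e_t+d_t$, where $\tilde e_t$ is the error of the KF on the ideal system (driven only by $w$ and $N_t$, with steady variance exactly $P$, since the covariance recursion depends only on second moments and is insensitive to non-Gaussianity), and $d_t=A_td_{t-1}+K_t(M_t+\remainder_t)$ collects the perturbations.

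The crucial structural facts are that $\{M_t\}$ is mean-zero and \emph{serially uncorrelated} — because the $X$-process is independent of $(v_t)$ and $\Ex[\phi'(v)-I(v)]=0$ — with $\Var(M_t)=\Var(\phi'(v))\,\Ex[X_t^2]=\Theta(1)$ (finite since $\|\phi''\|_{\infty}<\infty$ and $v$ has finite variance, and $\Ex[X_t^2],\Ex[X_t^4]=\Theta(1)$ in stationarity because $R^{\star}(w)<\infty$ forces all moments of $w$). Its contribution to $\Ex[d_t^2]$ is $K^2\Var(M)/(1-A^2)=\Ocal(1/(s_N\sqrt N))=\Ocal(1/s_N^2)$, the last step using $s_N\lesssim\sqrt N$. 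The remainder satisfies $\Ex[\remainder_t^2]=\Ocal(1/s_N^2)$ and $|\Ex[\remainder_t]|=\Ocal(1/s_N)$; unrolling the geometric recursion it injects the persistent bias $\Ex[d_t]=K\Ex[\remainder]/(1-A)=\Ocal(1/s_N)$ and, by a term-by-term Cauchy--Schwarz bound, only $\Ocal(1/s_N^2)$ into the second moment. Hence $\Ex[d_t^2]=\Ocal(1/s_N^2)$ and the total bias is $\Ocal(1/s_N)$, which is the Bias claim (recall $\Ex[X_t]=0$ in stationarity). Finally $\Ex[e_t^2]=P+2\Ex[\tilde e_t d_t]+\Ex[d_t^2]$ with $|\Ex[\tilde e_t d_t]|\le\sqrt{P\,\Ex[d_t^2]}=\Ocal(\sqrt{1/(s_N\sqrt N)})=\Ocal(1/s_N)$ (again since $s_N\lesssim\sqrt N$), giving $MSE^{GF}=P+\Ocal(1/s_N)$.

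To close the sandwich I would combine $MSE^{\star}\le MSE^{GF}=P+\Ocal(1/s_N)$ (the GF is admissible) with the lower bound $MSE^{\star}\ge P-\Ocal(1/s_N)$ from Theorem \ref{thm:CRLB}, which is where the Fisher-information CLT enters and is \emph{the main obstacle}. The one-step posterior (Bayesian) CRLB uses the true driving-noise information $I(\tfrac{1}{\sqrt N}w)=NI(w)$; since $I(w)>1/\Var(w)=1$ for non-Gaussian $w$, the naive fixed point of the information recursion is $\asymp s_N/\sqrt{N\,I(v)I(w)}$, strictly below the achievable $P\asymp s_N/\sqrt{N\,I(v)}$, so it fails to certify optimality. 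Aggregating the signal noise over an $\Theta(N)$-length window and invoking the Fisher-information CLT convergence rate (Lemma \ref{lem:CLT_gn}) replaces $I(w)$ by its Gaussian value, recovering the tight constant up to a controlled $\Ocal(1/s_N)$ CLT remainder; the observation noise, by contrast, must \emph{not} be aggregated, since its CLT error would be too large. Combining the two bounds yields $MSE^{GF}=MSE^{\star}\pm\Ocal(1/s_N)$ (the Variance claim), and dividing by $MSE^{\star}=\Theta(s_N/\sqrt N)$ gives $MSE^{GF}/MSE^{\star}=1+\Ocal(\sqrt N/s_N^2)$.
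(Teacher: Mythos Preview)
Your overall architecture is the paper's: a Taylor linearization of $Z_t$ to get an upper bound of the form $P+\Ocal(1/s_N)$, and the batched posterior CRLB plus the Fisher--information CLT for the lower bound. Your upper-bound decomposition $e_t=\tilde e_t+d_t$ is a clean variant of what the paper does; the paper instead works directly with the variance recursion, writing $\hat P_t=(1-K_tI(v))^2(\gamma^2\hat P_{t-1}+Q)+K_t^2R+$ (cross terms with $\remainder_t$), bounding the cross terms as $\Ocal\!\bigl((1+\hat P_{t-1})/(s_N^2\sqrt N)\bigr)$ (Lemma~\ref{lem:remainderterms1}), and then iterating $|\hat P_t-P_t|$. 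Both routes give $MSE^{GF}=P+\Ocal(1/s_N)$; yours trades the recursion bookkeeping for a Cauchy--Schwarz on $\Ex[\tilde e_t d_t]$.

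There is, however, one genuine slip in the lower-bound sketch: the batch length cannot be $\Theta(N)$. In the batched CRLB (Theorem~\ref{thm:CRLB}) the observation information enters as ${\bf e}'I(\mathbb{V}){\bf e}=\Theta(\tau/s_N^2)$, so with $\tau=\Theta(N)$ and $s_N\ll\sqrt N$ this term dominates $\sqrt N/s_N$ in $\bar J_\infty$, yielding $\bar J_\infty=\Theta(N/s_N^2)$ and hence $1/\bar J_\infty=\Theta(s_N^2/N)\ll s_N/\sqrt N=MSE^{\star}$. The bound is then too loose to match $P$. The paper sets $\tau=s_N$: this keeps $\tau/s_N^2=1/s_N\le\sqrt N/s_N$ so the observation aggregation does not inflate $\bar J_\infty$, while $\tau\gg 1$ is still enough for the Fisher CLT remainder $\Ocal(1/\tau)$ in Lemma~\ref{lem:CLT_gn} to be negligible. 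Your instinct that ``the observation noise must not be aggregated'' is exactly right; the way to implement it is to choose $\tau$ short (of order $s_N$), not $\Theta(N)$. The choice $\tau=\Theta(N)$ happens to work only at the Goggin boundary $s_N=\Theta(\sqrt N)$.

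A second, smaller omission: once you have $MSE^{\star}\ge 1/\bar J_\infty-\Ocal(1/(\tau\bar J_\infty))$, you still need to show $1/\bar J_\infty=P+\Ocal(1/s_N)$. The paper does this explicitly (``Step~II'' of the proof of Theorem~\ref{thm:upperbound}) by writing $J^{GF}:=1/P$ and $\bar J_\infty$ as the positive roots of two nearby quadratics and bounding $|1/\bar J_\infty-1/J^{GF}|=\Ocal(\tau/N)$, which with $\tau=s_N$ is $\Ocal(s_N/N)\le\Ocal(1/s_N)$. Without this step you have not actually closed the sandwich around $P$.
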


\begin{corollary}[Goggin's setting $s_N=\sqrt{N}$] Suppose $s_N=\sqrt{N}$ and let $\hat{x}_t$ be the (random) estimator produced by the Goggin filter. Then, in stationarity, both
\begin{align*} 
|\Ex[\hat{x}_t-X_t]| = \Ocal\left(\frac{1}{\sqrt{N}}\right),~  \mbox{and } 
\Ex[(\hat{x}_t-X_t)^2] = MSE^{\star}\pm \Ocal\left(\frac{1}{\sqrt{N}}\right), \end{align*} so that 
$$\frac{\Ex[(\hat{x}_t-X_t)^2]}{MSE^{\star}}= 1 + \Ocal\left(\frac{1}{\sqrt{N}}\right).$$
\label{cor:rateGoggin}
\end{corollary}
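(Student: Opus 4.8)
The plan is to derive this corollary as the direct specialization of Theorem \ref{thm:main} to $s_N = \sqrt{N}$. First I would verify that this choice of $s_N$ lies in the admissible range of the theorem: the hypothesis $1 \ll s_N \lesssim \sqrt{N}$ holds because $\sqrt{N} \to \infty$ (so $1 \ll \sqrt{N}$) while $\sqrt{N} \lesssim \sqrt{N}$ is trivial. Consequently all three conclusions of Theorem \ref{thm:main} are available verbatim, with no additional argument required.

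Next I would substitute $s_N = \sqrt{N}$ into each conclusion. The bias bound becomes $|\Ex[\hat{x}_t - X_t]| = \Ocal(1/s_N) = \Ocal(1/\sqrt{N})$, and the variance bound becomes $\Ex[(\hat{x}_t - X_t)^2] = MSE^{\star} \pm \Ocal(1/s_N) = MSE^{\star} \pm \Ocal(1/\sqrt{N})$. These yield the first two displayed claims immediately.

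For the ratio I would invoke the final conclusion of Theorem \ref{thm:main}, namely $\Ex[(\hat{x}_t - X_t)^2]/MSE^{\star} = 1 + \Ocal(\sqrt{N}/s_N^2)$, together with the elementary simplification $\sqrt{N}/s_N^2 = \sqrt{N}/N = 1/\sqrt{N}$ valid at $s_N = \sqrt{N}$. It is worth recording that in this regime $MSE^{\star} = \Theta(s_N/\sqrt{N}) = \Theta(1)$, so the absolute error bound and the relative error bound coincide in order and the best achievable MSE is bounded away from zero; this is consistent with Goggin's observation that $MSE^{\star} \approx \Ex[X_t^2]$ in this scaling, since here the observation noise is of the same order as the signal.

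Because the statement is a pure substitution, there is no genuine obstacle: the only things to confirm are that the hypothesis $1 \ll s_N \lesssim \sqrt{N}$ is met at $s_N = \sqrt{N}$ and that the arithmetic $\sqrt{N}/s_N^2 = 1/\sqrt{N}$ holds. All of the analytic content — the bias, variance, and ratio estimates, together with the order of $MSE^{\star}$ — is already carried by Theorem \ref{thm:main}.
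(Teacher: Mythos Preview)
Your proposal is correct and matches the paper's treatment: the corollary is stated immediately after Theorem \ref{thm:main} with no separate proof, as it is precisely the specialization $s_N=\sqrt{N}$ that you describe. Your verification that $s_N=\sqrt{N}$ satisfies $1\ll s_N\lesssim\sqrt{N}$ and your arithmetic substitutions are exactly what is needed.
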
 

While Theorem \ref{thm:main} covers $1<< s_N\lesssim \sqrt{N}$, the result is less meaningful where  $ s_N\lesssim N^{1/4}$. In this range, the MSE percent gap is a constant and hence not negligible. \cite{Liptser1998} show, numerically, that---in the case $s_N=\sqrt{N}$---
a centered ``limiter'' has lower MVUE. Theorem \ref{thm:maincentered} below provides a pre-limit justification for this fact. It establishes that a centered version of GF has negligible percent gap over the full range $1<< s_N\lesssim \sqrt{N}$. For this, however, we do require, stronger assumptions on the observation-noise distribution. 

\begin{definition}[Centered Goggin Filter ({\bf GF})]\label{def:centered-goggin-filter} The estimator is given by the recursion:
$$\hat{x}^{c}_t = \gamma \hat{x}^{c}_{t-1} + K_{t}s_N\phi\lp \frac{Y_t- \gamma \hat{x}^{c}_{t-1}}{s_N} \rp$$
where $K_t$ is as in \eqref{eq:KPdefin}. 
\end{definition}

\begin{assumption}\label{asum:dissipativity}
Let $\phi = - f'/f$ denote the score of the measurement noise. 
We assume:

\begin{enumerate}
\item \textbf{Strong dissipativity.}  
There exists $\zeta > 0$ such that the function 
$h(y) := \expec[\phi(v_t + y)]$ satisfies \begin{align*}
h(y)\,y \;\ge\; \zeta y^2, ~\forall y \in \mathbb{R}.\numberthis\label{eq:dissipative}
\end{align*}

\item \textbf{Linear growth.}  There exist constants $A,B<\infty$ such that 
\begin{align*}
|\phi(y)| \;\le\; A + B|y|, ~\forall y \in \mathbb{R}.  \numberthis\label{eq:linear_growth}
\end{align*}
\end{enumerate}\label{asum:centered} 
\end{assumption}

The term strong-dissipativity relates to the notion of dissipativity in dynamical systems that guarantees stability. As the proof of Lemma \ref{lem:preliminary} shows, this condition guarantees that the variance recursion for the filter is stable. The linear growth condition gives control on the one-step growth in that same variance recursion. The following lemma characterises one family of distributions that satisfy Assumption \ref{asum:centered}. 

\begin{lemma}\label{lemma:log-concave} Any density of the form $f(x) =\frac{1}{Z}e^{-V(x)}$ for $V$ that is three time differentiable on the real line and has: (i) $\inf_{x} V''(x) >0$, (ii) $|V'(x)| \leq A+ B|x|$, and (iii) $|V'''(x)|\leq C$ satisfies Assumptions \ref{asum:primitives-observation} and \ref{asum:centered}. 
\end{lemma}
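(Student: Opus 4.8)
The plan is to exploit the explicit Gibbs form of the density; a single computation drives everything. For $f=\frac1Z e^{-V}$ one has $f'=-V'f$, so the negative score is simply
\[ \phi = -\frac{f'}{f} = V', \]
whence $\phi'=V''$ and $\phi''=V'''$. The three hypotheses on $V$ therefore translate verbatim into statements about the score: (i) $\phi'=V''\ge \zeta_0:=\inf_x V''(x)>0$; (ii) $|\phi|=|V'|\le A+B|\cdot|$; and (iii) $\|\phi''\|_\infty=\|V'''\|_\infty\le C$. Thus $\phi=V'$ inherits from $V$ the regularity and the bound $\|\phi''\|_\infty<\infty$ demanded in Assumption \ref{asum:primitives-observation}, and the linear-growth bound \eqref{eq:linear_growth} of Assumption \ref{asum:centered} holds immediately.

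Next I would dispatch the remaining items of Assumption \ref{asum:primitives-observation}. Positivity $f>0$ is automatic. Because $V''\ge\zeta_0>0$, $V$ is $\zeta_0$-strongly convex, so $V(x)\ge V(x_0)+V'(x_0)(x-x_0)+\tfrac{\zeta_0}{2}(x-x_0)^2$; hence $f$ is dominated by a Gaussian, giving $Z<\infty$, finiteness of all moments (in particular the fourth), and the decay $f(x)\to 0$ as well as $\phi(x)f(x)\to 0$ as $|x|\to\infty$. This decay makes the boundary terms vanish in the standard score identities: $\Ex[\phi(v)]=-\int f'=0$ and, integrating by parts, $\Ex[\phi'(v)]=\int \phi^2 f=\Var(\phi(v))=I(v)$. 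This verifies Assumption \ref{asum:primitives-observation}.

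The substantive step is the strong-dissipativity condition \eqref{eq:dissipative}. With the function $h(y):=\Ex[\phi(v+y)]=\Ex[V'(v+y)]$ of that display, note first that $h(0)=\Ex[\phi(v)]=0$ by the previous paragraph. The linear growth $|V''(x)|\le |V''(0)|+C|x|$ (a consequence of (iii)) together with $\Ex|v|<\infty$ justifies differentiating under the expectation by dominated convergence, yielding
\[ h'(y)=\Ex[\phi'(v+y)]=\Ex[V''(v+y)]\ge \zeta_0 \quad\text{for all } y. \]
Since $h(0)=0$, the fundamental theorem of calculus gives $h(y)=\int_0^y h'(s)\,ds$, so $h(y)\ge\zeta_0 y$ for $y\ge 0$ and $h(y)\le\zeta_0 y$ for $y\le 0$; in either case $h(y)\,y\ge \zeta_0 y^2$. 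Taking $\zeta=\zeta_0=\inf_x V''(x)$ establishes \eqref{eq:dissipative} and completes the verification of Assumption \ref{asum:centered}.

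I expect the only delicate point to be the interchange of differentiation and expectation defining $h'$: it must be justified through a dominating function, which is available precisely because (iii) forces $V''$ to grow at most linearly and strong convexity guarantees that $v$ has finite moments. Everything else is a direct translation through the identity $\phi=V'$.
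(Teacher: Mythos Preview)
Your proof is correct and follows essentially the same approach as the paper: both identify $\phi=V'$, read off linear growth and $\|\phi''\|_\infty<\infty$ directly from the hypotheses, use strong convexity of $V$ to obtain Gaussian tails and hence finite moments, and establish strong dissipativity by showing $h(0)=0$, justifying $h'(y)=\Ex[V''(v+y)]\ge \inf_x V''(x)$ via a linear-growth dominating function for $V''$, and then integrating. If anything, you are slightly more explicit than the paper about the domination argument for differentiating under the expectation and about the integrability of $\phi$ via the boundary decay $\phi f\to 0$.
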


\begin{theorem} \label{thm:maincentered}
Suppose $1<< s_N\lesssim \sqrt{N}$ and add Assumption \ref{asum:centered} to our basic assumptions. Let  $\hat{x}^{c}_t$ be the (random) estimator produced by the centered Goggin filter. Then, in stationarity, 
\begin{align} 
\Ex[\hat{x}^{c}_t-X_t] &= \Ocal\left(\frac{1}{\sqrt N}\right) \tag{Bias} \\
\Ex[(\hat{x}^{c}_t-X_t)^2]& = MSE^{\star}\pm \Ocal\lp\frac{\sqrt{s_N}}{N^{3/4}}\rp
,\tag{Variance} 
\end{align} 
where $MSE^{\star}$ is the best possible $MSE$ in \cref{eq:MSEdefin}. Moreover, $MSE^{\star} = \Theta(s_N/\sqrt{N})$ so that 
$$\frac{\Ex[(\hat{x}^{c}_t-X_t)^2]}{MSE^{\star}}= 1 + \Ocal\left(\frac{1}{\sqrt{s_N}N^{1/4}}\right).$$
\end{theorem}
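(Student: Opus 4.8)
The plan is to split the statement into an \emph{achievability} bound on the centered filter's own mean-squared error and a \emph{matching} lower bound on $MSE^{\star}$, and to show that centering shrinks the filter's approximation error below the irreducible slack in the lower bound. Write $e_t=\hat{x}^{c}_t-X_t$ and let $\tilde{e}_t=X_t-\gamma\hat{x}^{c}_{t-1}=w_{t-1}/\sqrt N-\gamma e_{t-1}$ be the one-step prediction error. Since $(Y_t-\gamma\hat{x}^{c}_{t-1})/s_N=v_t+\tilde{e}_t/s_N$, Definition~\ref{def:centered-goggin-filter} gives $e_t=-\tilde{e}_t+K_t s_N\phi(v_t+\tilde{e}_t/s_N)$, and a second-order Taylor expansion of $\phi$ about $v_t$ (using $\|\phi''\|_\infty<\infty$) yields
\begin{equation*}
e_t=-(1-K_tI(v))\tilde{e}_t+K_t s_N\phi(v_t)+K_t(\phi'(v_t)-I(v))\tilde{e}_t+\frac{K_t}{2s_N}\phi''(\xi_t)\tilde{e}_t^2 .
\end{equation*}
The decisive feature of centering is that the argument of $\phi$ is perturbed only by $\tilde{e}_t/s_N$, whose root-mean-square is $\Theta(1/(\sqrt{s_N}N^{1/4}))=\ocal(1)$ because $\Ex[\tilde{e}_t^2]=\Theta(s_N/\sqrt N)$ and $s_N\gg1$; the remainder is therefore genuinely higher order, unlike the non-centered filter of Theorem~\ref{thm:main}, whose argument is perturbed by the $\Theta(1)$ quantity $X_t/s_N$. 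Before using this I would invoke Assumption~\ref{asum:centered} through Lemma~\ref{lem:preliminary}: strong dissipativity makes the variance recursion a stable contraction and linear growth bounds its one-step increments, together furnishing the steady-state moment bounds $\Ex[\tilde e_t^2]=\Theta(s_N/\sqrt N)$ and a matching fourth-moment bound on which all later estimates rest.

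For the \textbf{bias}, taking expectations kills every first-order term: $v_t\perp\tilde{e}_t$ with $\Ex[\phi(v_t)]=0$ and $\Ex[\phi'(v_t)]=I(v)$, leaving the scalar fixed point $b=(1-KI(v))\gamma\,b+\tfrac{K}{2s_N}\Ex[\phi''(\xi)\tilde e^2]$. The forcing term is $\Ocal(K s_N^{-1}\Ex[\tilde e^2])=\Ocal(1/(s_N N))$ while the contraction defect is $1-(1-KI(v))\gamma=1/N+\Theta(1/(s_N\sqrt N))$; dividing gives $|b|=\Ocal(1/\sqrt N)$ uniformly on $1\ll s_N\lesssim\sqrt N$, which is the claimed bias and improves the $\Ocal(1/s_N)$ bias of the non-centered filter. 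In particular $b^2=\Ocal(1/N)$, which will turn out to be negligible against the MSE gap.

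For the \textbf{variance} I would argue from both sides. \emph{Upper bound:} the filter is a genuine (linear) Kalman filter in the transformed observations, so its steady-state error variance is governed by the second moments of its two driving noises, $-(1-K_tI(v))w_{t-1}/\sqrt N$ and $K_t s_N\phi(v_t)$, whose accumulation reproduces the steady-state Riccati value $P_\infty$ of \eqref{eq:KPdefin}; crucially this depends on the signal noise only through $\Var(w/\sqrt N)=1/N$, not through $I(w)$. The mean-zero fluctuation $K_t(\phi'(v_t)-I(v))\tilde e_t$, the Taylor remainder, and the cross-covariances they induce with $e_{t-1}$ through $\tilde e_t$ are each summed over the $\Theta(s_N\sqrt N)$-length memory horizon and, thanks to centering and the fourth-moment bound, shown to perturb the MSE by only $\Ocal(1/N)$, so that $\Ex[e_t^2]=P_\infty+\Ocal(1/N)$. \emph{Lower bound:} $MSE^{\star}$ is bounded below by the posterior CRLB (Theorem~\ref{thm:CRLB}); the naive per-step bound is loose because the signal-noise information $NI(w)$ exceeds the Gaussian value $N$ by a constant factor, so I would instead aggregate the signal noise over the filter's memory and apply the Fisher-information CLT (Lemma~\ref{lem:CLT_gn}) to replace its information by $1/\Var=N$ up to the CLT rate, exactly as the observation noise is \emph{not} aggregated. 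This yields $MSE^{\star}\ge P_\infty-\Ocal(\sqrt{s_N}/N^{3/4})$, and combined with the upper bound gives $\Ex[(\hat{x}^{c}_t-X_t)^2]=MSE^{\star}\pm\Ocal(\sqrt{s_N}/N^{3/4})$; since $P_\infty=\Theta(s_N/\sqrt N)$ this is $MSE^{\star}(1+\Ocal(1/(\sqrt{s_N}N^{1/4})))$.

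The main obstacle is the exact rate on each side. On the upper side, the correction terms are small only after summation over a horizon of length $\Theta(s_N\sqrt N)$, so one must show their accumulated contribution is $\Ocal(1/N)=\ocal(\sqrt{s_N}/N^{3/4})$ rather than larger; this is where the fourth-moment control from Assumption~\ref{asum:centered} is indispensable, and it is precisely what fails for the non-centered filter (whose own error is the larger $\Ocal(1/s_N)$, making \emph{it}, rather than the CRLB slack, the binding term). On the lower side, the delicate point is calibrating the Fisher-information CLT rate of Lemma~\ref{lem:CLT_gn} against the sensitivity of the steady-state Riccati map to a relative perturbation of its process-noise information, so that the two combine to the advertised $\sqrt{s_N}/N^{3/4}=\sqrt{MSE^{\star}/N}$ and not to a cruder bound. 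Establishing that the CRLB slack---not the filter's own approximation error---is what governs the gap throughout $1\ll s_N\lesssim\sqrt N$ is the technical heart of the theorem and the reason the stronger hypotheses are needed.
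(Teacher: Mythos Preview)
Your overall architecture---split into achievability and CRLB, Taylor-expand the score about $v_t$, invoke Lemma~\ref{lem:preliminary} for moment control, and use the Fisher-information CLT on batched signal noise---matches the paper, and your bias computation is correct. There is, however, a real gap in the upper-bound accounting that inverts your conclusion about which side binds.

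You assert that the correction terms (the $\phi'$-fluctuation, the Taylor remainder, and their cross-covariances) perturb the MSE by only $\Ocal(1/N)$. This is too optimistic. When you square the error recursion, the cross term between the leading piece $-(1-K_tI(v))\tilde e_t$ and the second-order remainder $\tfrac{K_t}{2s_N}\phi''(\xi_t)\tilde e_t^{2}$ contributes $\Ocal\bigl(K_t\,s_N^{-1}\,\Ex[|\tilde e_t|^{3}]\bigr)$ per step. The fourth-moment bound you cite controls this only via Jensen, giving $\Ex[|\tilde e_t|^{3}]=\Ocal\bigl((s_N/\sqrt N)^{3/2}\bigr)$; there is no further cancellation because $\phi''(\xi_t)$ depends on $\tilde e_t$ through $\xi_t$. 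The per-step correction is therefore $\Ocal\bigl(1/(\sqrt{s_N}\,N^{5/4})\bigr)$, which over the $\Theta(s_N\sqrt N)$ memory horizon accumulates to $\Ocal(\sqrt{s_N}/N^{3/4})$, not $\Ocal(1/N)$. This is precisely the content of the paper's Lemma~\ref{lem:remainderterms2}, and it is the reason Lemma~\ref{lem:preliminary} supplies a \emph{third}-moment bound $\Upsilon_t^c$ in addition to the second and fourth.

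As a consequence, your final attribution is backwards: the filter's own approximation error already sits at $\Ocal(\sqrt{s_N}/N^{3/4})$, so it is not dominated by the CRLB slack. With the batch size in Theorem~\ref{thm:CRLB} chosen to balance the two sources of lower-bound error, the CRLB slack is also $\Ocal(\sqrt{s_N}/N^{3/4})$; the advertised rate arises because \emph{both} sides land there, not because a tighter upper bound is swamped by the lower bound. The ``technical heart'' is therefore controlling the third-moment cross term on the upper side, not merely calibrating the CRLB.
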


\vspace*{.5cm} 
\noindent{\em Asymptotic optimality:} Observe that a straightforward implication of Theorems \ref{thm:main} and \ref{thm:maincentered} is that, under the suitable assumptions, steady-state asymptotic optimality for:
$$\mbox{ Bias } \rightarrow 0,\mbox{ and } \frac{MSE}{MSE^{\star}}\rightarrow 1, \mbox{ as }N\uparrow \infty.$$

\begin{remark}[Filter comparison for $s_N=\sqrt{N}$]
\noindent On the boundary of the balanced regime, where $s_N=\Theta(\sqrt{N})$, both variants---centered and non-centered---produce the same MSE sub-optimality rate of $1/\sqrt{N}$. In other words, to the extent that the centered variant improves on the non-centered one, this improvement must be a second-order one. \hfill  \bsq \end{remark} 

\begin{remark}[On the scaling of $MSE^{\star}$]\label{remark:KFvsGF}
Achieving the optimal scaling of $MSE^{\star}$ which is $\Theta(s_N/\sqrt{N})$ for $1\lesssim s_N\lesssim \sqrt{N}$ is not where adaptive filters---in this case KF vs. GF---differ. Instead, they differ only in how close they get to the correct constant multiple of $s_N/\sqrt{N}$. Indeed, a naive averaging  filter already achieves the optimal $MSE^{\star}$ scaling. The {\textbf{naive}} filter is defined as follows: for all $t \in [(k-1)\tau+1,k\tau]$ we set
\begin{align*}
\hat{x}_t = \frac{1}{\tau}\sum_{s=(k-1)\tau+1}^{k\tau}Y_s.    
\end{align*}
The MSE of this estimator is given by $$\Ex[(\hat{x}_t-X_t)^2] = \frac{1}{\tau^2}\Ex\left[\left(\sum_{s=(k-1)\tau+1}^{k\tau}(X_s-X_t)\right)^2 \right] +\frac{1}{\tau}\Var(s_Nv_1)= \Theta\left(\frac{\tau}{N}\right) +  \frac{1}{\tau}s_N^2$$ 
To achieve the $MSE^{\star}$ scaling of $s_N/\sqrt{N}$ with this naive filter, a batch size $\tau=\Theta(s_N\sqrt{N})$ suffices. \hfill \bsq \vspace*{0.2cm}
\end{remark} 
 


\noindent {\em Improvement over the Regular Kalman filter:} While implicitly clear from earlier literature, the next lemma formally establishes that the Goggin filter offers a substantial improvement over KF. Specifically, for $s_N\gg N^{1/4}$ GF has an error that is $\ocal(MSE^\star)$ but the KF has an error that is $\Omega(MSE^\star)$.

\begin{lemma} Suppose $1\ll s_N\lesssim \sqrt{N}$. Let $MSE^{KF}$ be the steady-state MSE of the Kalman filter. Then, 
 $$MSE^{KF} \geq MSE^\star + \Omega\left(MSE^\star\right)$$
\label{lem:KF}
 \end{lemma}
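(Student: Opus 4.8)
The plan is to express both $MSE^{KF}$ and $MSE^{\star}$ as the solution of one and the same scalar algebraic Riccati equation, evaluated at two different effective observation-noise variances, and then to show that the strict Fisher-information inequality $I(v)>1$ (which holds because $v$ is non-Gaussian) forces those two solutions apart by a constant factor.

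First I would compute $MSE^{KF}$ exactly. Writing $Q:=\Var(w_t/\sqrt N)=1/N$ and letting $R$ denote the observation-noise variance, the steady-state a~posteriori error covariance of the Kalman filter is the positive root $F(R)$ of the discrete algebraic Riccati equation
\[
\gamma^2 z^2+\bigl(R(1-\gamma^2)+Q\bigr)z-QR=0 .
\]
Because the Kalman gains $K_t$ are deterministic functions of $Q$ and $R$ only, the MSE of this fixed linear recursion depends on the driving noises through their second moments alone; hence the identity $MSE^{KF}=F(s_N^2)$ holds verbatim for non-Gaussian $v$, with $R=\Var(s_Nv_t)=s_N^2$. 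Solving the quadratic gives the convenient leading-order form $F(R)=\sqrt{QR}\,p(\beta)\,(1+\ocal(1))$, where $\beta=\beta(R):=\sqrt{R/Q}\,(1-\gamma^2)$ and $p(\beta):=\tfrac12(\sqrt{\beta^2+4}-\beta)$.

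Second, I would identify $MSE^{\star}$ with the same function $F$ evaluated at the \emph{Fisher-information-reduced} variance $R^{\star}:=s_N^2/I(v)$. The posterior Cram\'er--Rao bound of Theorem~\ref{thm:CRLB} injects observation information $I(s_Nv)=I(v)/s_N^2$ per step, while the Fisher-information CLT lets the aggregated signal noise enter only through its variance $Q=1/N$; this yields $MSE^{\star}\ge F(R^{\star})(1-\ocal(1))$, and the matching upper bound is the achievability of Theorem~\ref{thm:main} (respectively Theorem~\ref{thm:maincentered}), so that $MSE^{\star}=F(R^{\star})(1+\ocal(1))$. With both quantities now of the common form $F(\cdot)$ and $s_N^2/R^{\star}=I(v)$, the comparison becomes purely analytic: since $F$ is strictly increasing, I would bound its log-elasticity
\[
\frac{\diff\log F}{\diff\log R}=\frac12\Bigl(1-\frac{\beta}{\sqrt{\beta^2+4}}\Bigr).
\]
The hypothesis $s_N\lesssim\sqrt N$ forces $\beta(R)\le 2s_N/\sqrt N\lesssim 1$ for every $R\in[R^{\star},s_N^2]$, so the elasticity stays above some constant $\epsilon>0$ on the whole interval. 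Integrating over $\log R$ from $R^{\star}$ to $s_N^2$ gives $F(s_N^2)/F(R^{\star})\ge I(v)^{\epsilon}$, and since $v$ is non-Gaussian, $I(v)>1/\Var(v)=1$, whence $I(v)^{\epsilon}>1$ and
\[
MSE^{KF}\ge I(v)^{\epsilon}\,MSE^{\star}(1-\ocal(1))=MSE^{\star}+\Omega(MSE^{\star}).
\]

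The main obstacle is the second step. What is asserted is a constant-\emph{factor} separation between two quantities that are both $\Theta(s_N/\sqrt N)$, so tracking orders is not enough: the leading constant of $MSE^{\star}$ must be controlled from \emph{above} uniformly in $s_N$, which is exactly an achievability (not a CRLB) statement. The non-centered bound $\Ocal(1/s_N)$ of Theorem~\ref{thm:main} is $\ocal(MSE^{\star})$ only when $s_N\gg N^{1/4}$; closing the residual window $1\ll s_N\lesssim N^{1/4}$ relies on the sharper centered-filter estimate of Theorem~\ref{thm:maincentered}, whose relative gap $\Ocal\bigl(1/(\sqrt{s_N}\,N^{1/4})\bigr)$ is $\ocal(1)$ throughout the balanced regime.
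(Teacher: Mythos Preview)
Your proof is correct and follows the same skeleton as the paper's: both identify $MSE^{KF}=1/J^{KF}=F(s_N^2)$ and $1/J^{GF}=F(s_N^2/I(v))$ as Riccati solutions at two different effective observation-noise levels, establish a constant-factor gap between them using $I(v)>1$, and then invoke achievability of the Goggin filter to bound $MSE^\star$ from above. The paper quantifies the gap by writing out the quadratic-formula expressions for $J^{KF}$ and $J^{GF}$ and Taylor-expanding the square roots around the dominant term $4\gamma^2 N/s_N^2$; your log-elasticity route is a tidier variant that yields the explicit ratio bound $F(s_N^2)/F(R^\star)\ge I(v)^{\epsilon}$. One small caution: the elasticity $\tfrac12\bigl(1-\beta/\sqrt{\beta^2+4}\bigr)$ is that of the leading-order surrogate $\sqrt{QR}\,p(\beta)$, not of $F(R)$ itself; since the neglected terms $\sqrt{Q/R}=1/(s_N\sqrt N)$ and $1-\gamma^2=\Theta(1/N)$ are uniformly $\ocal(1)$ on $[R^\star,s_N^2]$ and $\beta$ stays bounded, the exact elasticity differs only by $\ocal(1)$ and the lower bound survives---but a fully rigorous version should either differentiate $F$ implicitly from the Riccati equation or make this uniformity remark explicit.

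Your final paragraph is on point and in fact more careful than the paper's own presentation: the required upper bound $MSE^\star\le F(R^\star)(1+\ocal(1))$ is an achievability statement, and the non-centered estimate of Theorem~\ref{thm:main} gives a relative error $\Ocal(\sqrt N/s_N^2)$ that is $\ocal(1)$ only for $s_N\gg N^{1/4}$. Covering the full range $1\ll s_N\lesssim\sqrt N$ in the lemma's hypothesis therefore leans on Theorem~\ref{thm:maincentered} and hence on Assumption~\ref{asum:dissipativity}; the paper's proof passes over this dependence without comment.
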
 

\noindent {\em Organization of the proof of Theorem \ref{thm:main} and \ref{thm:maincentered}.} We divide the proofs into two parts. In Section \ref{sec:lowerbound}, specifically Theorem \ref{thm:CRLB}, we establish a lower bound on $MSE^{\star}$. To do that, we fix a batch period $\tau$ and prove the lower bound for the estimates $\hat x_{\tau},\hat x_{2\tau},\dots,$ at batch periods. We then extrapolate from batches to any time $t$ in stationarity. Our strategy for proving the lower bound in the batch periods is to apply Johnson's \cite{johnson2004information} inequality for the information of averages (Proposition \ref{prop:johnson_inforbd}) to replace the information of the dynamics noise with the reciprocal of the variance, {\em as if} this noise is Gaussian. We plug this replacement  into the CRLB for filtering as developed by \cite{tichavsky1998posterior}. This produces a ``near'' CRLB for the batched linear dynamical model. Proposition \ref{prop:johnson_inforbd} is where boundedness assumptions on the restricted Poincar\'e constants (Assumption \ref{asum:primitives-observation}) are used. 

The upper bound on the $MSE$ of GF is developed from first principles in Sections \ref{sec:upperbound} and \ref{sec:GFcenteredUB}. We start with the Goggin filter and---after a Taylor series expansion of the transformed observations $\phi(v_t)$ $s_N\phi(Y_t/s_N)$ (or $s_N\phi((Y_t-\gamma\hat{x}_{t-1}^c)/s_N)$ in the centered version)--- derive upper bounds for the remainder terms. Our ingredients for this analysis are Lemma \ref{lem:KFPXg} which provides an upper bound on $P_t$ and the Kalman gain $K_t$, and Lemmas \ref{lem:remainderterms1} and \ref{lem:remainderterms2}, which provide upper bounds on additional cross terms arising from the Taylor expansion. Lemma \ref{lem:preliminary} produces preliminary crude bounds on the MSE of the centered GF.

\section{The Cram\'er--Rao Lower bound at batch periods\label{sec:lowerbound}}

The CRLB is not tight in Goggin's regime when applied directly. This is because the CRLB retains the information for the observation noise---see \cite{goggin_convergence_1992}--- whereas the Goggin's filter, which we know to be asymptotically optimal,``treats'' the signal noise as Gaussian. We improve the lower bound through batching. 

We study the dynamics of $X_k^B:=X_{\tau k}$, which is the linear system at times $k\tau,k\in\mathbb{N}\cup \{0\}$ satisfying for $k\geq 1$ the linear dynamics
\begin{align*} 
X_{k}^B & = \gamma^{\tau} X_{k-1}^B+\frac{1}{\sqrt{N}}\sum_{s\in [\tau]}\gamma^{s-1} w_{\tau k-s}\\
& = \gamma^{\tau} X_{k-1}^B + \Wcal_k,
\end{align*} 
where $\gamma=(1-1/N)$ and
\begin{align}
    \Wcal_k :=\frac{1}{\sqrt{N}} \sum_{s\in [\tau]}\gamma^{s-1} w_{\tau k-s}, 
    \quad \mbox{so that}\quad  \Sigma(\Wcal_k):=\Var(\Wcal_k) = \frac{1}{N} \frac{1-\gamma^{2\tau}}{1-\gamma^2}.
    \label{eq:def-Wcal}
\end{align}
We initialize $X_0^B := X_0$ and rearrange \cref{eq:our_regime} to write the observations in terms of $X_k^B$. 

Let  $Y_k^B = \lp \gamma^{\tau -1} Y_{\tau (k-1)+1},\dots,\gamma^{\tau k -t}Y_t ,\dots, Y_{\tau k}\rp$ denote the vector of observations between time points $\tau (k-1)+1$ to $\tau k$. Then
\begin{align*}
    Y_k^B = {\bf e} X_k^B + \Vcal_k\in\mathbb{R}^{\tau},
\end{align*}
where ${\bf e}$ is the column vector of $1$'s and 
$\Vcal_k$ has entries 
$$\Vcal_{k,t}=\gamma^{\tau k-t}s_N v_t-\frac{1}{\sqrt{N}} \sum_{s=1}^{\tau k-t}\gamma^{s-1} w_{\tau k-s}, 
~~~ t\in \{\tau(k-1)+1,\ldots,\tau k\},$$ 
and we define $\sum_{s=1}^0=0$. In the following table we summarize the notation that are used \textit{exclusively} in this section for easy reference. 
\begin{table*}[htpb]
    \centering
\begin{tabular}{|c|l|c|l|}
\hline
\textbf{Symbol} & \textbf{Definition} & \textbf{Symbol} & \textbf{Definition} \\
\hline
$\mathbf{w}_k$ & $(w_{\tau(k-1)+1}, \dots, w_{\tau k-1})$ 
&  $\mathbb{V}_k^{-}$ & $s_N \times (\gamma v_{\tau(k-1)+1}, \gamma^2 v_{\tau(k-1)+2}, \dots, \gamma^{\tau-1} v_{\tau k-1})$ \\

$\mathcal{W}_k$ & $\dfrac{1}{\sqrt{N}} \sum_{s \in [\tau]} \gamma^{s-1} w_{\tau k - s}$ 
& 
$\mathbf{v}_k$ & $s_N \mathbb{V}_k^{-}$\\

$\mathbb{W}_k$ & $\left(-\dfrac{1}{\sqrt{N}} \sum_{s=1}^{\tau k - t} \gamma^{s-1} w_{\tau k - s}, ~ t \in [\tau(k-1)+1, \ldots, \tau k-1]\right)$ 
& $\mathbb{V}_k$ & $(\mathbb{V}_k^{-},\, s_N v_{\tau k})$ \\

$\mathbf{e}_k$ & $\tau$-length vector with $1$ in the $k$-th coordinate and $0$ elsewhere 
& $\mathbf{e}$ & $\tau$-length vector of $1$'s \\
\hline
\end{tabular}\label{tab:placeholder}
\vspace{1pt  }\caption{Notation used in the lower bound proof}
\end{table*}
We end up with the system $(X_k^B, Y_k^B)$ which is an equivalent reformulation of  \cref{eq:our_regime} and follows the dynamics 
\begin{align} \label{eq:batch1} 
X_k^B & = \gamma^{\tau} X_{k-1}^B + \Wcal_k
\\Y_k^B &=  {\bf e} X_k^B + \Vcal_k.\label{eq:batch2} 
\end{align}
\cite{tichavsky1998posterior} develop a recursion to compute an information lower bound on the MSE for general non-linear filtering problems; see equation (21) there and additional references in \cite{Bergman2001}. Specialized to our setting, \cite[Proposition 1]{tichavsky1998posterior} provides a recursion for a lower bound for estimating $X_k^B$ from the observations $Y_k^B$. From there it follows that $MSE_k^{*,B}\geq J_k^{-1}$ where $J_k$ which  satisfies the recursion 
\begin{equation}
    \begin{split}
        J_{k+1} & = I(\calW)+{\bf e}' I(\calV){\bf e}-\gamma^{2\tau} I^2(\calW)(J_k+\gamma^{2\tau}I(\calW))^{-1}\\
& = {\bf e}'I(\calV){\bf e}+\frac{J_kI(\calW)}{J_k+\gamma^{2\tau}I(\calW)}, \label{eq:tichavsky}
    \end{split} 
\end{equation} 
where recall from the notations section that $I(\calW)=I(\calW_1)$, $I(\calV)=I(\calV_1)$ and we have set $J_0=I(X_0)$. 
Since for all $k$, 
$$MSE^{\star}_k \geq \frac{1}{J_k},$$
we now only need to approximate $J_k$. This is achieved by the following theorem, which proves that the Fisher information can be replaced by the reciprocal of the variance---as if these noises are Gaussian. 

Below and onward 
$$\Sigma(\calW):= Var(\calW), ~~\Sigma^{-1}(\calW) = \frac{1}{Var(\calW)}.$$

Also 
$MSE^{\star,B}$ is the best possible steady-state MSE for the dynamics in \eqref{eq:batch1}-\eqref{eq:batch2}. That is, 
$$ MSE^{\star,B} = \inf_{\pi}\liminf_{k\uparrow\infty}\Ex[(\pi({\bf Y}_k^B)-X_k^B)^2].$$ 
Then by bounding  $MSE_k^{\star}:=\inf_{\pi}\Ex[(\pi({\bf Y}_k^
B)-X_k^B)^2]$ from below and using $MSE^{\star,B}\geq \liminf_{k}MSE_k^{\star}$, we establish the lower bound in Theorem \ref{thm:CRLB}.

\begin{theorem} Suppose $s_N=\Omega(1)$. For $\tau=\Ocal(N)$, 
\[ 
MSE^{\star,B} \geq \frac{1}{\bar{J}_{\infty}} - \Ocal\left(\frac{1}{\tau \bar{J}_{\infty}}\right)\numberthis\label{eq:MSELB}
\]
where $\bar{J}_\infty$ is the stationary point of the recursion 
\begin{equation}  
\bar{J}_{k+1} = {\bf e}'I(\mathbb{V}){\bf e} + \frac{\bar{J}_k\Sigma^{-1}(\calW)}{\bar{J_k}+\gamma^{2\tau}\Sigma^{-1}(\calW) } \quad \text{and has}\quad \bar{J}_{\infty} = \Theta\left(\max\left\{\frac{\tau}{s_N^2}+\frac{\sqrt{N}}{s_N},1\right\}\right).\label{eq:barJrecursion}\end{equation} 
Finally, if $w$ is normally distributed there is no approximation in \eqref{eq:MSELB}: $MSE^{\star} \geq \bar{J}_{\infty}^{-1}$. 

\label{thm:CRLB}
\end{theorem}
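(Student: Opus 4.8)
The plan is to take the exact Tichavsk\'y recursion \eqref{eq:tichavsky} as given, so that $MSE_k^\star\ge 1/J_k$, and to show that its fixed point $J_\infty$ sits within a factor $1+\Ocal(1/\tau)$ of the fixed point $\bar J_\infty$ of the surrogate \eqref{eq:barJrecursion}; passing to $\liminf_k$ then yields \eqref{eq:MSELB}. Both recursions are scalar maps of the Riccati form $\Phi_{c,a}(J)=c+\frac{Ja}{J+\gamma^{2\tau}a}$, differing only in two parameters: the exact map uses $a=I(\calW)$, $c=\mathbf{e}'I(\calV)\mathbf{e}$, while the surrogate uses $a=\Sigma^{-1}(\calW)$, $c=\mathbf{e}'I(\Vbb)\mathbf{e}$. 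I would first record that $\Phi_{c,a}$ is increasing and concave on $[0,\infty)$ with $\Phi_{c,a}(0)=c>0$ and $\Phi_{c,a}(J)\le c+a$, so it has a unique positive fixed point to which the iterates converge monotonically from $J_0=I(X_0)$; hence $\liminf_k 1/J_k=1/J_\infty$, and the same holds for $\bar J$. Since $\Phi_{c,a}(J)$ is increasing in both $c$ and $a$, the problem reduces to (i) comparing the two parameter pairs and (ii) a sensitivity bound for the fixed point.

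For the observation parameter I would argue in the favorable direction, incurring no error. Marginally, $\calV_k$ equals $\Vbb_k$ (independent scaled copies of $v$) convolved with the signal-noise contamination $\Wbb_k$, which is independent of the $v$'s. By Stam's convolution inequality for the Fisher-information matrix (adding independent noise shrinks information in the Loewner order), $I(\calV)\preceq I(\Vbb)$, hence $\mathbf{e}'I(\calV)\mathbf{e}\le \mathbf{e}'I(\Vbb)\mathbf{e}$. Replacing $c=\mathbf{e}'I(\calV)\mathbf{e}$ by the larger $\bar c=\mathbf{e}'I(\Vbb)\mathbf{e}$ only increases the fixed point, which is exactly what is needed for an upper bound on $J_\infty$.

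The single genuine source of error is the signal parameter, controlled by the Fisher-information CLT. The quantity $\calW_k$ (see \eqref{eq:def-Wcal}) is a geometrically weighted sum of $\tau$ independent copies of $w$; since $\tau=\Ocal(N)$ and $\gamma=1-1/N$, the weights stay within constant factors of uniform, so the effective number of summands is $\Theta(\tau)$. Under Assumption \ref{asum:primitives-driving} ($R^\star(w)<\infty$), Proposition \ref{prop:johnson_inforbd} gives $1\le \Sigma(\calW)I(\calW)\le 1+\Ocal(1/\tau)$, i.e. $\Sigma^{-1}(\calW)\le I(\calW)\le \Sigma^{-1}(\calW)(1+\Ocal(1/\tau))$. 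Combining this with the observation-term inequality and monotonicity gives $J_\infty\le \Phi^{\infty}$, the fixed point of $\Phi_{\bar c,\,\Sigma^{-1}(\calW)(1+\Ocal(1/\tau))}$, which differs from $\bar J_\infty$ only through a relative perturbation of size $\Ocal(1/\tau)$ in $a$. Implicit differentiation of the fixed-point equation shows the relative sensitivity $\frac{a}{\bar J_\infty}\frac{\partial \bar J_\infty}{\partial a}$ is bounded by an absolute constant (at most $\tfrac12$ to leading order, using $\bar J_\infty\gtrsim\sqrt{\bar c\,a}$), so $J_\infty\le \bar J_\infty(1+\Ocal(1/\tau))$ and therefore $1/J_\infty\ge \bar J_\infty^{-1}-\Ocal(1/(\tau\bar J_\infty))$. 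When $w$ is Gaussian, $I(\calW)=\Sigma^{-1}(\calW)$ exactly and this step is error-free, giving the final claim $MSE^\star\ge\bar J_\infty^{-1}$.

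Finally, to pin down the order of $\bar J_\infty$ I would solve $\bar J=\bar c+\frac{\bar J a}{\bar J+\gamma^{2\tau}a}$, a quadratic whose positive root satisfies $\bar J_\infty=\Theta(\bar c+\sqrt{\bar c\,a})$. Direct geometric-series evaluations give $\bar c=\mathbf{e}'I(\Vbb)\mathbf{e}=\Theta(\tau/s_N^2)$ and $a=\Sigma^{-1}(\calW)=\Theta(N/\tau)$, whence $\sqrt{\bar c\,a}=\Theta(\sqrt N/s_N)$ and $\bar J_\infty=\Theta(\tau/s_N^2+\sqrt N/s_N)$; the floor at $1$ reflects that the stationary information is at least a constant because $\Var(X_t)=\Theta(1)$ and $J_0=I(X_0)$. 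The main obstacle is the error-propagation step of the third paragraph: one must show that the $\Ocal(1/\tau)$ relative error from the Fisher-information CLT is not amplified by the recursion, which is precisely what the bounded relative sensitivity of the Riccati fixed point delivers. A secondary point requiring care is verifying that the effective sample size in Johnson's non-i.i.d. Fisher CLT is genuinely $\Theta(\tau)$ for the geometrically weighted sum $\calW_k$, which is where the hypothesis $\tau=\Ocal(N)$ enters.
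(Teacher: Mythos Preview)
Your outline matches the paper's proof in structure and in both key ingredients: the matrix Stam/Zamir convolution inequality to pass from $\mathbf{e}'I(\calV)\mathbf{e}$ to $\mathbf{e}'I(\Vbb)\mathbf{e}$ (the paper carries this out carefully via the block decomposition $\calV=[\Vbb^-+\Wbb,\,s_Nv_{\tau k}]$ and the full-row-rank Zamir bound, since the last coordinate has no $w$-contamination), and Johnson's Fisher-information CLT to replace $I(\calW)$ by $\Sigma^{-1}(\calW)(1+\Ocal(1/\tau))$. The only methodological difference is in the fixed-point comparison: you propose an implicit-differentiation sensitivity bound on $\partial\bar J_\infty/\partial a$, whereas the paper writes out both quadratic roots explicitly and bounds $|\tilde J_\infty-\bar J_\infty|$ term by term. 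Your route is cleaner; the paper's is more concrete and makes the $\Theta(\sqrt N/s_N)$ contribution visible without appeal to $\bar J_\infty\gtrsim\sqrt{\bar c\,a}$.

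One correction: your justification for the ``floor at $1$'' in $\bar J_\infty=\Theta(\max\{\tau/s_N^2+\sqrt N/s_N,1\})$ is wrong. The fixed point of the Riccati map does not depend on $J_0=I(X_0)$ at all, so the initial information cannot be the source. The constant floor comes from a term you dropped in the quadratic: the positive root is $\bar J_\infty=\tfrac12\bigl[\bar c+(1-\gamma^{2\tau})a+\sqrt{(\bar c+(1-\gamma^{2\tau})a)^2+4\gamma^{2\tau}\bar c\,a}\bigr]$, and here $(1-\gamma^{2\tau})a=(1-\gamma^{2\tau})\Sigma^{-1}(\calW)=N(1-\gamma^2)=2+\ocal(1)$. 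This $\Theta(1)$ term is exactly what produces the floor, and it is also needed to make your sensitivity argument airtight (it guarantees the denominator $2\bar J_\infty-\bar c-(1-\gamma^{2\tau})a$ in $\partial\bar J_\infty/\partial a$ is bounded below). With that fix, your argument goes through.
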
 
\vspace*{0.4cm} 

The corollary below states that the lower bound holds not only for the ``batched'' times but holds in stationarity for the period-by-period MSE. 

\begin{corollary}\label{cor:CRLB}
Let $MSE_t^{\star}$ be the best possible MSE for estimating $X_t$. Then, for $s_N=\Omega(1)$ and for $1\ll \tau\lesssim N$,
$$ MSE_{\infty}^{\star} \geq \frac{1}{\bar{J}_{\infty}} - \Ocal\left(\frac{1}{\tau \bar{J}_{\infty}}\right),$$ 
where $\bar{J}_{\infty}$ is as in Theorem \ref{thm:CRLB}.
\end{corollary}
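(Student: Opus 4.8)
The plan is to deduce the period-by-period bound from the batch bound of Theorem \ref{thm:CRLB} by \emph{realigning} the batch grid so that the last batch ends exactly at the target time $t$, while keeping every observation $Y_1,\dots,Y_t$ inside the batched reformulation. The only freedom I use is where the batch boundaries fall; I never discard observations. This is essential, because dropping observations enlarges the MSE, so the posterior CRLB for the thinned problem would bound only that larger quantity and not $MSE_t^{\star}$ — the direction of the Cram\'er--Rao inequality forbids it.

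Concretely, I fix a large $t$ and write $t=k\tau+r$ with $k=\lfloor t/\tau\rfloor$ and $r\in\{0,\dots,\tau-1\}$. I tile $\{1,\dots,t\}$ by one leading (possibly partial) block of length $r$ followed by $k$ full batches of length $\tau$, the last terminating at $t$. Applying the Tichavsky recursion \cite{tichavsky1998posterior} to this reformulation — initialized at $J_0=I(X_0)$, advanced once through the leading block to some value $\tilde J$, and then advanced through the $k$ full batches — gives a valid bound $MSE_t^{\star}\ge 1/J_{\mathrm{final}}(t)$, since all of $Y_1,\dots,Y_t$ are retained. Here $J_{\mathrm{final}}(t)=f^{\,k}(\tilde J)$, where $f(J)={\bf e}'I(\calV){\bf e}+J I(\calW)/(J+\gamma^{2\tau}I(\calW))$ is exactly the per-batch map of \eqref{eq:tichavsky}. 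The hypotheses $s_N=\Omega(1)$ and $1\ll\tau\lesssim N$ keep us within the scope of Theorem \ref{thm:CRLB}.

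The key structural fact is that $f$ is increasing and concave on $(0,\infty)$, maps $(0,\infty)$ into the bounded interval $\big({\bf e}'I(\calV){\bf e},\,{\bf e}'I(\calV){\bf e}+I(\calW)\big)$, and has a unique positive fixed point $J_\infty$ to which $f^{\,k}(J_0')$ converges monotonically as $k\to\infty$ for \emph{every} nonnegative initialization $J_0'$. Consequently the leading partial block and the prior $J_0$ affect only the (finite) starting value $\tilde J$ of the full-batch iteration and are washed out: since $k=\lfloor t/\tau\rfloor\to\infty$ as $t\to\infty$, we have $J_{\mathrm{final}}(t)\to J_\infty$. Passing to the liminf gives $MSE^{\star}_{\infty}=\liminf_{t}MSE_t^{\star}\ge 1/J_\infty$. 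Finally, the proof of Theorem \ref{thm:CRLB} — which replaces the Fisher informations by reciprocal variances via Proposition \ref{prop:johnson_inforbd} — shows $1/J_\infty\ge 1/\bar J_\infty-\Ocal(1/(\tau\bar J_\infty))$, and chaining the two inequalities yields the claim.

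The main obstacle is the third step: establishing that the information recursion converges to its fixed point from an arbitrary, partial-batch-induced initialization, so that the realignment is harmless in the limit. This is exactly where the monotone–concave structure of the Riccati map $f$ does the work, and it is also what justifies shifting the batch grid freely \emph{as long as no observation is excluded}. A secondary point demanding care is the direction of the CRLB: because only the full-observation problem is lower-bounded by its own posterior CRLB, the leading segment must be retained as a genuine (shorter) batch rather than dropped, which is what forces the partial-block construction instead of a simple offset of the batch indices.
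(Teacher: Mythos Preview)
Your proposal is correct and is essentially the same argument as the paper's: both realign the batch grid so that a partial leading block of length $r=t\bmod\tau$ absorbs the offset while keeping all of $Y_1,\dots,Y_t$, then feed the resulting initialization into the length-$\tau$ recursion \eqref{eq:tichavsky} and invoke convergence of that Riccati map to its unique fixed point $J_\infty$ regardless of the starting value. The paper phrases the last step as $J_{\infty,m}=J_\infty$ for every first-block length $m$ and then takes a minimum over $m\in[\tau]$, whereas you argue it directly via the monotone--concave structure of $f$; these are the same idea, and your explicit justification of the global convergence of $f^{\,k}$ is a welcome addition to what the paper treats as a standard fact in a footnote.
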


\begin{proof}
In our construction we can take the first batch to be of a different size. Namely, fix $m\in [\tau+1]$ and let $X_1^B = X_m$, For $l\geq 1$, let $X_l^B= X_{m+(l-1)\tau}$. For each fixed $m$, it holds 
$J_{\infty,m}$ is the same as $J_{\infty}$ (which is the special case with $m=\tau$). Thus, as in Theorem \ref{thm:CRLB}, $\lim_{k\uparrow \infty}MSE_{k,m}^{\star}\geq \frac{1}{J_{\infty}}+\Ocal(1/\tau)$. 

Fix a time $t> \tau$ and let $m(t),k(t)$ be such that $m(t)+k(t)\tau=t$ for some $k(t)$; in which case $k(t)=t-m(t)$. Then, 
$$MSE_t^{\star}= MSE_{k(t),m(t)}^{\star}
\geq \min_{m\in[\tau]}MSE_{k(t),m}^{\star}.$$ The first equality is standard and follows from the fact that the conditional expectations (which achieve the optimal MSE) are identical in the sequential (period-by-period) or the batched setting. In turn, 
    $$\liminf_{t\uparrow \infty} MSE_t^{\star}\geq \lim_{k\uparrow \infty}\min_{m\in[\tau]}MSE^{\star}_{k,m}\geq \frac{1}{J_{\infty}}- \Ocal\lp \frac{1}{\tau \bar{J}_{\infty}}\rp.$$
\end{proof}

For both the centered and non-centered filters setting $\tau=s_N$ in Theorem \ref{thm:CRLB} and using $1\lesssim s_N\lesssim\sqrt{N}$ (balanced regime) gives us a lower bound that is---up to $1/\sqrt{N}$---equal to $1/\bar{J}_{\infty}$. The upper bounds, however, will be different and explain the stronger result we have for the centered filter. We now turn to proving Theorem \ref{thm:CRLB}.

First, the following lemma collects several simple algebraic facts about $\Sigma(\calW)$, ${\bf e}'I(\mathbb{V}){\bf e}$ and $\Ex[X_t^k]$ that are frequently used in our proofs.

\begin{lemma} For \label{lem:algebra} 
 $\tau=\Ocal(N)$ 
$$\frac{1-\gamma^{2\tau}}{1-\gamma^2} = \Theta(\tau), ~~\Sigma^{-1}(\calW) = N\frac{1-\gamma^2}{1-\gamma^{2\tau}} = \Theta\lp\frac{N}{\tau}\rp,~ \Sigma(\calW) = \Theta\lp\frac{\tau}{N}\rp.$$
$${\bf e}'I(\mathbb{V} ){\bf e} = \frac{\gamma^2}{s_N^2\gamma^{2\tau}}I(v)\frac{1-\gamma^{2\tau}}{1-\gamma^2}=\Theta\lp\frac{\tau}{s_N^2}\rp,$$

Finally, with $X_0=0$ and for all integer $k$ and all $t$ (including steady-state)  $$\Ex[X_t^{2k}]=\Ocal(1).$$

\end{lemma}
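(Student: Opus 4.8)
The plan is to dispatch the four algebraic identities (the displayed facts about $\Sigma(\calW)$ and ${\bf e}'I(\mathbb{V}){\bf e}$) first, since all of them reduce to elementary estimates on geometric sums with ratio $\gamma = 1-1/N$, and then to handle the moment bound $\Ex[X_t^{2k}]=\Ocal(1)$ separately via a standard moment inequality for sums of independent variables.

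For the first three identities I would start from $1-\gamma^2 = 2/N - 1/N^2 = \Theta(1/N)$ and from the observation that, because $\tau=\Ocal(N)$, the quantity $2\tau/N$ stays in a fixed bounded interval, so that $\gamma^{2\tau} = (1-1/N)^{2\tau} = \exp(-2\tau N^{-1}(1+\ocal(1)))$. The key elementary fact is that $1-e^{-x} = \Theta(x)$ uniformly for $x$ in any fixed bounded interval (upper bound $1-e^{-x}\le x$; lower bound by concavity of $1-e^{-x}$ on $[0,C]$). Applying this with $x = 2\tau/N$ gives $1-\gamma^{2\tau} = \Theta(\tau/N)$, and dividing by $1-\gamma^2 = \Theta(1/N)$ yields $\frac{1-\gamma^{2\tau}}{1-\gamma^2}=\Theta(\tau)$. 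The statements for $\Sigma(\calW)$ and $\Sigma^{-1}(\calW)$ then follow by substituting this into the definition $\Sigma(\calW)=\frac{1}{N}\frac{1-\gamma^{2\tau}}{1-\gamma^2}$ from \eqref{eq:def-Wcal}. For the term ${\bf e}'I(\mathbb{V}){\bf e}$, I would first note that $I(\mathbb{V})$ is diagonal: the coordinates of $\mathbb{V}_k$ are scalar multiples of distinct, independent observation noises $v_t$, so the information matrix factorizes. Using the location-family scaling $I(av)=I(v)/a^2$ together with the fact that coordinate $t$ carries the factor $\gamma^{\tau k - t}s_N$, the diagonal entries are $I(v)/(\gamma^{2(\tau k-t)}s_N^2)$, so ${\bf e}'I(\mathbb{V}){\bf e} = \frac{I(v)}{s_N^2}\sum_{u=0}^{\tau-1}\gamma^{-2u}$; summing this geometric series and rewriting in terms of $1-\gamma^{2\tau}$ reproduces exactly the stated closed form $\frac{\gamma^2}{s_N^2\gamma^{2\tau}}I(v)\frac{1-\gamma^{2\tau}}{1-\gamma^2}$, and its order $\Theta(\tau/s_N^2)$ follows from the first identity together with $\gamma^2=\Theta(1)$ and $\gamma^{2\tau}=\Theta(1)$.

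For the moment bound, I would write the solution of the scalar recursion with $X_0=0$ as $X_t = \frac{1}{\sqrt N}\sum_{j=0}^{t-1}\gamma^j w_{t-1-j}$, a weighted sum of i.i.d.\ mean-zero variables with weights $\gamma^j/\sqrt N$. Its variance is $\frac{1}{N}\frac{1-\gamma^{2t}}{1-\gamma^2}\le \frac{1}{N(1-\gamma^2)}=\Theta(1)$ uniformly in $t$. To pass to the $2k$-th moment I would invoke Rosenthal's inequality, bounding $\Ex[X_t^{2k}]$ by a constant (depending only on $k$) times $\big(\sum_j \Ex[\xi_j^2]\big)^k + \sum_j \Ex[|\xi_j|^{2k}]$ with $\xi_j=\gamma^j w_{t-1-j}/\sqrt N$. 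The first term is the $k$-th power of the bounded variance; the second equals $N^{-k}\Ex[|w|^{2k}]\frac{1-\gamma^{2kt}}{1-\gamma^{2k}}=\Ocal(N^{1-k})=\Ocal(1)$, since $1-\gamma^{2k}=\Theta(1/N)$ for fixed $k$. This uses that all moments of $w$ are finite, which holds under Assumption \ref{asum:primitives-driving} because $R^\star(w)<\infty$ (see the Remark following Lemma \ref{lem:toonoisy}). Taking a supremum over $t$ delivers the uniform bound, which in particular covers the steady state.

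Each step is routine; the one point that needs genuine care is making every $\Theta$-estimate uniform across the full admissible range $1\lesssim \tau\lesssim N$. In particular, $1-\gamma^{2\tau}=\Theta(\tau/N)$ must cover both the regime $\tau=\ocal(N)$, where $\gamma^{2\tau}\to 1$ and the bound is driven by the linearization $1-e^{-x}\approx x$, and the regime $\tau=\Theta(N)$, where $\gamma^{2\tau}$ is a constant strictly between $0$ and $1$. The single inequality $1-e^{-x}=\Theta(x)$ on a bounded interval handles both at once, and it is exactly the boundedness of $2\tau/N$ (guaranteed by $\tau=\Ocal(N)$) that simultaneously keeps $\gamma^{2\tau}$ bounded away from $0$, which is what the order computation for ${\bf e}'I(\mathbb{V}){\bf e}$ relies on.
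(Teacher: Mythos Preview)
Your argument is correct. For the geometric-sum identities the route is essentially the paper's: the paper reaches $\frac{1-\gamma^{2\tau}}{1-\gamma^2}=\tau+\Ocal(\tau^2/N)$ via a second-order Taylor expansion of $x^{2\tau}$ at $x=1$, while you go through $(1-1/N)^{2\tau}\approx e^{-2\tau/N}$ and $1-e^{-x}=\Theta(x)$ on a bounded interval. Your phrasing is slightly more careful about the uniformity across $\tau\lesssim N$ (the Taylor remainder $\Ocal(\tau^2/N)$ is the same order as the main term when $\tau=\Theta(N)$, so one really needs the direct bound $\gamma^{2\tau}=\Theta(1)$ that you make explicit). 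The computation of ${\bf e}'I(\mathbb{V}){\bf e}$ is identical in both.

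For the moment bound the approaches differ. The paper's proof writes $\Ex[X_t^{2k}]$ as a single sum $\sum_i \gamma^{3(t-1-i)}\Ex[w_i^{2k}]/N^k$, which is only a valid identity for $k=1$; for $k\ge 2$ this drops the dominant cross terms and would yield the too-small scale $\Theta(N^{1-k})$ rather than $\Theta(1)$ (think of the Gaussian case, where $\Ex[X_t^4]=3\Var(X_t)^2=\Theta(1)$). Your Rosenthal argument is the right fix: the variance term $(\sum_j\Ex[\xi_j^2])^k=\Theta(1)$ is the one that actually determines the order, and the higher-moment sum $\sum_j\Ex[|\xi_j|^{2k}]=\Ocal(N^{1-k})$ is subdominant. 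This gives the stated $\Ocal(1)$ bound correctly and uniformly in $t$.
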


\begin{proofof}{Theorem \ref{thm:CRLB}} 
Let us decompose $\Vcal_k=\mathbb{V}_k^-+\mathbb{W}_k+{\bf e}_ks_N v_{\tau k}$ 
where ${\bf e}_k$ is the unit vector that has $1$ in the $\tau$'s entry and $0$ elsewhere, and 
\begin{align*}
\mathbb{V}_k^-&=\left(\gamma^{\tau k-t}s_N v_t,~ t\in [\tau(k-1)+1,\ldots,\tau k-1]\right),~~ 
\mathbb{V}_k=[\mathbb{V}_k^{-},s_N v_{\tau k}]\\ 
\mathbb{W}_k& = \left(-\frac{1}{\sqrt{N}} \sum_{s=1}^{\tau k-t}\gamma^{s-1} w_{\tau k-s}, ~ t\in [\tau(k-1)+1,\ldots,\tau k-1]\right).
\numberthis \label{eq:v_kw_k}
\end{align*}

Both vectors, $\mathbb{V}_k^-$ and $\mathbb{W}_k$, are of dimension $\tau-1$. Because the sequence is i.i.d we drop the index $k$. 

The vector $\mathbb{V}^-$ has independent components but the components of $\mathbb{W}$
are dependent;  $\mathbb{V}^-$ and $\mathbb{W}$ are independent of each other. The information matrix of $\calV$ is then the block matrix
$$I(\calV) = \left[\begin{array}{cc}  I(\mathbb{V}^{-}+\mathbb{W}) & 0 \\
0 & I(v)/s_N^2.
\end{array}\right].$$ 

The Fisher information matrix is the covariance matrix of the individual score functions; it is trivially positive semi definite. The Fisher information matrix of $\Vbb^{-}$, denoted by $I(\mathbb{V}^-)$ (recall that the dependence on $k$ is dropped due to stationarity) is of the following form
\begin{align}
    I(\mathbb{V}^-) = \begin{bmatrix}
        \frac{I(v)}{\gamma^{2(\tau -1)}s_N^2} & 0 &0 &\dots\\
        0& \frac{I(v)}{\gamma^{2(\tau -2)}s_N^2 }& 0 & \dots\\
        \vdots & \vdots & \vdots&
    \end{bmatrix}
\end{align}
is diagonal, with the $i^{th}$ element equal to $\frac{I(v)}{\gamma^{2(\tau  - i)}s_N^2}$; it is invertible and positive definite. The vector $\mathbb{W}_k$ is an invertible linear transformation $A$ of the vector ${\bf w}_k = (w_{\tau(k-1)+1},\ldots ,w_{\tau k -1})$. In other words, $\Wbb 
= A\, \mathbf{w}$ where $A\in \mathbb{R}^{(\tau - 1)\times(\tau - 1)}$ is the \emph{invertible} upper-triangular matrix 
\[
A(\gamma,\tau)
\;:=\;-\frac{1}{\sqrt{N}}
\begin{bmatrix}
\gamma^{\tau-2} & \gamma^{\tau-3} & \cdots & \gamma & 1\\
0 & \gamma^{\tau-3} & \cdots & \gamma & 1\\
\vdots & \ddots & \ddots & \vdots & \vdots\\
0 & \cdots & 0 & \gamma & 1\\
0 & \cdots & 0 & 0 & 1
\end{bmatrix}
\in \mathbb{R}^{(\tau-1)\times(\tau-1)}.
\]

The vector-valued random variable
$\mathbb{V}^{-} + \mathbb{W}$ is itself a linear  transformation of the $2(\tau-1)$-dimensional vector $({\bf v}_k,{\bf w}_k)$ where ${\bf w}_k$ is as defined above and ${\bf v}= (\gamma^{\tau k-t} s_N v_{t}, t\in [\tau(k-1),\ldots,\tau k-1])$; that is 
$$\mathbb{V}^{-}_k+\mathbb{W}_k= B({\bf v},{\bf w}), \mbox{ where } B:=[\mathbb{I}_{\tau-1}, A],$$ 
with $\mathbb{I}_{\tau-1}$ the $(\tau-1)\times (\tau-1)$ identity matrix. The matrix $B$ has full row rank. 

For matrices $M_1,M_2$, we say $M_1\geq M_2$ if $M_1-M_2$ is positive semi-definite and $M_1>M_2$ if $M_1-M_2$ is positive definite. A version of the Fisher information (convolution) inequality \cite[Corollary 1]{zamir} (see also \cite{bercher2002matrixfisher}) states that 
$$I(\mathbb{V}^{-}+\mathbb{W}) \preceq (BI^{-1}({\bf v},{\bf w})B^T)^{-1},$$ 
because all random variables in $({\bf v},{\bf w})$ are independent of each other, $I({\bf v},{\bf w})$ is a diagonal matrix 
with the first $\tau-1$ entries equal to $I(v)/(\gamma^{2(\tau k-t)}s_N^2)$ for the $\tau k-t$ entry and the latter $\tau-1$ equal to $N/I(w)$. 

We thus have that 
\begin{align*} 
&BI^{-1}({\bf v},{\bf w})B^T 
= I^{-1}({\bf v}) + AI^{-1}({\bf w})A^T
= I^{-1}(\mathbb{V}^-) + AI^{-1}({\bf w})A^T.
\end{align*} 

Because $A$ is invertible  and $I({\bf w})$ has strictly positive diagonal entries, the matrix $AI^{-1}({\bf w})A^T$ is positive definite (PD). We conclude that  
\begin{align*} 
I(\mathbb{V}^-+\mathbb{W})
&\preceq (BI^{-1}({\bf v},{\bf w})B^T)^{-1} \\
&= (I^{-1}(\mathbb{V}^-) + AI^{-1}({\bf w})A^T)^{-1} \\
&\preceq I(\mathbb{V}^-).
\end{align*} 

In the last inequality we used a fact for PD matrices (that for  $M_1,M_2$  PD matrices it holds $(M_1+M_2)^{-1} \preceq M_1^{-1}, (M_1+M_2)^{-1}\preceq M_2^{-1}$).

We arrive at 
$$I(\calV) = \left[\begin{array}{cc}  I(\mathbb{V}^-+\mathbb{W}) & 0 \\
0 & I(v)/s_N^2
\end{array}\right]\preceq\left[\begin{array}{cc}  I(\mathbb{V}^-) & 0 \\
0 & I(v)/s_N^2
\end{array}\right]$$ 
and, by the definition of PSD, that 
\begin{equation} {\bf e}'I(\calV){\bf e}\leq {\bf e}'I(\mathbb{V}){\bf e};\label{eq:VtoV}\end{equation} 
here $\mathbb{V}=[\mathbb{V}^-,s_N v_{\tau}]$. Recalling \cref{eq:tichavsky} we then have  
$$      J_{k+1}  =  {\bf e}'I(\calV){\bf e}+\frac{J_kI(\calW)}{J_k+\gamma^{2\tau}I(\calW)}, $$ we have $$J_{k+1} \leq {\bf e}'I(\mathbb{V}){\bf e}+ \frac{J_kI(\calW)}{J_k+\gamma^{2\tau}I(\calW)}.$$

Let $\tilde{J}$ solve the recursion 
$$\tilde{J}_{k+1} = {\bf e}'I(\mathbb{V}){\bf e}+ \frac{\tilde{J}_kI(\calW)}{\tilde{J}_k+\gamma^{2\tau}I(\calW)}.$$
Using \eqref{eq:VtoV}, we have that $\tilde{J}_0=J_0$, implies, by induction, $J_k\leq \tilde{J}_k$ for all $k$. This is because, if $J_k\leq \tilde{J}_k$, then 
$$J_{k+1}- \tilde{J}_{k+1} =  {\bf e}'I(\calV){\bf e}-{\bf e}'I(\mathbb{V}){\bf e} + 
\frac{\gamma^{2\tau}I^2(\calW)(J_k-\tilde{J}_k)}{(J_k+\gamma^{2\tau}I(\calW))(\tilde{J}_k+\gamma^{2\tau}I(\calW))}\leq 0.$$ 

In turn, $MSE^{\star}_k \geq \frac{1}{J_k}$ implies now
$$MSE^{\star}_k\geq \frac{1}{\tilde{J}_k}.$$

Let $\bar{J}_{\infty}$ be, as in the statement of the theorem, the stationary solution of the recursion 
$$\bar{J}_{k+1} = {\bf e}'I(\mathbb{V}){\bf e} + \frac{\bar{J}_k\Sigma^{-1}(\calW)}{\bar{J_k}+\gamma^{2\tau} \Sigma^{-1}(\calW)}.$$ 

We will prove that 
\begin{equation}  \label{eq:tildeJJgap} |\tilde{J}_{\infty}-\bar J_{\infty}|=\Theta\left(\frac{1}{\tau}\frac{\sqrt{N}}{s_N}\right),~~\bar{J}_{\infty}=\Theta\left(\max\left\{\frac{\tau}{s_N^2}+\frac{\sqrt{N}}{s_N},1\right\}\right),\end{equation} which guarantees that for $\tau\gg 1$, $|\tilde{J}_{\infty}-\bar{J}_{\infty}| = o(\bar{J}_{\infty})$. In turn, as stated in Theorem\ref{thm:CRLB},  \begin{align*}
    \left|\frac{1}{\tilde J_{\infty}}-\frac{1}{\bar J_{\infty}}\right| & = \frac{|\tilde{J}_{\infty}-\bar J_{\infty}|}{\tilde{J}_{\infty}\bar J_{\infty}}= \Ocal\lp \frac{1}{\tau \bar{J}_{\infty}} \rp.
\end{align*} for $tau\gg 1$, 
and 
$$MSE^{\star}\geq \frac{1}{J_k}\geq \frac{1}{\tilde J_k}\geq \frac{1}{\bar{J}_k} -\Ocal\left(\frac{1}{\tau \bar{J}_{\infty}}\right).$$

\textbf{Proof of \cref{eq:tildeJJgap}:} The remainder of this section is dedicated to the proof of \eqref{eq:tildeJJgap}. Below $\calW$ is as defined in \ref{sec:GFcenteredUB} and depends on $\tau$. 
\begin{lemma}[Fisher information CLT for $\calW$] 
$$1\leq I(\calW)\Sigma(\calW)= 1 + \Ocal\left(\frac{1}{\tau}\right).$$
\label{lem:CLT_gn}
\end{lemma}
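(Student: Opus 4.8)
The plan is to read off the claimed rate from the Fisher-information CLT, Proposition \ref{prop:johnson_inforbd}, by viewing $I(\calW)\Sigma(\calW)$ as the \emph{standardized} Fisher information of the variance-normalized version of $\calW$. The lower bound $I(\calW)\Sigma(\calW)\ge 1$ is immediate from the information inequality $\Var(\cdot)I(\cdot)\ge 1$ recalled in the notation section, with equality iff $\calW$ is Gaussian; so everything reduces to the upper bound $I(\calW)\Sigma(\calW)\le 1+\Ocal(1/\tau)$.

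First I would normalize. Set $\tilde\calW := \calW/\sqrt{\Sigma(\calW)}$, so that $\Var(\tilde\calW)=1$ and, using $I(aX)=I(X)/a^2$ for location families, $I(\tilde\calW)=\Sigma(\calW)\,I(\calW)$. Dropping the index $k$ by stationarity and relabeling the i.i.d. driving noises as $u_1,\dots,u_\tau$ (mean zero, unit variance, finite restricted Poincar\'e constant by Assumption \ref{asum:primitives-driving}), I would write
$$ \tilde\calW = \sum_{s=1}^{\tau} b_s\, u_s, \qquad b_s = \frac{\gamma^{s-1}/\sqrt{N}}{\sqrt{\Sigma(\calW)}}, \qquad \sum_{s=1}^{\tau} b_s^2 = 1. $$
Thus $\tilde\calW$ is exactly a standardized weighted sum of i.i.d. variables satisfying the hypotheses of the non-i.i.d. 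Fisher-information CLT, and I would invoke Proposition \ref{prop:johnson_inforbd}, whose applicability is guaranteed by $R^{\star}(w)<\infty$.

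It then remains to verify that the remainder in Proposition \ref{prop:johnson_inforbd} is $\Ocal(1/\tau)$. That remainder is controlled by the weight profile $\{b_s^2\}$ through statistics such as $\sum_s b_s^4=\sum_s (b_s^2)^2$ and $\max_s b_s^2$, scaled by $R^{\star}(w)$. The crucial point is that in the regime $\tau=\Ocal(N)$ all weights are \emph{comparable}: since $\gamma=1-1/N$ we have $\gamma^{\tau-1}=(1-1/N)^{\tau-1}\ge c>0$ for a constant $c$ (because $\tau/N=\Ocal(1)$), hence $\gamma^{s-1}\in[c,1]$ for every $s\le\tau$. Combined with $\Sigma(\calW)=\Theta(\tau/N)$ from Lemma \ref{lem:algebra}, this forces $b_s^2=\Theta(1/\tau)$ uniformly in $s$, so that
$$ \max_{s}b_s^2=\Theta\Big(\tfrac1\tau\Big), \qquad \sum_{s=1}^{\tau} b_s^4=\Theta\Big(\tfrac1\tau\Big). $$
Feeding these estimates into Proposition \ref{prop:johnson_inforbd} yields $I(\tilde\calW)-1=\Ocal(1/\tau)$, i.e. $I(\calW)\Sigma(\calW)=1+\Ocal(1/\tau)$.

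The main obstacle is essentially bookkeeping: matching the precise form in which Proposition \ref{prop:johnson_inforbd} states its remainder to the weight statistics above, so that the $\Theta(1/\tau)$ effective-size estimate translates cleanly into the claimed rate. The one genuinely substantive point is the uniform comparability of the weights, which hinges on the restriction $\tau=\Ocal(N)$ keeping $\gamma^{\tau-1}$ bounded away from $0$; outside this regime the geometric weights would decay, the effective number of terms would saturate near $\Theta(N)$ rather than $\Theta(\tau)$, and the rate would degrade accordingly.
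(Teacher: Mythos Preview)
Your proposal is correct and follows essentially the same route as the paper: both invoke Proposition~\ref{prop:johnson_inforbd}, exploit scale-invariance of the standardized Fisher information, and use $\tau=\Ocal(N)$ to keep $\gamma^{s-1}$ bounded below so that all weights are uniformly $\Theta(1/\tau)$. The paper computes $\alpha_i/(1+c_i)=\Ocal(1/\tau^2)$ directly rather than summarizing via $\sum_s b_s^4$, but since $\alpha_s=b_s^2$ and $c_s=(1-b_s^2)/(2b_s^2 R^\star(w)I(w))$ one has $\alpha_s/(1+c_s)\le 2R^\star(w)I(w)\,b_s^4$, which is exactly the bookkeeping you anticipate.
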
 
\noindent Recall that all lemma proofs appear in the Appendix \ref{sec:auxproofs}.

Let 
\[
\delta=\delta_{N,\tau} = I(\calW)\Sigma(\calW)-1.
\] 
Notice that $\delta>0$ for all distributions except for Gaussian where $\delta=0$. We write $I(\calW) = \Sigma^{-1}(\calW)(1+\delta)$ and note that, by Lemma \ref{lem:CLT_gn}, $\delta=\Ocal(1/\tau)$.

We are comparing the recursion
\begin{align}
\bar{J}_{k+1} = {\bf e}'I(\mathbb{V}){\bf e} + \frac{\bar{J}_k\Sigma^{-1}(\calW)}{\bar{J_k}+\gamma^{2\tau}\Sigma^{-1}(\calW) },\numberthis\label{eq:approximate-recursion}
\end{align} to the recursion 
\begin{align}\tilde{J}_{k+1} & = {\bf e
}'I(\mathbb{V}){\bf e}+ \frac{\tilde{J}_kI(\calW)}{\tilde{J}_k+\gamma^{2\tau}I(\calW)}\nonumber\\ & ={\bf e}'I(\mathbb{V}){\bf e}+
\frac{\tilde{J}_k\Sigma^{-1}(\calW)(1+\delta)}{\tilde{J}_k+\gamma^{2\tau}\Sigma^{-1}(\calW)(1+\delta)}.\label{eq:tildeJrecursion}
\end{align}

To bound $|\bar J_{\infty}-\tilde J_{\infty}|$, we solve for the steady-state in both equations \eqref{eq:approximate-recursion} and \eqref{eq:tildeJrecursion}. \footnote{We use throughout the standard fact that a sequence of the form $a_{k+1} = b + \frac{a_k c}{a_k+\alpha c},$ for $b,c\geq 0$ and $\alpha \in(0,1)$ is convergent to the unique fixed point of the recursion.} These result in quadratic equations with the solutions (only the positive roots are relevant)
\begin{align*}
    \bar J & = \frac{{\bf e}'I(\mathbb{V}){\bf e}+(1-\gamma^{2\tau})\Sigma^{-1}(\calW)+\sqrt{\bigl({\bf e}'I(\mathbb{V}){\bf e}+(1-\gamma^{2\tau})\Sigma^{-1}(\calW) \bigr)^2+4{\bf e}'I(\mathbb{V}){\bf e}\gamma^{2\tau}\Sigma^{-1}(\calW)}}{2}\numberthis\label{eq:rec-barJinf}\\
    & =  \frac{{\bf e}'I(\mathbb{V}){\bf e}+(1-\gamma^{2\tau})\Sigma^{-1}(\calW)+\sqrt{A}}{2}, \\
    \tilde J & = \frac{{\bf e}'I(\mathbb{V}){\bf e}+(1-\gamma^{2\tau})(1+\delta)\Sigma^{-1}(\calW)+\sqrt{\bigl({\bf e}'I(\mathbb{V}){\bf e}+(1+\delta)(1-\gamma^{2\tau})\Sigma^{-1}(\calW)\bigr)^2+4{\bf e}'I(\mathbb{V}){\bf e}\gamma^{2\tau}(1+\delta)\Sigma^{-1}(\calW)}}{2}\\
    & = \frac{{\bf e}'I(\mathbb{V}){\bf e}+(1-\gamma^{2\tau})(1+\delta)\Sigma^{-1}(\calW)+\sqrt{B}}{2},
\end{align*} 

Subtracting, we have
\begin{align}
    |\bar J-\tilde J| & \leq \frac{\lvert (1-\gamma^{2\tau})\delta \Sigma^{-1}(\calW)\rvert}{2}+\left\lvert\frac{\sqrt{A}-\sqrt{B}}{2}\right\rvert.\label{eq:interim} 
\end{align} 

From Lemma \ref{lem:algebra} we have that $(1-\gamma
^{2\tau})\Sigma^{-1}(\calW)=N(1-\gamma^2) = \Theta(1),$ so that the first term in \eqref{eq:interim} has

\begin{align}
    \frac{1}{2}\left\lvert (1-\gamma^{2\tau})\delta\Sigma^{-1}(\calW)\right\rvert\overset{(i)}{=} \Ocal(\delta)\numberthis.\label{eq:J-diffterm1}
\end{align}

Multiplying and dividing the second term of \eqref{eq:interim} by $\sqrt{A}+\sqrt{B}$ we have 
\begin{align*}
   |\sqrt{A}-\sqrt{B}|= \left|\frac{A-B}{\sqrt{A}+\sqrt{B}} \right| \leq \frac{|C^2-D^2|}{\sqrt{A}+\sqrt{B}}+
   \frac{4{\bf e}'I(\mathbb{V}){\bf e}\gamma^{2\tau}\delta \Sigma^{-1}(\calW)}{\sqrt{A}+\sqrt{B}}   ,\numberthis\label{eq:J-diffterm2}
\end{align*}
where
$$C := {\bf e}'I(\mathbb{V}){\bf e}+(1-\gamma^{2\tau})\Sigma^{-1}(\calW)\leq \sqrt{A} , ~~ D:= {\bf e}'I(\mathbb{V}){\bf e}+(1+\delta)(1-\gamma^{2\tau})\Sigma^{-1}(\calW)\leq \sqrt{B}.$$
Then, 
\begin{align*}
    |C^2-D^2| & = \left|(C-D)(C+D)\right|\\
    & = (1-\gamma^{2\tau})\delta \Sigma^{-1}(\calW)(C+D)\leq (1-\gamma^{2\tau})\delta \Sigma^{-1}(\calW)(\sqrt{A}+\sqrt{B}).
\end{align*}
Using \eqref{eq:J-diffterm1} we then get
\begin{align*}
    \frac{|C^2-D^2|}{\sqrt{A}+\sqrt{B}}\leq (1-\gamma^{2\tau})\delta \Sigma^{-1}(\calW)= \Ocal(\delta)   .\numberthis\label{eq:J-diffterm3}
\end{align*}
The second term on the right-hand side of \eqref{eq:J-diffterm2} has
\[
\frac{4{\bf e}'I(\mathbb{V}){\bf e}\gamma^{2\tau}\delta \Sigma^{-1}(\calW)}{\sqrt{A}+\sqrt{B}}\leq\frac{4{\bf e}'I(\mathbb{V}){\bf e}\gamma^{2\tau}\delta \Sigma^{-1}(\calW)}{\sqrt{A}} \leq \delta  \sqrt{4{\bf e}'I(\mathbb{V}){\bf e}\gamma^{2\tau}\Sigma^{-1}(\calW)},
\]
where we trivially lower bound $A$ with $4{\bf e}'I(\mathbb{V}){\bf e}\gamma^{2\tau}\Sigma^{-1}(\calW)$. 

Substituting the bound ${\bf e}'I(\Vbb){\bf e}=\Ocal(\tau/s_N^2)$ and the value of $\Sigma(\calW) = \Var(\calW)= (1-\gamma^{2\tau})/(N(1-\gamma^2))$---both from Lemma\ref{lem:algebra}---we get \begin{align*}
    \delta  \sqrt{4{\bf e}'I(\mathbb{V}){\bf e}\gamma^{2\tau}\Sigma^{-1}(\calW)} = \Ocal\lp \delta  \frac{\sqrt{N}}{s_N}\rp.\numberthis\label{eq:Jdiff-bounds}
\end{align*}
We thus have the first required bound of \eqref{eq:tildeJJgap}. For the second bound there---that $ \bar{J}=\Theta\left(\max\left\{\frac{\tau}{s_N^2}+\frac{\sqrt{N}}{s_N},1\right\}\right)$---recall that 
\begin{align*}
    \bar J & = \frac{{\bf e}'I(\mathbb{V}){\bf e}+\Sigma^{-1}(\calW)(1-\gamma^{2\tau})+\sqrt{\bigl({\bf e}'I(\mathbb{V}){\bf e}+\Sigma^{-1}(\calW)(1-\gamma^{2\tau})\bigr)^2+4{\bf e}'I(\mathbb{V}){\bf e}\gamma^{2\tau}\Sigma^{-1}(\calW)}}{2},
\end{align*}  
and that, by Lemma \ref{lem:algebra},
\begin{align*}
    {\bf e}'I(\mathbb{V}){\bf e}=\Theta\left(\frac{\tau}{s_N^2}\right);\,  (1-\gamma^{2\tau} )\Sigma^{-1}(\calW) = (1-\gamma^2)N = 2 + \ocal(1) = \Theta \left(1\right).\numberthis\label{eq:facts}
\end{align*}
Therefore, 
$${\bf e}'I(\mathbb{V}){\bf e}+\Sigma^{-1}(\calW)(1-\gamma^{2\tau})
=\Theta\left(1+\frac{\tau}{s_N^2}\right).$$ 
Using the explicit expressions in Lemma \ref{lem:algebra} once again we have, for $\tau=\Ocal(N)$, that 
$$4{\bf e}'I(\mathbb{V}){\bf e}\gamma^{2\tau}\Sigma^{-1}(\calW)=4\gamma^2\frac{N}{s_N^2}I(v)=\Theta\lp\frac{N}{s_N^2}\rp,$$ 
so that overall 
$${\bf e}'I(\mathbb{V}){\bf e}+\Sigma^{-1}(\calW)(1-\gamma^{2\tau})+\sqrt{\bigl({\bf e}'I(\mathbb{V}){\bf e}+\Sigma^{-1}(\calW)(1-\gamma^{2\tau})\bigr)^2+4{\bf e}'I(\mathbb{V}){\bf e}\gamma^{2\tau}\Sigma^{-1}(\calW)}=\Theta\left(1+\frac{\tau}{s_N^2}+\frac{\sqrt{N}}{s_N}\right),$$ 
as stated. Finally, we note that if $w_t$ is Gaussian, $\delta\equiv 0$ and $I(\calW)=1/\Var(\calW)$ so we have $MSE^{\star}\geq 1/\bar{ J}_{\infty}$.
\end{proofof}

\section{The Goggin filter upper bound\label{sec:upperbound}}

This section will be dedicated to proving the upper bounds on the bias and variance of the Goggin filter. In particular, we will prove the following theorem:
\begin{theorem}\label{thm:upperbound} Suppose $1\ll s_N\lesssim \sqrt{N}$. Let $\hat{x}_t$ be the (random) estimator produced by the Goggin filter and $\bar J_{\infty}$ be the solution to the equation 
\begin{align*}
\bar{J}_{\infty} = {\bf e}'I(\mathbb{V}){\bf e} + \frac{\bar{J}_{\infty}\Sigma^{-1}(\calW)}{\bar{J}_{\infty}+\gamma^{2\tau}\Sigma^{-1}(\calW) },
\end{align*}  with $\tau=s_N$.  Then, in stationarity, 
\begin{align} 
\Ex[\hat{x}_t-X_t] &= \Ocal\left(\frac{1}{s_N}\right) \tag{Bias} \\
\Ex[(\hat{x}_t-X_t)^2]& = \frac{1}{\bar J_{\infty}}+ \Ocal\left(\frac{1}{s_N}\right).\tag{Variance} \end{align} 
\end{theorem}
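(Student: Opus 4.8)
The plan is to exploit that the Goggin filter is exactly the Kalman filter for the \emph{pseudo-linear} system obtained by linearizing the score transform, and then to treat the linearization error as a perturbation of the Kalman error dynamics. Writing $e_t=\hat x_t-X_t$ and using $Y_t/s_N=v_t+X_t/s_N$, a second-order Taylor expansion of $\phi$---legitimate since $\|\phi''\|_\infty<\infty$ by Assumption \ref{asum:primitives-observation}---gives
\[
s_N\phi\!\left(\tfrac{Y_t}{s_N}\right)=I(v)X_t+s_N\phi(v_t)+(\phi'(v_t)-I(v))X_t+\remainder_t,\qquad |\remainder_t|\le \tfrac12\|\phi''\|_\infty\,\tfrac{X_t^2}{s_N}.
\]
Substituting into the GF recursion and using $X_t=\gamma X_{t-1}+w_{t-1}/\sqrt N$ produces the scalar \emph{linear} error recursion
\[
e_t=\gamma\big(1-K_tI(v)\big)e_{t-1}+\big(K_tI(v)-1\big)\tfrac{w_{t-1}}{\sqrt N}+K_t\,\xi_t,\quad \xi_t:=s_N\phi(v_t)+(\phi'(v_t)-I(v))X_t+\remainder_t.
\]
Keeping only the piece $s_N\phi(v_t)$ of $\xi_t$ (mean zero, variance $R=s_N^2I(v)$, independent of the past) reproduces exactly the Kalman MSE/Riccati recursion, whose stationary value I call $P_\infty$; the remaining pieces of $\xi_t$ constitute the perturbation.

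First I would bound the bias. Taking expectations and using $\Ex[\phi(v_t)]=0$ together with the independence of $v_t$ and $X_t$ and $\Ex[X_t]=0$, both $\Ex[s_N\phi(v_t)]$ and $\Ex[(\phi'(v_t)-I(v))X_t]$ vanish, leaving only $\Ex[\remainder_t]=\Ocal(1/s_N)$ (by $\|\phi''\|_\infty<\infty$ and $\Ex[X_t^2]=\Ocal(1)$ from Lemma \ref{lem:algebra}). Solving the stationary mean-error recursion gives $|\Ex[e_\infty]|\lesssim K_\infty|\Ex[\remainder_\infty]|/(1-\gamma(1-K_\infty I(v)))$; with $K_\infty=\Theta(1/(s_N\sqrt N))$ (Lemma \ref{lem:KFPXg}) and $1-\gamma(1-K_\infty I(v))=\Theta(1/(s_N\sqrt N))$, the two gain factors cancel and yield $|\Ex[e_\infty]|=\Ocal(1/s_N)$.

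Next the variance, together with its identification with $1/\bar J_\infty$. Squaring the recursion and taking expectations, the ideal terms $\gamma^2(1-K_tI(v))^2\Ex[e_{t-1}^2]+(K_tI(v)-1)^2/N+K_t^2R$ are precisely the Kalman Riccati recursion with fixed point $P_\infty$. The correction consists of the inflation $K_t^2(\Ex[\xi_t^2]-R)$ and the cross terms $\Ex[e_{t-1}\xi_t]$, $\Ex[w_{t-1}\xi_t]$; in each, the $s_N\phi(v_t)$ and $(\phi'(v_t)-I(v))X_t$ contributions vanish by the same independence/mean-zero arguments (for the inflation, $\Ex[\xi_t^2]-R=\Ocal(1)$ follows from $\Var(\phi'(v))<\infty$, itself a consequence of $\|\phi''\|_\infty<\infty$ and a finite second moment of $v$, and from $\Ex[X_t^4]=\Ocal(1)$), so only the $\remainder_t$-pieces survive, which I would bound by Cauchy--Schwarz using the a priori estimate $\Ex[e_{t-1}^2]=\Ocal(s_N/\sqrt N)$ and $\Ex[X_t^4]=\Ocal(1)$, exactly as packaged in Lemmas \ref{lem:remainderterms1} and \ref{lem:remainderterms2}. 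The decisive point is that each per-step perturbation is amplified in stationarity by $1/(1-\gamma^2(1-K_tI(v))^2)=\Theta(s_N\sqrt N)$; carrying this through, the dominant surviving term is $\Ocal(1/(\sqrt{s_N}N^{1/4}))$, which is $\Ocal(1/s_N)$ throughout the balanced regime $s_N\lesssim\sqrt N$, so $\Ex[e_\infty^2]=P_\infty+\Ocal(1/s_N)$. Finally I would identify $P_\infty$ with $1/\bar J_\infty$: both are positive roots of Riccati-type quadratics built from the \emph{same} per-observation information $I(v)/s_N^2$ and per-step process variance $1/N$, differing only in sequential-versus-batched bookkeeping (the $\bar J_\infty$ recursion aggregates $\tau=s_N$ periods, drops the intra-batch process-noise contamination $\mathbb{W}$, and replaces $I(\calW)$ by $\Sigma^{-1}(\calW)$). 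Using the explicit $\Theta(\cdot)$ expressions of Lemma \ref{lem:algebra} I would solve both quadratics and verify their roots agree at leading order $\Theta(s_N/\sqrt N)$ and through the first correction, so that $|P_\infty-1/\bar J_\infty|=\Ocal(1/s_N)$, giving the stated bound.

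I expect the main obstacle to be the cross terms. Because $X_t$ enters both the Taylor remainder and the state-dependent correction, the pseudo-noise $\xi_t$ is correlated with the past error $e_{t-1}$ and with the current driving noise $w_{t-1}$, so the Kalman orthogonality that normally closes the Riccati recursion is lost. Showing that the non-vanishing ($\remainder_t$-) parts of these correlations are small enough to survive multiplication by the large stationary amplification $\Theta(s_N\sqrt N)$ and still land at $\Ocal(1/s_N)$ is the technical core, and is precisely what Lemmas \ref{lem:KFPXg}, \ref{lem:remainderterms1}, and \ref{lem:remainderterms2} are built to supply.
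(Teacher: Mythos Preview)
Your approach is essentially the paper's: Taylor-expand $s_N\phi(Y_t/s_N)$, recognize the resulting error recursion as the Kalman/Riccati recursion with fixed point $P_\infty$ plus a perturbation, bound the perturbation via Lemma~\ref{lem:KFPXg} and a remainder lemma, and finally compare $P_\infty$ with $1/\bar J_\infty$ by solving both quadratics explicitly (the paper packages this as a three-term triangle inequality $|\hat P_\infty-P_\infty|+|P_\infty-1/J_\infty^{GF}|+|1/J_\infty^{GF}-1/\bar J_\infty|$, with the middle term zero by definition).

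There is, however, a real conflation of the two filters in your variance argument. For the \emph{uncentered} Goggin filter treated in this theorem, only Lemma~\ref{lem:remainderterms1} applies; Lemma~\ref{lem:remainderterms2}, the a~priori bound $\Ex[e_{t-1}^2]=\Ocal(s_N/\sqrt N)$ (which is Lemma~\ref{lem:preliminary}), and the rate $\Ocal(1/(\sqrt{s_N}N^{1/4}))$ you quote all belong to the \emph{centered} analysis and rely on Assumption~\ref{asum:dissipativity}, which is not imposed here. Without that assumption you have no a~priori $\Ocal(s_N/\sqrt N)$ second-moment bound to feed Cauchy--Schwarz, so your cross-term bound is circular as written. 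The paper avoids this by having Lemma~\ref{lem:remainderterms1} retain the factor $(1+\hat P_{t-1})$ rather than consume an a~priori estimate; plugging this into the squared recursion first yields the crude $\hat P_t=\Ocal(1)$ from stability, and only then gives $|\hat P_\infty-P_\infty|=\Ocal(1/s_N)$. So your per-step perturbation is $\Ocal\big((1+\hat P_{t-1})/(s_N^2\sqrt N)\big)$, not $\Ocal(1/(\sqrt{s_N}N^{5/4}))$, and after amplification by $\Theta(s_N\sqrt N)$ you land exactly at $\Ocal(1/s_N)$---not the sharper centered rate.
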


\vspace*{0.5cm} 
\noindent \textbf{Proof of Theorem \ref{thm:main}:} Corollary \ref{thm:CRLB}, and Theorem \ref{thm:upperbound} prove Theorem \ref{thm:main}. \hfill \qed

We now prove Theorem \ref{thm:upperbound}. Recall the Goggin filter and associated notations in Definition  \ref{def:goggin-filter}, and let us consider the Bias and Variance separately starting with the former. 

\textbf{Bias:} 
Let $\tilde{x}_0=X_0=0$ 
\begin{align*}
    \tilde{x}_t & = (1-K_t I(v))\gamma \tilde{x}_{t-1} +K_t(I(v)X_t + s_N\phi(v_t)).
\end{align*} Simple induction shows that 
\begin{align*}
    \expec[\tilde{x}_t-X_t] = 0.
\end{align*}

Recall the transformed observation in time $t$
\begin{align*}
Z_{t}&:=s_N \phi \left( \frac{1}{s_N}Y_t\right) = s_N \phi \left( \frac{1}{s_N}X_t + v_t\right).
\end{align*} 
and the update equation of the Goggin filter as given by
\begin{align*}
    \hat{x}_t =& \left(1-K_tI(v)\right)\gamma \hat{x}_{t-1} + K_tZ_t,
\end{align*} with $K_t$ as specified in \eqref{eq:KPdefin}. We will show that
\begin{align*}
    \Ex[\hat{x}_t-\tilde x_t]=\Ocal(1/s_N),\numberthis\label{eq:bias-bound}
\end{align*} which implies that 
$$\Ex[\hat{x}_t-X_t]= \Ex[\hat{x}_t-\tilde{x}_t]+ \Ex[\tilde{x}_t-X_t] = \Ocal(1/s_N),$$ as stated. 

To that end, write 
\begin{equation}  
Z_{t}=s_N\phi\left(\frac{1}{s_N}X_t + v_t\right)=
s_N\phi(v_t)+\Ex[\phi'(v_t)]X_t+\remainder_t
\label{eq:Zexpansion}\end{equation}
where 
\begin{align*} 
\remainder_t  &= s_N\phi\left(\frac{1}{s_N}X_t + v_t\right) - 
s_N\phi(v_t)-\Ex[\phi'(v_t)]X_t.\end{align*} 
The difference $\hat x_t-\tilde x_t$ is then written as
\begin{align*}
\hat{x}_t - \tilde{x}_t = (1-K_t I(v))\gamma (\hat{x}_{t-1}-\tilde{x}_{t-1})+K_t\remainder_t.\numberthis\label{eq:hatminustilde}    
\end{align*}

Taylor  expansion on $\phi$ gives  
\begin{align*} \remainder_t & = X_t\lp \phi'(v_t)-\Ex[\phi'(v_t)
]\rp + \frac{1}{2}\phi''(\varphi_t) \frac{1}{s_N}X_t^2, \numberthis\label{eq:R_t}
\end{align*} where $\varphi_t\in [\min\{v_t,v_t+X_t/s_N\},\max\{v_t,v_t+X_t/s_N\}]$.
Since $|\phi''(\varphi_t)|\leq \|\phi''\|_\infty<\infty$ by Assumption \ref{asum:primitives-observation}, we have 
$$\Ex[\remainder_t]=\Ocal\left(\frac{1}{s_N} \Ex[X_t^2]\right).$$ 

The following lemma will support bounding the remainder term and its interaction with other terms in the recursion \cref{eq:hatminustilde}.
\begin{lemma} Suppose that $1\ll  s_N\lesssim \sqrt{N}$. In stationarity, $P_t,K_t$ as defined in \eqref{eq:KPdefin} satisfy $$P_{t}=\Theta\lp \frac{s_N}{\sqrt{N}}\rp, \mbox{ and } 
K_t =\Theta\left(\frac{1}{s_N\sqrt{N}}\right).$$ Also, with $X_0=0$, we have 
$$\Ex[X_t^2]=\Ocal(1), \mbox{ for all t}.$$\label{lem:KFPXg}
\end{lemma}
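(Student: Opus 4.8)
The plan is to handle the three assertions separately: the bound $\Ex[X_t^2]=\Ocal(1)$ follows from a direct second-moment computation (and is in fact already recorded in Lemma \ref{lem:algebra}), while the estimates on $P_t$ and $K_t$ reduce to analyzing the fixed point of the scalar Riccati recursion in \eqref{eq:KPdefin}.

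For the moment bound, since $X_0=0$ and $X_{t+1}=\gamma X_t+\frac{1}{\sqrt N}w_t$ with the $w_t$ i.i.d., zero-mean, unit-variance and independent of the past, I would unroll the recursion to $X_t=\frac{1}{\sqrt N}\sum_{s=0}^{t-1}\gamma^{t-1-s}w_s$ and use independence to get $\Ex[X_t^2]=\frac{1}{N}\cdot\frac{1-\gamma^{2t}}{1-\gamma^2}$. Since $N(1-\gamma^2)=2-1/N$, this is at most $\frac{1}{2-1/N}=\Ocal(1)$ uniformly in $t$, as claimed.

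For $P_t$, I would first establish convergence of the recursion and then solve for its limit. Writing $P_t=f(P_{t-1})$ with $f(P)=g(\gamma^2P+Q)$ and $g(u)=Ru/(I^2(v)u+R)$, the map $g$ is increasing, strictly concave, and bounded on $[0,\infty)$, so $f$ is increasing, concave and bounded, with $f(0)=RQ/(I^2(v)Q+R)>0$. An increasing concave self-map of $[0,\infty)$ with these properties has a unique positive fixed point $P^\star$ to which the iteration converges (monotonically), which identifies the stationary value. Clearing denominators in $P^\star=f(P^\star)$ gives the quadratic
$$ a(P^\star)^2+bP^\star-c=0,\quad a=I^2(v)\gamma^2,\ b=I^2(v)Q+R(1-\gamma^2),\ c=RQ,$$
and I would take the positive root in the rationalized form $P^\star=2c/(b+\sqrt{b^2+4ac})$, which avoids any spurious cancellation in $-b+\sqrt{b^2+4ac}$.

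The orders then fall out after substituting $R=s_N^2I(v)=\Theta(s_N^2)$, $Q=1/N$, and $1-\gamma^2=\Theta(1/N)$, which give $a=\Theta(1)$, $b=\Theta(s_N^2/N)$, $c=\Theta(s_N^2/N)$, hence $4ac=\Theta(s_N^2/N)$. The one spot needing care—and the main (modest) obstacle—is controlling the discriminant: here $b^2=\Theta(s_N^4/N^2)$, and the balanced-regime hypothesis $s_N\lesssim\sqrt N$ is precisely what forces $s_N^2/N\lesssim1$, so that $b^2\lesssim 4ac$ and therefore $b^2+4ac=\Theta(s_N^2/N)$, i.e. $\sqrt{b^2+4ac}=\Theta(s_N/\sqrt N)$. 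Since $b=\Theta(s_N^2/N)\lesssim\Theta(s_N/\sqrt N)$, the denominator $b+\sqrt{b^2+4ac}=\Theta(s_N/\sqrt N)$, whence $P^\star=2c/(b+\sqrt{b^2+4ac})=\Theta((s_N^2/N)/(s_N/\sqrt N))=\Theta(s_N/\sqrt N)$. Finally, in stationarity $K_t=P_tI(v)/R=\Theta(s_N/\sqrt N)\cdot\Theta(1)/\Theta(s_N^2)=\Theta(1/(s_N\sqrt N))$, which completes the proof.
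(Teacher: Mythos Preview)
Your proof is correct and follows essentially the same approach as the paper: both solve the stationary quadratic for $P$ arising from \eqref{eq:KPdefin}, estimate the orders of the coefficients using $R=\Theta(s_N^2)$, $Q=1/N$, $1-\gamma^2=\Theta(1/N)$, and the balanced-regime hypothesis $s_N\lesssim\sqrt N$, and then read off $K_t=P_tI(v)/R$. Your use of the rationalized root $2c/(b+\sqrt{b^2+4ac})$ and your explicit monotone--concave argument for convergence of the recursion are slight embellishments, but the substance is identical.
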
 

Substituting the bounds from Lemma \ref{lem:KFPXg} we get $|\Ex[K_t\varrho_t] |= \Ocal(1/(s_N^2\sqrt{N}))$ and, in \cref{eq:hatminustilde}, that 
$$\Ex[\hx_t-\tilde{x}_{t}] =  (1-K_tI(v))\gamma\Ex[\hx_{t-1}-\tilde{x}_{t-1}] \pm  \Ocal\left(\frac{1}{s_N^2\sqrt{N}}\right).$$

Let $f(t):=|\expec[\hat x_t-\tilde x_t]|$. Using Lemma \ref{lem:KFPXg}, we have, for a suitable constant $\Gamma$, that 
\begin{align*}
    f(t) \leq \lp 1-\frac{1}{\Gamma}\frac{I(v)}{s_N\sqrt{N}} \rp f(t-1) + \Gamma \frac{1}{s_N^2\sqrt{N}}.
\end{align*}
Iterating this inequality we get, for a re-defined $\Gamma$, 
\begin{align*}
    f(t) \leq \Gamma\frac{1}{s_N^2\sqrt{N}}\sum_{i=0}^{t-1}\lp1-\frac{I(v)}{s_N\sqrt{N}} \rp^i + \Gamma f(0)\lp1-\frac{I(v)}{s_N\sqrt{N}} \rp^{t}
\end{align*}
Completing the summation we get that $f(t) = \Ocal\lp\frac{1}{s_N}\rp$ for $f(0) =\Ocal(1)$ and all $t$. In particular, in stationarity 
$$|\Ex[\hat{x}_t-\tilde{x}_t]|=\Ocal(1/s_N).$$ which proves the bound on the bias \cref{eq:bias-bound}. \whiteqed

\textbf{Variance:} Recall that
$$\hx_t = (1-K_t I(v))\gamma\hx_{t-1}+K_t Z_t.$$ 
Let $\hat P_t := \Var(X_t-\hat x_t)$, and $J_t^{GF}:=1/P_t$ where $P_t$ as defined in \eqref{eq:KPdefin}. Recall also the definition of $\bar J_{\infty}$ from Theorem \ref{thm:CRLB}. We must prove that $\hat P_t-\frac{1}{\bar{J}_{\infty}} = \Ocal(1/s_N)$. Using the triangle inequality we write
\begin{align}
    |\hat P_\infty-1/\bar{J}_{\infty}| \leq  |\hat P_\infty-P_\infty|+|P_\infty-1/J_\infty^{GF}|+|1/J_\infty^{GF}-1/\bar J_\infty|,\label{eq:variancebound}
\end{align} where $P_{\infty}$ is the stationary point of $P_t$, and $J^{GF}_{\infty} = 1/P_{\infty}$. We will bound each term separately. The second term $|P_{\infty}- 1/J^{GF}_{\infty}|$
is then equal to $0$, and we will bound each of the other two terms. 

\noindent \textbf{Step I $|\hat P_\infty-P_\infty|$:}  Recalling the expansion of $Z_t$ in \eqref{eq:Zexpansion} we have 
\begin{align*} 
X_t-\hx_t& = X_t - (1-K_t I(v))\gamma\hx_{t-1} - K_tZ_t \\
& = (1-K_tI(v)) (X_t-\gamma\hx_{t-1})-K_ts_N\phi(v_t) - K_t\remainder_t\\ & = 
(1-K_tI(v))\lp\gamma(X_{t-1}-\hx_{t-1})+\frac{1}{\sqrt{N}}w_{t-1}\rp - K_ts_N\phi(v_t)-K_t\remainder_t
\end{align*} 
where in the third equality we use $X_t = \gamma X_{t-1}+\frac{1}{\sqrt{N}}w_{t-1}$.
Using the notation 
$$R:=s_N^2I(v), ~~ Q = 1/N,$$ we have 
\begin{align} \hat P_t & = \Var(X_t-\hat{x}_t) \nonumber\\
& = (1-K_tI(v))^2 [\gamma^{2}\hat P_{t-1} + Q] 
+K_t^2 R + K_t^2\Var(\remainder_t) + 2Cov(K_t\remainder_t,A_t), \label{eq:decompVar}
\end{align}
where $$A_t:=  
(1-K_tI(v))[\gamma (X_{t-1}-\hx_{t-1})+\frac{1}{\sqrt{N}}w_{t-1}] - K_ts_N\phi(v_t).$$
The first two terms in \eqref{eq:decompVar} correspond to the Joseph form of variance update equation of a standard Kalman-Filter.
The last two terms are the contribution of the approximation error to the recursive update. We bound these terms using the following Lemma \ref{lem:remainderterms1}.  
\begin{lemma} 
$$K_t^2\Var(\remainder_t) + 2|Cov(K_t\remainder_t,A_t)| = \Ocal\left(\frac{1}{s_N^2\sqrt{N}}(1+\hat{P}_{t-1})\right).$$ \label{lem:remainderterms1}
\end{lemma}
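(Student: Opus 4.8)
The plan is to lean on three facts supplied by Lemma~\ref{lem:KFPXg} (together with Lemma~\ref{lem:algebra}): that $K_t$ is a \emph{deterministic} sequence with $K_t=\Theta(1/(s_N\sqrt N))$, so that $1-K_tI(v)=\Theta(1)$; that $\Ex[X_t^2]=\Ocal(1)$; and that $\Ex[X_t^4]=\Ocal(1)$ as well. Because $K_t$ is deterministic, it factors out of every covariance. I would first dispatch the variance term. From the Taylor form \eqref{eq:R_t}, $\remainder_t = X_t(\phi'(v_t)-\Ex[\phi'(v_t)]) + \tfrac{1}{2s_N}\phi''(\varphi_t)X_t^2$, and since $X_t$ is a function of $w_0,\dots,w_{t-1}$ it is independent of the fresh noise $v_t$. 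Using $|\phi'(y)|\le|\phi'(0)|+\|\phi''\|_{\infty}|y|$ (finite because $v$ has a finite second moment) and $|\phi''|\le\|\phi''\|_{\infty}$, one obtains $\Var(\remainder_t)=\Ocal(\Ex[X_t^2])+\Ocal(s_N^{-2}\Ex[X_t^4])=\Ocal(1)$, whence $K_t^2\Var(\remainder_t)=\Ocal(1/(s_N^2 N))$, which is $\ocal$ of the claimed target.

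The bulk of the work is the covariance term. I would write $A_t = (1-K_tI(v))\gamma(X_{t-1}-\hx_{t-1}) + (1-K_tI(v))\tfrac{1}{\sqrt N}w_{t-1} - K_t s_N\phi(v_t)$ and bound the three covariances $\Cov(\remainder_t,\cdot)$ separately, each multiplied by its deterministic coefficient. The decisive observation, used in all three, is that the \emph{linear} part $X_t(\phi'(v_t)-\Ex[\phi'(v_t)])$ of $\remainder_t$ contributes exactly zero to each covariance: in every case the partner variable is independent of $v_t$ (for the $X_{t-1}-\hx_{t-1}$ and $w_{t-1}$ terms) or independent of $X_t$ with $\Ex[X_t]=0$ (for the $\phi(v_t)$ term, using $X_0=0$), so factoring expectations and using $\Ex[\phi'(v_t)-\Ex[\phi'(v_t)]]=0$ annihilates it. Only the second-order Taylor term $\tfrac{1}{2s_N}\phi''(\varphi_t)X_t^2$ survives, and it carries the crucial $1/s_N$ prefactor.

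For each surviving piece I would apply Cauchy--Schwarz with $\Var(\tfrac{1}{2s_N}\phi''(\varphi_t)X_t^2)\le\tfrac{\|\phi''\|_{\infty}^2}{4s_N^2}\Ex[X_t^4]=\Ocal(1/s_N^2)$. The three contributions are: (a) the $X_{t-1}-\hx_{t-1}$ term yields $K_t\cdot\Theta(1)\cdot\Ocal(\tfrac{1}{s_N}\sqrt{\hat P_{t-1}})=\Ocal(\tfrac{1}{s_N^2\sqrt N}\sqrt{\hat P_{t-1}})$, using $\Var(X_{t-1}-\hx_{t-1})=\hat P_{t-1}$; (b) the $w_{t-1}$ term yields $K_t\cdot\tfrac{1}{\sqrt N}\cdot\Ocal(1/s_N)=\Ocal(\tfrac{1}{s_N^2 N})$, using $\Var(w_{t-1})=1$; and (c) the $\phi(v_t)$ term yields $K_t^2 s_N\cdot\Ocal(1/s_N)=\Ocal(\tfrac{1}{s_N^2 N})$, using $\Var(\phi(v_t))=I(v)=\Ocal(1)$. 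Note the \emph{two} factors of $K_t$ in (c)—one from $K_t\remainder_t$ and one from the $-K_ts_N\phi(v_t)$ summand of $A_t$—which is precisely what keeps this term small despite the $s_N$ amplification. Finally, $\sqrt{\hat P_{t-1}}\le\tfrac12(1+\hat P_{t-1})$ converts (a) into the claimed $\Ocal(\tfrac{1}{s_N^2\sqrt N}(1+\hat P_{t-1}))$, which dominates (b), (c), and the variance term.

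The main obstacle is term (a): a priori the linear-in-$\phi'$ part of $\remainder_t$ could correlate with the filter error $X_{t-1}-\hx_{t-1}$ at $\Theta(1)$ strength, which would ruin the bound. Verifying that this correlation vanishes—so that only the $1/s_N$-scaled quadratic remainder contributes—is the heart of the argument and the reason the Goggin filter's error recursion stays within $\Ocal(1/s_N)$ of the exact Kalman variance recursion.
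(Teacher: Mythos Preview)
Your proof is correct and follows essentially the same route as the paper: show $\Var(\remainder_t)=\Ocal(1)$, then argue that the linear piece $X_t(\phi'(v_t)-\Ex[\phi'(v_t)])$ of $\remainder_t$ contributes nothing to each cross term (by independence of $v_t$ from the past, and $\Ex[X_t]=0$ for the $\phi(v_t)$ pairing), leaving only the $1/s_N$-scaled quadratic Taylor remainder, which is then controlled by Cauchy--Schwarz and the bound $\sqrt{\hat P_{t-1}}\le\tfrac12(1+\hat P_{t-1})$. The one organizational difference is that the paper splits $\Cov(\remainder_t,A_t)=\Ex[\remainder_t A_t]-\Ex[\remainder_t]\Ex[A_t]$ and handles the product-of-means term by invoking the already-established bias bound \eqref{eq:bias-bound}; by working with covariances directly you sidestep that dependency, which is slightly cleaner and more self-contained.
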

Using Lemma \ref{lem:remainderterms1} to bound the corresponding terms in \cref{eq:decompVar}, we are left with
\begin{align}\label{eq:hPrecursion}
    \hat{P}_t = (1-K_tI(v))^2 (\gamma^{2} \hat{P}_{t-1}+ Q) +K_t^2R\pm \Ocal\lp\frac{1}{s_N^2\sqrt{N}}(1+\hat{P}_{t-1})\rp.
\end{align}

Recalling that $K_t= \Theta(1/(s_N\sqrt{N}))$ we have the existence of a constant $\eta>0$ such that $(1-K_tI(v))^2 \leq \left(1-\eta \frac{1}{s_N\sqrt{N}}\right)$ for all $N$ sufficiently large.  Using $Q = 1/N$ and $K_t^2 R = \Ocal(1/N)$ we then have 
$$\hat{P}_t \leq \lp 1-\eta\frac{1}{s_N\sqrt{N}}\rp \hat{P}_{t-1}+\Ocal\lp\frac{1}{N}+\frac{1}{s_N^2\sqrt{N}}.\rp$$ 

In stationarity, using the assumption $s_N\lesssim\sqrt{N}$,
$$\hat{P}_t =\Ocal\lp\max\left\{\frac{1}{s_N},\frac{s_N}{\sqrt{N}}\right\}\rp=\Ocal(1).$$ 

Comparing now the recursion, for $\hat{P}_t$ in \eqref{eq:hPrecursion} with that for $P_t$  
\begin{align*}
    P_t = (1-K_tI(v))^2 (\gamma^{2} P_{t-1}+ Q) +K_t^2R
\end{align*}
we have
\begin{align*}
    P_t-\hat{P}_t & = (1-K_tI(v))^2(\gamma^{2}(P_{t-1}-\hat{P}_{t-1})) \pm \Ocal\lp\frac{1}{\sqrt{N}s_N^2}(1+\hat{P}_{t-1})\rp \\ & = (1-K_tI(v))^2(\gamma^{2}(P_{t-1}-\hat{P}_{t-1})) \pm  \Ocal\lp\frac{1}{\sqrt{N}s_N^2}\rp
\end{align*}
Recalling again that, in stationarity, $K_t=\Theta(1/(s_N\sqrt{N}))$ and setting $f(t):=|P_t-\hat{P}_t|$ we have for a constant $\Gamma>0$
\begin{align*}
     f(t) \leq  \lp 1-\frac{1}{\Gamma} \frac{I(v)}{s_N\sqrt{N}}\rp^2f(t-1) + \Gamma \lp\frac{1}{\sqrt{N}s_N^2}\rp.
\end{align*}

Iterating this recursion we get for $f(0)=\Ocal(1)$ and then in stationarity 
\begin{align*}
    f(t) = |P_t-\hat{P}_t| & = \Ocal\lp\frac{1/(s_N^2\sqrt{N})}{1/(s_N\sqrt{N})}\rp = \Ocal\lp\frac{1}{s_N}\rp.
\end{align*}  
\whiteqed
\noindent \textbf{Step II $|1/J_\infty^{GF}-1/\bar J_\infty|$:} We bound this term by direct calculation of these two quantities. Recall that $P_t$ follows the recursion
$$ P_t = \frac{R(\gamma^2P_{t-1}+Q)}{(I(v))^2 (\gamma^2 P_{t-1}+Q)+R  }.$$
$\JGF_t= 1/P_t$ then follows the recursion (recall that $Q=1/N$ and $R=s_N^2 I(v)$)  
\begin{align*}
    \JGF_{t} = \frac{1}{s_N^2}I(v) + \frac{\JGF_{t-1}N}{\JGF_{t-1} + \gamma^2N}.
\end{align*}
This recursion's stationary value, $\JGF_{\infty}$ is the positive solution to the equation 
$$ 
(\JGF)^2 -\left((1-\gamma^{2}){N}+\frac{1}{s_N^2}I(v)\right)\JGF -\frac{\gamma^2 N}{s_N^2}I(v)=0,$$ given by 
\[ 
\JGF_{\infty} =\frac{1}{2}\lb\left((1-\gamma^{2}){N}+\frac{1}{s_N^2}I(v)\right)+\sqrt{\left((1-\gamma^{2}){N}+\frac{1}{s_N^2}I(v)\right)^2 + \frac{4\gamma^2 N}{s_N^2}I(v)}\rb.
\]

We compare this stationary value with that for $\bar{J}_\infty$ in Theorem \ref{thm:CRLB} which is the positive solution to the equation  
\begin{align*}
\bar{J}_\infty = {\bf e}'I(\mathbb{V}){\bf e} + \frac{\bar{J}_\infty\Sigma^{-1}(\calW)}{\bar{J}_\infty+\gamma^{2\tau}\Sigma^{-1}(\calW) },
\end{align*} as given by 
$$\bar{J}_{\infty} = \frac{1}{2}\left[(1-\gamma^{2\tau})\Sigma^{-1}(\calW)+\bar{I} + \sqrt{ ((1-\gamma^{2\tau})\Sigma^{-1}(\calW)+\bar{I})^2 + 4 \gamma^{2\tau} \Sigma^{-1}(\calW)\bar{I}}\right].$$ 

In $\bar{J}_{\infty}$, compared to $J^{GF}_{\infty}$, $\gamma^2$ is replaced with $\gamma^{2\tau}$ and $N$ is replaced with $\Sigma^{-1}(\calW)$. We will establish that, with the choice $\tau=s_N$, 
\[
|\bar{J}_{\infty}-\JGF_{\infty}| = \Ocal\left(\frac{1}{s_N}\right).\numberthis\label{eq:Jgap}
\] 

Henceforth, we drop the subscript $\infty$ for convenience. We note 
$$|\bar{J}-\JGF| \leq A + B,$$
where
\begin{align}\label{eq:A}
  A & = \frac{1}{2}\left|(1-\gamma^{2\tau})\Sigma^{-1}(\calW) - (1-\gamma^2)N +\bar{I}-\frac{1}{s_N^2}I(v)\right| \text{ and, }\\
  B & = \frac{1}{2}\left|\underbrace{\sqrt{ ((1-\gamma^{2\tau})\Sigma^{-1}(\calW)+\bar{I})^2 + 4 \gamma^{2\tau} \Sigma^{-1}(\calW)\bar{I}}}_{B_1}-\underbrace{\sqrt{ \left((1-\gamma^{2}){N} +\frac{1}{s_N^2}I(v)\right)^2 + \frac{4 \gamma^{2}N}{s_N^2}I(v)}}_{B_2}\right|.\nonumber
\end{align}

Recalling that $\Sigma^{-1}(\calW)= N\frac{1-\gamma^2}{1-\gamma^{2\tau}}$ as well as $\bar{I} = \Theta(\nicefrac{\tau}{s_N^2})$ (see Lemma \ref{lem:algebra}),  we have for $\tau=\Omega(1)$ that 
\begin{align*}
  A & = \Ocal\lp\frac{\tau}{s_N^2}\rp, \text{ and } B = \frac{1}{2}\left|\frac{B_1^2-B_2^2}{B_1+B_2}\right|.
\end{align*} 

Observe that 
\begin{align*}
    B_1^2-B_2^2 & = ((1-\gamma^{2\tau})\Sigma^{-1}(\calW)+\bar{I})^2- \left((1-\gamma^{2}){N} +\frac{1}{s_N^2}I(v)\right)^2+4\gamma^{2\tau}\Sigma^{-1}(\calW)\bar{I} - \frac{4\gamma^2 N}{s_N^2}I(v).
\end{align*} Factorize $
     ((1-\gamma^{2\tau})\Sigma^{-1}(\calW)+\bar{I})^2- \left((1-\gamma^{2}){N} +\frac{1}{s_N^2}I(v)\right)^2$ as 
\begin{align*}
     \lp ((1-\gamma^{2\tau})\Sigma^{-1}(\calW)+\bar{I}) - \left((1-\gamma^{2}){N} +\frac{1}{s_N^2}I(v)\right)  \rp\lp (1-\gamma^{2\tau})\Sigma^{-1}(\calW)+\bar{I} + \left((1-\gamma^{2}){N} +\frac{1}{s_N^2}I(v)\right)  \rp 
\end{align*}
where the first factor is just $A$ from \eqref{eq:A} and thus $\Ocal(\tau/s_N^2)$. The second factor has
\begin{align*}
    \lp (1-\gamma^{2\tau})\Sigma^{-1}(\calW)+\bar{I} + (1-\gamma^{2}){N} +\frac{1}{s_N^2}I(v)  \rp & = \lp 2(1-\gamma^2)N+\bar I+\frac{1}{s_N^2}I(v) \rp \\
    & =\Ocal(1+\tau/s_N^2),
\end{align*} for all $\tau\gg 1.$ 
Thus
\begin{align*}
     B_1^2-B_2^2 -(4\gamma^{2\tau}\Sigma^{-1}(\calW)\bar{I} - \frac{4\gamma^2 N}{s_N^2}I(v))&= ((1-\gamma^{2\tau})\Sigma^{-1}(\calW)+\bar{I})^2- \left((1-\gamma^{2}){N} +\frac{1}{s_N^2}I(v)\right)^2 \\ & = \Ocal\lp\frac{\tau}{s_N^2}\lp(1+\frac{\tau}{s_N^2}\rp\rp.   \end{align*}  
for all such $\tau$ (in particular $\tau=s_N\gg 1$). 

Using the explicit expressions in Lemma  \ref{lem:algebra} we have that  
$$\left|4\gamma^{2\tau}\Sigma^{-1}(\calW)\bar{I} - \frac{4\gamma^2 N}{s_N^2}I(v)\right|=0=\Ocal\lp \frac{\tau}{s_N^2}\rp.$$
Now, we show that $B_1+B_2\geq 1$. To that end, observe that 
\begin{align*}
    B_1+B_2 & = \sqrt{ ((1-\gamma^{2\tau})\Sigma^{-1}(\calW)+\bar{I})^2 + 4 \gamma^{2\tau} \Sigma^{-1}(\calW)\bar{I}}+\sqrt{ \left((1-\gamma^{2}){N} +\frac{1}{s_N^2}I(v)\right)^2 + \frac{4 \gamma^{2}N}{s_N^2}I(v)}\\
    & \geq \sqrt{ \left((1-\gamma^{2}){N} +\frac{1}{s_N^2}I(v)\right)^2 + \frac{4 \gamma^{2}N}{s_N^2}I(v)}\\
    & =\Omega\lp 1+ \frac{\sqrt{N}}{s_N}\rp .
\end{align*}
Thus $$B = \Ocal\lp \frac{\tau}{s_N^2}\frac{ 1+ \tau/s_N^2}{1+\sqrt{N}/s_N}\rp$$
and therefore for $\tau\lesssim \sqrt{N}$ and recalling $s_N=\Omega(1)$
, 
\begin{align*}
    |\bar{J}-\JGF| \leq A+B = \Ocal\lp\frac{\tau}{s_N^2}\rp.\numberthis\label{eq:Jgap1}
\end{align*} 
Finally, recall from \cref{eq:tildeJJgap} that for $s_N\lesssim \sqrt{N}$ and with $\tau=s_N$ 
\begin{align*}
    \bar{J}=\Theta\left(\frac{\sqrt{N}}{s_N}\right)
\end{align*}
so that 
\begin{align*}J^{GF}=\Omega\lp\frac{\sqrt{N}}{s_N} - \frac{1}{s_N}\rp =\Omega\lp \frac{\sqrt{N}}{s_N}\rp. 
\end{align*}
Dividing \cref{eq:Jgap1} by $\bar{J}J^{GF}$, we have for $\tau\lesssim \sqrt{N}$ and $s_N=\Omega(1)$
\begin{align*}
|1/\bar{J} - 1/ J^{GF} | & = \frac{|\bar{J} - J^{GF}|}{\bar{J}J^{GF}} = \Ocal\lp \frac{\tau/s_N^2}{N/s_N^2}\rp =\Ocal\lp \frac{\tau}{N}\rp.\numberthis\label{eq:Jgapfinaform}
\end{align*}
Finally, setting $\tau=s_N$ and using $s_N\lesssim \sqrt{N}$, we have that $s_N/N = \Ocal(1/s_N)$ so that  
\begin{align*}
    |1/\bar{J} - 1/ J^{GF} | & =\Ocal\lp\frac{1}{s_N}\rp.
\end{align*}

\noindent \textbf{Step III:} $|1/J_\infty^{GF}-P_\infty|=0$ by definition. 

Combining all three bounds into \cref{eq:variancebound}, we get with $\tau=s_N$ that the MSE of our proposed filter in stationarity satisfies
\[
\hat{P}_t:=\Ex[(\hat{x}_t-X_t)^2] = \frac{1}{\bar{J}}\pm \Ocal\left(\frac{1}{s_N}\right)\numberthis
\] 
This completes the proof of Theorem \ref{thm:upperbound}. 

\vspace*{0.5cm}

\section{The centered Goggin filter upper bound\label{sec:GFcenteredUB}} 

As we mentioned previously in Section \ref{sec:GF}, Theorem \ref{thm:upperbound} is insufficient when $s_N\ll N^{1/4}$. We strengthen the upper bound with additional assumptions in the following Theorem.

\begin{theorem}\label{thm:upperboundcentered} Suppose $1\ll s_N\lesssim \sqrt{N}$. Let $\hat{x}_t$ be the (random) estimator produced by the centered Goggin filter in Definition \ref{def:centered-goggin-filter} and $\bar J_{\infty}$ be the solution to the equation 
\begin{align*}
\bar{J}_{\infty} = {\bf e}'I(\mathbb{V}){\bf e} + \frac{\bar{J}_{\infty}\Sigma^{-1}(\calW)}{\bar{J}_{\infty}+\gamma^{2\tau}\Sigma^{-1}(\calW) },
\end{align*}  with $\tau=s_N$.  Then, under the basic assumptions and under Assumption \ref{asum:dissipativity}, in stationarity, 
\begin{align} 
\Ex[\hat{x}_t^c-X_t] &= \Ocal\left(\frac{1}{\sqrt{N}}\right) \tag{Bias} \\
\Ex[(\hat{x}_t^c-X_t)^2]& = \frac{1}{\bar J_{\infty}}\pm \Ocal\left(\frac{\sqrt{s_N}}{N^{3/4}}\right).\tag{Variance} \end{align} 
\end{theorem}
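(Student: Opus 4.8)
\noindent\emph{Proof strategy.} The plan is to reuse the three-term decomposition from the non-centered proof of Theorem \ref{thm:upperbound}, namely $|\hat P_\infty - 1/\bar J_\infty| \le |\hat P_\infty - P_\infty| + |P_\infty - 1/\JGF_\infty| + |1/\JGF_\infty - 1/\bar J_\infty|$ as in \eqref{eq:variancebound}, where now $\hat P_t := \Var(X_t - \hat x^c_t)$ and $e_t := X_t - \hat x^c_t$. The middle term vanishes ($P_\infty = 1/\JGF_\infty$ by the definition of the gain in \eqref{eq:KPdefin}), and the third term is handled exactly as in Step II of Section \ref{sec:upperbound}: since the gain recursion is unchanged, the finer form of \eqref{eq:Jgapfinaform} with $\tau = s_N$ gives $|1/\JGF_\infty - 1/\bar J_\infty| = \Ocal(s_N/N)$, and $s_N/N \le \sqrt{s_N}/N^{3/4}$ precisely when $s_N \lesssim \sqrt N$. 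Thus the entire gain of centering is concentrated in the first term, and the task is to show $|\hat P_\infty - P_\infty| = \Ocal(\sqrt{s_N}/N^{3/4})$, replacing the $\Ocal(1/s_N)$ of the non-centered filter. The reason centering helps is structural: the argument of $\phi$ is now $v_t + \Delta_t/s_N$ with the centered innovation $\Delta_t := X_t - \gamma\hat x^c_{t-1} = \gamma e_{t-1} + w_{t-1}/\sqrt N$, whose second moment is $\gamma^2\hat P_{t-1} + 1/N = \Theta(s_N/\sqrt N)$ in stationarity, much smaller than the $\Ex[X_t^2] = \Theta(1)$ that drives the non-centered Taylor remainder.

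First I would establish preliminary moment bounds. Lemma \ref{lem:preliminary} (whose proof invokes the strong-dissipativity and linear-growth parts of Assumption \ref{asum:centered}) provides the crude stationary bound $\hat P_t = \Ocal(s_N/\sqrt N)$ and, propagated through the recursion, fourth-moment control $\Ex[\Delta_t^4] = \Ocal(s_N^2/N)$; combined with Lemma \ref{lem:KFPXg} ($K_t = \Theta(1/(s_N\sqrt N))$, $P_t = \Theta(s_N/\sqrt N)$) these give $\Ex[\Delta_t^2] = \Theta(s_N/\sqrt N)$ and, by Lyapunov's inequality, $\Ex[|\Delta_t|^3] = \Ocal(s_N^{3/2}/N^{3/4})$. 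I expect this moment propagation to be the main obstacle: unlike $X_t$, whose moments are bounded for free, the error $e_t$ feeds back into $\Delta_t$, so the fourth-moment bound must be carried through the nonlinear recursion. This is exactly the role of dissipativity (contraction of the variance and moment recursions) and of linear growth (control of the one-step increment), and it is why these extra assumptions are needed for the centered filter.

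For the bias, Taylor-expanding the transformed centered observation about $v_t$ gives $s_N\phi\lp(Y_t-\gamma\hat x^c_{t-1})/s_N\rp = s_N\phi(v_t) + \phi'(v_t)\Delta_t + \remainder^c_t$ with $\remainder^c_t = \tfrac{1}{2s_N}\phi''(\xi_t)\Delta_t^2$, hence the error recursion $e_t = (1-K_tI(v))\Delta_t - K_t\Delta_t(\phi'(v_t)-I(v)) - K_ts_N\phi(v_t) - K_t\remainder^c_t$. Taking expectations, the independence of $v_t$ from $\Delta_t$ together with $\Ex[\phi(v_t)]=0$ and $\Ex[\phi'(v_t)]=I(v)$ annihilates every term except the remainder, which contributes $\Ex[K_t\remainder^c_t] = \Ocal(K_t\,\Ex[\Delta_t^2]/s_N) = \Ocal(1/(s_N N))$ per step. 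Iterating the contraction $1-\Theta(K_t)$, whose time constant is $\Theta(s_N\sqrt N)$, yields the stationary bias $\Ocal(1/\sqrt N)$.

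For the variance, I would feed the same error recursion into $\hat P_t = \Var(e_t)$. The squares and mutual covariances of the leading term $(1-K_tI(v))\Delta_t$ and the observation term $-K_ts_N\phi(v_t)$ reproduce exactly the Kalman-form recursion $\hat P_t = (1-K_tI(v))^2(\gamma^2\hat P_{t-1}+Q) + K_t^2 R + (\text{corrections})$, matching the $P_t$ recursion of \eqref{eq:KPdefin}. The corrections arise from the score-derivative fluctuation $-K_t\Delta_t(\phi'(v_t)-I(v))$ and the remainder $-K_t\remainder^c_t$, together with their covariances against the leading and observation terms, all controlled by Lemma \ref{lem:remainderterms2}. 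The dominant one is the covariance of the leading term with the remainder, of size $\Ocal(K_t s_N^{-1}\Ex[|\Delta_t|^3]) = \Ocal(s_N^{-1/2}N^{-5/4})$ per step; the others, namely $\Ocal(K_t^2\Ex[\Delta_t^4]/s_N^2)$ from the squared remainder and $\Ocal(K_t^2\Ex[\Delta_t^2]\Var(\phi'(v)))$ from the score fluctuation (with $\Var(\phi'(v))<\infty$ following from $\|\phi''\|_\infty<\infty$ and the finite second moment of $v$), are of order $\Ocal(1/(s_N N^{3/2}))$ or smaller. Subtracting the $P_t$ recursion and iterating the same contraction (time constant $\Theta(s_N\sqrt N)$) gives $|\hat P_\infty - P_\infty| = \Ocal(s_N^{-1/2}N^{-5/4}\cdot s_N\sqrt N) = \Ocal(\sqrt{s_N}/N^{3/4})$. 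Combining the three terms of \eqref{eq:variancebound} and adding the bias-squared $\Ocal(1/N)$ to pass from $\Var(e_t)$ to the MSE $\Ex[e_t^2]$, every contribution is $\Ocal(\sqrt{s_N}/N^{3/4})$; with $1/\bar J_\infty = \Theta(s_N/\sqrt N)$ from Theorem \ref{thm:CRLB} this yields both the absolute bound and the relative error $1+\Ocal(1/(\sqrt{s_N}N^{1/4}))$.
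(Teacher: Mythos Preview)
Your proposal is correct and follows essentially the same route as the paper: the same three-term decomposition \eqref{eq:variancebound}, the same use of Lemma \ref{lem:preliminary} (under Assumption \ref{asum:centered}) for the crude moment bounds $\Ex[\Delta_t^2]=\Ocal(s_N/\sqrt N)$, $\Ex[|\Delta_t|^3]=\Ocal(s_N^{3/2}/N^{3/4})$, and the same identification of the dominant per-step correction $\Ocal(s_N^{-1/2}N^{-5/4})$ coming from the covariance of the leading term with the second-order Taylor remainder (the paper packages this as Lemma \ref{lem:remainderterms2}). The only cosmetic difference is that the paper introduces the auxiliary sequence $\tilde x_t$ for the bias and keeps the score-fluctuation and second-order remainder together as a single $\varrho_t$, whereas you work directly with the $e_t$ recursion and split them; both lead to the same estimates.
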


\vspace*{0.5cm} 
\noindent \textbf{Proof of Theorem \ref{thm:main}:} Corollary \ref{thm:CRLB}, and Theorem \ref{thm:upperboundcentered} prove Theorem \ref{thm:maincentered}. \hfill \qed
\vspace*{0.5cm} 

The rest of this section is dedicated to the proof of Theorem \ref{thm:maincentered}.We start with a preliminary bound. With this lower bound the proof of the bias and variance bounds is similar to the one for the non-centered version of the filter.

\begin{lemma}\label{lem:preliminary}[preliminary bound] Suppose that $1\ll s_N\lesssim \sqrt{N}$ and that, in addition to our basic assumptions, also Assumption \ref{asum:centered} holds. 
Let $e_t := X_{t}-\hat x^{c}_{t}$, 
$P_t^c:=\Ex[e_t^2], \Upsilon_t^c = \Ex[|e_t|^3], F_t^c = \Ex[e_t^4]$.  Then \begin{align*}
P_t^c = \Ocal\lp\frac{s_N}{\sqrt{N}}\rp, \Upsilon_t^c = \Ocal \lp\frac{s_N^{3/2}}{N^{3/4}}\rp, F_t^c =\Ocal\lp\frac{s_N^2}{N}\rp. \numberthis \label{eq:lemma-bound}
\end{align*}
\end{lemma}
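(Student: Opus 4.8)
The plan is to collapse the centered-filter dynamics to a single one-step error recursion and then run a moment-by-moment contraction argument powered by strong dissipativity. First I would rewrite the update in terms of $e_t := X_t - \hat{x}^c_t$. Using $X_t = \gamma X_{t-1} + w_{t-1}/\sqrt{N}$ and $Y_t = X_t + s_N v_t$, the argument of the score becomes $(Y_t - \gamma \hat{x}^c_{t-1})/s_N = \xi_t/s_N + v_t$, where $\xi_t := \gamma e_{t-1} + w_{t-1}/\sqrt{N}$ is the one-step prediction error. Subtracting the filter update from $X_t$ then collapses cleanly to
\[ e_t = \xi_t - K_t\, s_N\, \phi\!\left(\tfrac{\xi_t}{s_N} + v_t\right), \]
with $w_{t-1}$ independent of $e_{t-1}$ and with $K_t = \Theta(1/(s_N\sqrt{N}))$ by Lemma \ref{lem:KFPXg}. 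This recursion is the backbone of all three bounds.

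The key mechanism is to condition on $\xi_t$ and expand the relevant even power of $e_t$. Writing $h(y) = \Ex[\phi(v_t+y)]$ and $\Phi_t = s_N\phi(\xi_t/s_N + v_t)$, the leading cross term in $\Ex[e_t^{2k}\mid \xi_t]$ is $-2k\,K_t\,\xi_t^{2k-1}\,s_N h(\xi_t/s_N)$. Here the strong-dissipativity bound $y\,h(y)\ge \zeta y^2$ (Assumption \ref{asum:centered}), multiplied by $y^{2k-2}\ge 0$, gives $y^{2k-1}h(y)\ge \zeta y^{2k}$, hence a genuine contractive factor $1-2k\,K_t\,\zeta$ multiplying $\xi_t^{2k}$. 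The remaining terms carry higher powers of $K_t$ together with powers of $\Phi_t$; I would control these with the linear-growth condition $|\phi|\le A+B|\cdot|$, which after integrating out $v_t$ yields $\Ex[|\Phi_t|^{j}\mid\xi_t] = \Ocal(s_N^{j} + |\xi_t|^{j})$ (the finite fourth moment of $v$ enters here). Since $K_t\to 0$, the $\Ocal(K_t^2)$ self-terms are dominated by the $\Theta(K_t)$ dissipativity gain, so the net multiplier on $\xi_t^{2k}$ is $1 - \Theta(K_t)$.

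Taking full expectations and using $\Ex[\xi_t^2] = \gamma^2 P_{t-1}^c + 1/N$ and $\Ex[\xi_t^4] = \gamma^4 F_{t-1}^c + 6\gamma^2 P_{t-1}^c/N + \Ocal(1/N^2)$, the case $k=1$ gives $P_t^c \le (1-\Theta(K_t))P_{t-1}^c + \Ocal(1/N)$, since the additive piece $\Ocal(K_t^2 s_N^2)=\Ocal(1/N)$. Iterating this contraction—whose rate $\Theta(K_t)=\Theta(1/(s_N\sqrt N))$ dominates $1-\gamma^2=\Theta(1/N)$ precisely because $s_N\lesssim\sqrt N$—yields the fixed point $P_\infty^c = \Ocal((1/N)/K_t) = \Ocal(s_N/\sqrt N)$. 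Feeding this $P^c$-bound into the $k=2$ recursion produces an additive term of order $\Ocal(K_t^2 s_N^2\,\Ex[\xi_t^2] + P_{t-1}^c/N) = \Ocal(s_N/N^{3/2})$, and the same contraction gives $F_\infty^c = \Ocal((s_N/N^{3/2})/K_t) = \Ocal(s_N^2/N)$. The third moment then needs no separate recursion: by Cauchy--Schwarz, $\Upsilon_t^c = \Ex[|e_t|^3]\le (P_t^c F_t^c)^{1/2} = \Ocal(s_N^{3/2}/N^{3/4})$, which is exactly the claimed rate.

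I expect the main obstacle to be making the contraction rigorous rather than heuristic: one must check, uniformly in $t$, that the dissipativity gain $2k K_t\zeta$ genuinely beats both the $\Ocal(K_t^2)$ correction and the noise-injection and cross terms generated by linear growth, and that these additive pieces are of the advertised orders $\Ocal(1/N)$ and $\Ocal(s_N/N^{3/2})$. This is where the careful bookkeeping of the fourth-moment expansion lives—the $\xi_t^3\Phi_t$, $\xi_t^2\Phi_t^2$, $\xi_t\Phi_t^3$, and $\Phi_t^4$ contributions, including the odd-power cross terms whose conditional means are controlled through $h$. A secondary point is confirming that bounded initial moments (available since $\Ex[X_t^{2k}]=\Ocal(1)$ by Lemma \ref{lem:algebra}) propagate to uniform-in-$t$ bounds, so that the stationary statements follow by letting $t\to\infty$.
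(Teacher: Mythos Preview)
Your proposal is correct and follows essentially the same route as the paper: collapse to the one-step recursion $e_t=\xi_t-K_ts_N\phi(\xi_t/s_N+v_t)$, expand even powers, use strong dissipativity for the $\Theta(K_t)$ contraction and linear growth for the $\Ocal(K_t^m)$ remainders, iterate for $l=2$ then bootstrap into $l=4$. The only cosmetic difference is the third-moment interpolation, where the paper uses Jensen with $x\mapsto x^{3/4}$ on $F_t^c$ while you use Cauchy--Schwarz between $P_t^c$ and $F_t^c$; both yield the same $\Ocal(s_N^{3/2}/N^{3/4})$.
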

\vspace*{0.5cm} 

\textbf{Bias:} The proof proceeds initially similarly to that of Theorem \ref{thm:upperbound}. Let 
 $\tilde{x}_0=X_0=0$ and 
\begin{align*}
    \tilde{x}_t & = (1-K_t I(v))\gamma \tilde{x}_{t-1} +K_t(I(v)X_t + s_N\phi(v_t)).
\end{align*}
Then,  $\expec[\tilde x_t-X_t]=0$, and so it suffices to prove that $\expec[\hat x^{c}-\tilde x]=\Ocal(1/\sqrt{N})$. A second-order Taylor expansion gives 
$$\phi\lp \frac{Y_t-\gamma \hat{x}_{t-1}^c}{s_N}\rp= \phi(v_t)  +\frac{1}{s_N}\phi'(v_t) (Y_t-\gamma \hat{x}_{t-1}^c) + \frac{1}{s_N^2}\phi''(\varphi_t)(Y_t-\gamma \hat{x}_{t-1}^c)^2, 
$$ where $\varphi_t \in [\min\{v_t,v_t+(X_t-\gamma \hat{x}_{t-1}^c)/s_N\},\max\{v_t,v_t+(X_t-\gamma \hat{x}_{t-1}^c)/s_N\}].$
Thus,
$$\hat{x}_t^c = \gamma (1-K_tI(v))\hat{x}_{t-1}^c + K_t (I(v)X_t + s_N\phi(v_t)) + K_t\varrho_t,$$ where 
\begin{align*} \varrho_t&:= s_N\phi\lp\frac{Y_t-\gamma\hat{x}_{t-1}^c}{s_N}\rp - s_N\phi(v_t)-\Ex[\phi'(v_t)](X_t-\gamma\hat{x}_{t-1}^c)\\ & ~= 
(X_t-\gamma \hat{x}_{t-1}^c)(\phi'(v_t)- \Ex[\phi'(v_t)])+ \frac{1}{2} \phi''(\varphi_t)\frac{1}{s_N}(X_t-\gamma \hat{x}_{t-1}^c)^2.\end{align*}
Then, 
\begin{align*}
    \hat{x}^{c}_t - \tilde{x}_t 
    &= \gamma (1-K_tI(v))(\hat{x}_{t-1}^c-\tilde{x}_{t-1}) +K_t\varrho_t
\end{align*}

Defining $f(t):=|\expec [\hat x_{t}^c-\tilde x_t]|$, and noting that---because $v_t$ is independent of $X_t$ and $\hat{x}_{t-1}^c$--- $\Ex[(X_t-\gamma\hat{x}_{t-1}^c)(\phi'(v_t)-\Ex[\phi'(v_t)])]=0$, we have
\begin{align*}
    f(t) 
    & \leq \lp1-K_tI(v)\rp\gamma f(t-1)+K_t
\|\phi''\|_{\infty} \frac{1}{s_N}\Ex[(X_t-\gamma\hat{x}_{t-1}^c)^2].\numberthis\label{eq:hatminustildecentered}
\end{align*}

Writing $X_t = \gamma X_{t-1} +w_{t-1}/\sqrt{N}$, and recalling that $\Ex[w_1^2]=1$, we have that 
$$\Ex[(X_t-\gamma\hat{x}^c_{t-1})^2]\leq 2\gamma^2 \Ex[(X_{t-1} -\hat{x}_{t-1}^c)^2] + \frac{2}{N} =\Ocal\lp \frac{s_N}{\sqrt{N}}\rp,$$ where Lemma \ref{lem:preliminary} gave us $\Ex[(X_{t-1} -\hat{x}_{t-1}^c)^2]:=P_{t}^c = \Ocal(s_N/\sqrt{N})$ and we use $s_N=\Omega(1)$ so that $s_N/\sqrt{N}=\Omega(1/N)$. Lemma  \ref{lem:KFPXg} gives us $K_t=\Ocal(1/(s_N\sqrt{N}))$ so that
\begin{align*}
    f(t) \leq \lp1-K_tI(v)\rp\gamma f(t-1)+\Ocal\lp 
        \frac{1}{{Ns_N}}\rp.
\end{align*}
Iterating the recursion we get that, with $f(0)=\Ocal(1)$ 
\begin{equation} |\Ex[\hat{x}_t-X_t]| =: f(t) = \Ocal\lp \frac{1/(Ns_N)}{1/(s_N\sqrt{N})}\rp = \Ocal\lp\frac{1}{\sqrt{N}}\rp\label{eq:biascentered}\end{equation}  as . 

 \textbf{Variance:} We Our strategy is identical to the one in \cref{eq:variancebound}. We write 
\begin{align}
    |P_\infty^c-1/\bar{J}_{\infty}| \leq  | P^c_\infty-P_\infty|+|P_\infty-1/J_\infty^{GF}|+|1/J_\infty^{GF}-1/\bar J_\infty|\label{eq:centeredvariancebound}
\end{align} 
and bound $|\hat P^c_\infty-P_\infty|$. The second term is $0$ and the bound for the third term has 
$$|1/J_\infty^{GF}-1/\bar J_\infty| = \Ocal(s_N/N),$$ exactly as in \cref{eq:Jgapfinaform} with the choice $\tau=s_N$. Thus, for the variance bound it remains only to prove that $|P_{\infty}^c-P_{\infty}|=\Ocal(s_N/N).$

Recall $$\hat{x}_t^c = \gamma (1-K_tI(v))\hat{x}_{t-1}^c + K_t (I(v)X_t + s_N\phi(v_t)) + K_t\varrho_t,$$
so that 

\begin{align*} X_t - \hat{x}_t^c = (1-K_tI(v))\left(\gamma(X_{t-1}-\hat{x}_{t-1}^c) + \frac{1}{\sqrt{N}}w_{t-1}\right) - K_ts_N\phi(v_t) - K_t\varrho_t\end{align*} 

Exactly as in the proof of Theorem \ref{thm:upperbound}, using the notation 
$$R:=s_N^2I(v), ~~ Q = 1/N,$$ we have 
\begin{align}  P_t^c & = \Var(X_t-\hat{x}_t) \nonumber\\
& = (1-K_tI(v))^2 [\gamma^{2}P_{t-1}^c + Q] 
+K_t^2 R + K_t^2\Var(\remainder_t) + 2Cov(K_t\remainder_t,A_t), \label{eq:decompVarc}
\end{align}
where $$A_t:=  
(1-K_tI(v))\lb\gamma (X_{t-1}-\hx_{t-1}^c)+\frac{1}{\sqrt{N}}w_{t-1}\rb - K_ts_N\phi(v_t).$$

\begin{lemma} 
$$K_t^2\Var(\remainder_t) + 2|Cov(K_t\remainder_t,A_t)| = \Ocal\left(\frac{1}{\sqrt{s_N}N^{\frac{5}{4}}}\right).$$ \label{lem:remainderterms2}
\end{lemma}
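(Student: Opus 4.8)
The plan is to split the Taylor remainder into a first-order term that is \emph{exactly} uncorrelated with the quantity it multiplies, and a second-order term controlled by the moment bounds of Lemma \ref{lem:preliminary}. Write $D_t:=X_t-\gamma\hat x_{t-1}^c=\gamma e_{t-1}+w_{t-1}/\sqrt N$ and, using $\Ex[\phi'(v_t)]=I(v)$, decompose
$$\remainder_t=\underbrace{D_t\bigl(\phi'(v_t)-I(v)\bigr)}_{=:\remainder_t^{(1)}}+\underbrace{\tfrac{1}{2s_N}\phi''(\varphi_t)D_t^2}_{=:\remainder_t^{(2)}}.$$
Because $\hat x_{t-1}^c$ is a function of $(Y_1,\dots,Y_{t-1})$ and $X_t$ a function of the driving noises $w$, the prediction error $D_t$ is independent of the fresh observation noise $v_t$. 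Transferring the bounds of Lemma \ref{lem:preliminary} through $D_t=\gamma e_{t-1}+w_{t-1}/\sqrt N$ by Minkowski's inequality (and $s_N\gg1$, with all $w$-moments finite by the remark following Assumption \ref{asum:primitives-driving}) gives $\Ex[D_t^2]=\Ocal(s_N/\sqrt N)$, $\Ex[|D_t|^3]=\Ocal(s_N^{3/2}/N^{3/4})$, and $\Ex[D_t^4]=\Ocal(s_N^2/N)$. I also use $K_t=\Theta(1/(s_N\sqrt N))$ and $1-K_tI(v)=\Theta(1)$ from Lemma \ref{lem:KFPXg}.

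For the variance term, $\Var(\remainder_t)\le 2\Ex[(\remainder_t^{(1)})^2]+2\Ex[(\remainder_t^{(2)})^2]$. Independence of $D_t$ and $v_t$ yields $\Ex[(\remainder_t^{(1)})^2]=\Ex[D_t^2]\,\Var(\phi'(v))=\Ocal(s_N/\sqrt N)$, where $\Var(\phi'(v))<\infty$ since $\|\phi''\|_\infty<\infty$ makes $\phi'$ of linear growth and $v$ has a finite second moment; and $\|\phi''\|_\infty<\infty$ gives $\Ex[(\remainder_t^{(2)})^2]\le\tfrac{\|\phi''\|_\infty^2}{4s_N^2}\Ex[D_t^4]=\Ocal(1/N)$. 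Hence $K_t^2\Var(\remainder_t)=\Ocal\bigl(1/(s_N N^{3/2})\bigr)$, which is $\ocal\bigl(1/(\sqrt{s_N}N^{5/4})\bigr)$ and thus harmless.

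The covariance term is the crux. Since $A_t=(1-K_tI(v))D_t-K_ts_N\phi(v_t)$,
$$\Cov(\remainder_t,A_t)=(1-K_tI(v))\,\Cov(\remainder_t,D_t)-K_ts_N\,\Cov(\remainder_t,\phi(v_t)).$$
The decisive cancellation is $\Cov(\remainder_t^{(1)},D_t)=0$ \emph{exactly}: independence of $D_t$ and $v_t$ together with $\Ex[\phi'(v_t)-I(v)]=0$ gives both $\Ex[\remainder_t^{(1)}]=0$ and $\Ex[\remainder_t^{(1)}D_t]=\Ex[D_t^2]\Ex[\phi'(v_t)-I(v)]=0$. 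What survives is the second-order piece: bounding $\phi''$ by $\|\phi''\|_\infty$ and using the third-moment estimate, $|\Cov(\remainder_t^{(2)},D_t)|=\Ocal(s_N^{1/2}/N^{3/4})$. For the $\phi(v_t)$-covariances I use $\Ex[\phi(v_t)]=0$: by independence $\Cov(\remainder_t^{(1)},\phi(v_t))=\Ex[D_t]\,\Ex[(\phi'(v)-I(v))\phi(v)]=\Ocal(s_N^{1/2}/N^{1/4})$ (bounding $|\Ex[D_t]|\le\Ex[D_t^2]^{1/2}$), while $|\Cov(\remainder_t^{(2)},\phi(v_t))|\le\tfrac{\|\phi''\|_\infty}{2s_N}\Ex[D_t^2]\,\Ex[|\phi(v)|]=\Ocal(1/\sqrt N)$; multiplied by $K_ts_N=\Theta(1/\sqrt N)$ these contribute $\Ocal(s_N^{1/2}/N^{3/4})$. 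Collecting, $|\Cov(\remainder_t,A_t)|=\Ocal(s_N^{1/2}/N^{3/4})$, and multiplying by $K_t=\Theta(1/(s_N\sqrt N))$ gives $|\Cov(K_t\remainder_t,A_t)|=\Ocal\bigl(1/(\sqrt{s_N}N^{5/4})\bigr)$, as claimed. The finiteness of $\Ex[(\phi'(v)-I(v))\phi(v)]$ and $\Ex[|\phi(v)|]$ again follows from linear growth of $\phi,\phi'$ and the finite moments of $v$.

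The main obstacle is precisely the vanishing of $\Cov(\remainder_t^{(1)},D_t)$. A crude Cauchy--Schwarz bound on this first-order covariance would only give $\Ocal(\Ex[D_t^2])=\Ocal(s_N/\sqrt N)$, which after the outer factor $K_t$ becomes $\Ocal(1/\sqrt N)$---far larger than the target. The argument therefore hinges on two structural facts: the independence of $D_t$ from $v_t$, so that the mean-zero factor $\phi'(v_t)-I(v)$ annihilates the first-order covariance, and the sharp third-moment bound $\Ex[|e_t|^3]=\Ocal(s_N^{3/2}/N^{3/4})$ of Lemma \ref{lem:preliminary}, which governs the leading surviving contribution $\Cov(\remainder_t^{(2)},D_t)$. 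The only remaining bookkeeping is to verify the handful of integrability conditions on $\phi,\phi',\phi\phi'$ against $v$, all of which reduce to linear growth (from $\|\phi''\|_\infty<\infty$) paired with the finite fourth moment of $v$ in Assumption \ref{asum:primitives-observation}.
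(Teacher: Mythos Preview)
Your proof is correct and follows essentially the same route as the paper: both split $\remainder_t$ into the first-order piece $D_t(\phi'(v_t)-I(v))$ and the second-order piece $\frac{1}{2s_N}\phi''(\varphi_t)D_t^2$, both exploit the independence of $D_t$ from $v_t$ to make the leading covariance $\Cov(\remainder_t^{(1)},D_t)$ vanish exactly, and both identify the third-moment bound $\Ex[|e_t|^3]=\Ocal(s_N^{3/2}/N^{3/4})$ from Lemma~\ref{lem:preliminary} as the source of the dominant surviving term.

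The one noteworthy difference is in how you control $\Ex[D_t]$ for the cross term with $\phi(v_t)$. The paper invokes the bias bound $\Ex[e_{t-1}]=\Ocal(1/\sqrt{N})$ from \eqref{eq:biascentered}, obtaining a sharper $\Ocal(1/N)$ for that piece, and has to remark explicitly that this creates no circularity. You instead use the cruder $|\Ex[D_t]|\le\Ex[D_t^2]^{1/2}=\Ocal(s_N^{1/2}/N^{1/4})$ from Lemma~\ref{lem:preliminary} alone, getting $\Ocal(s_N^{1/2}/N^{3/4})$ for that piece. Your bound is weaker but entirely self-contained, and since the third-moment contribution $\Cov(\remainder_t^{(2)},D_t)=\Ocal(s_N^{1/2}/N^{3/4})$ already sets the rate, the looser estimate costs nothing in the final result.
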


Comparing \eqref{eq:decompVarc} with
\begin{align*}
    P_t = (1-K_tI(v))^2 (\gamma^{2} P_{t-1}+ Q) +K_t^2R
\end{align*}
we have 
\begin{align*}
   | P_t^c-P_t| & \leq (1-K_tI(v))^2\gamma^{2}|P_{t-1}^c - P_{t-1}|+\Ocal\lp\frac{1}{\sqrt{s_N}N^{5/4}}\rp.
\end{align*}
Iterating this recursion with $| P_0^c-P_0|=\Ocal(1)$ we have 
\begin{align*}
    | P_t^c-P_t| = \Ocal\lp \frac{(1/(\sqrt{s_N} N^{5/4})}{1/( s_N \sqrt{N})}\rp = \Ocal\lp\frac{\sqrt{s_N}}{N^{3/4}}\rp
\end{align*}
which completes our proof.
\section{Concluding Remarks} 
 
In this paper we revisited Goggin's filter for the non-Gaussian state-space model. We mapped the parameter space into filtering regimes, identifying and focusing attention on a {\em balanced regime}. In that regime, we quantified the sub-optimality rate---as an explicit function of the observation noise---of the Goggin filter which applies the Kalman filter to score-transformed observations. 

Our paper complements and renews attention to earlier literature on the non-Gaussian linear state-space model. Our results do not yet provide complete coverage. We do not cover the regime where the observation noise is small but not smaller than the driving noise, namely $\frac{1}{\sqrt{N}}\lesssim s_N \lesssim 1.$ In addition, our regularity assumptions are not uniform across regimes. To treat the range $s_N\lesssim N^{1/4}$ we introduced the centered variant of the GF and, for its analysis, imposed an additional dissipativity condition on the score function of the observation noise. 

Goggin's most general results allow for some nonlinear systems. For these, her result does not suggest a concrete filter. Instead, it shows that the optimal filter (which is the conditional expectation of the state given the expectations) converges to the conditional expectation in a suitably defined filtering problem for a nonlinear diffusion process. To the extent that tractable filters exist for that diffusion filtering problem, it is plausible that our analysis can extend to establish sub-optimality gaps for some nonlinear problems. 

Because our approach relies on non-asymptotic tools---a Cram\'er--Rao lower bound and CLT convergence rates for the Fisher information---it may be useful in expanding the coverage from estimation to the control of Partially Observed Markov Decision Processes (MDPs). A natural starting point is the quadratic regulator: with Gaussian dynamics and observations, the optimal policy satisfies a separation principle with the Kalman filter as the state estimator. Our analysis may help quantify how non-Gaussianity affects this separation and how it interacts with both the stochastic primitives and the control problem's cost parameters. 


\section*{Acknowledgements}

The authors are grateful to Vikram Krishnamurthy for his help during the development of this manuscript, and to Varun Gupta for motivating us to consider the centered filter. Imon Banerjee acknowledges IEMS Alumni Fellowship at Northwestern University for funding during the period at which this research was conducted.

\appendices

\section{Proofs of Lemmas \ref{lem:toonoisy}, \ref{lem:nonoise}, \ref{lem:balanced}, \ref{lem:KF}}\label{sec:prf-regimelemmas}
  \begin{proofof}{Lemma \ref{lem:toonoisy}} Specializing \cite[Equation (2)]{tichavsky1998posterior} we get that, in stationarity, $MSE^{\star}\geq 1/J$ where  
  \begin{align}
      J:=\frac{1}{2}\left(\cR+(1-\gamma^2)\cQ + \sqrt{(\cR+(1-\gamma^2)\cQ)^2+4\gamma^2 \cQ\cR}\right),\label{eq:J-Def}
  \end{align}  where $\cQ := I(w)/(1/\sqrt{N})^2=N I(w)$ and $\cR:=I( v)/s_N^2$. 

Under the hypothesis of the lemma $s_N\gg \sqrt{N}$ so that $\cR= \ocal(1)$ and $\cQ\cR=\ocal(1)$. In turn, 
$$
\sqrt{(\cR+(1-\gamma^2)\cQ)^2+4\gamma^2 \cQ\cR}\leq \cR+(1-\gamma^2)\cQ+ 2\gamma \sqrt{\cQ\cR} = (1-\gamma^2)\cQ\pm \ocal(1).
$$
Thus, 
$$
J\leq (1-\gamma^2)\cQ + \ocal(1),
$$
so that 
\[MSE^{\star}\geq \frac{1}{J}\geq \frac{1/\cQ}{1-\gamma^2} -\ocal\left(1\right)=\frac{1}{2I(w)}\pm \ocal(1).\numberthis\label{eq:MSE1form}
\]
Because $1/I(w)\leq Var(w)=1$ with equality only in the Gaussian case 
$$\frac{1}{2I(w)} < \frac{1}{2}.$$

This shows that the direct CRLB lower bound is too loose relative to the lemma's statement. To establish the lemma, we resort to Theorem \ref{thm:CRLB}. Setting $\tau=N$ and $s_N\gg \sqrt{N}$ there we have that $$\bar{J}_{\infty} = \Theta\lp \max\left\{\frac{\tau}{s_N^2}+\frac{\sqrt{N}}{s_N},1\right\}\rp =\Theta(1)$$ and
$$MSE^{\star} \geq \frac{1}{\bar{J}_{\infty}} -\Ocal\lp\frac{1}{N\bar{J}_{\infty}}\rp=\frac{1}{\bar{J}_{\infty}} \pm  \Ocal\lp\frac{1}{N}\rp.$$ 
We now analyze $\bar J_\infty$ for its asymptotic value under the assumption that $s_N\gg \sqrt{N}.$
Recall from \cref{eq:rec-barJinf} that 
\begin{align*}
    \bar J_{\infty} & = \frac{{\bf e}'I(\mathbb{V}){\bf e}+\Sigma^{-1}(\calW)(1-\gamma^{2\tau})+\sqrt{\bigl({\bf e}'I(\mathbb{V}){\bf e}+\Sigma^{-1}(\calW)(1-\gamma^{2\tau})\bigr)^2+4{\bf e}'I(\mathbb{V}){\bf e}\gamma^{2\tau}\Sigma^{-1}(\calW)}}{2}.
\end{align*}

Recall that $\Sigma(\calW)=Var(\calW) = 
 \frac{1-\gamma^{2\tau}}{N(1-\gamma^2)}.$ Setting $\tau=N$, 
\begin{align*}
\Sigma^{-1}(\calW)(1-\gamma^{2\tau}) = N(1-\gamma^2) = 2\pm\ocal(1).
\end{align*}

Recalling that, in the setting of this lemma,
$s_N\gg \sqrt{N}$ we have with $\tau=N$ that $${\bf e}'I(\mathbb{V}){\bf e}=\Theta\lp \frac{\tau}{s_N^2}\rp= \ocal(1).$$ 
Thus, 
$$\bar J_{\infty}=\frac{\ocal(1)+2 + \sqrt{(2+\ocal(1))^2 + \ocal(1)} }{2}=2+\ocal(1),$$ so that
$$\frac{1}{\bar{J}_{\infty}} = \frac{1}{2}\pm\ocal(1),$$ and we conclude $$MSE^{\star} \geq  \frac{1}{2}-\ocal(1),$$ as required. \end{proofof}

\begin{proofof}{Lemma \ref{lem:nonoise}} 

First, let us prove that 
the Cram\'er--Rao lower bound is too loose in the setting of this lemma. As before  $\cQ := I(w)/(1/\sqrt{N})^2=N I(w)$ and $\cR:=I(v)/s_N^2$.  Dividing $J$ by $\cR$ we get 
  $$\frac{J}{\cR} = \frac{1}{2}\lp 1 + (1-\gamma^2)\frac{\cQ}{\cR} + \sqrt{\lp1+(1-\gamma^2)\frac{\cQ}{\cR}\rp^2+4\gamma^2\frac{\cQ}{\cR}}\rp.$$ Under the assumptions of this lemma, specifically that $s_N=\ocal( 1/\sqrt{N})$, $\cQ=\ocal(\cR)$, so that $\cQ/\cR = o(1)$ and  $J/\cR = 1+\ocal(1)$. In turn, $J=\cR+\ocal(\cR)= \frac{I(v)}{s_N^2}+\ocal\lp \frac{I(v)}{s_N^2}\rp$. Therefore, the CRLB gives
  \begin{align}\label{eq:CRLBnonoise} 
     \frac{1}{s_N^2} \times  MSE^{\star}\geq  \frac{1}{I(v)}- \ocal\left(1\right).
  \end{align}
  Since $1/I(v) \leq 1/\Var(v) =  1$ with equality only with Gaussian $v$, we have generally $\frac{1}{I(v)}<1$ so that the CRLB-based lower bound in \eqref{eq:CRLBnonoise} is 
too loose relative to the lemma's statement. 

Instead of the CRLB we will use the following lemma to prove the MSE lower bound. Its proof is deferred to Section \ref{sec:auxproofs}. 
    
\begin{lemma}\label{lem:lecamapp} Suppose we have an observation of a parameter $\theta\in\mathbb{R}$:
 $$Z = \theta + \zeta, $$ where $\zeta=\sigma\zeta_0$ for fixed $\sigma>0$ and a zero mean, unit variance and continuous random variable $\zeta_0$ with support on the real line and finite fourth moment: $\Ex[\zeta_0]=0,Var(\zeta_0)=1$ and $\Ex[\zeta_0^4]<\infty$. Assume a fixed Lipschitz prior $\pi(\theta)$ that has $\pi(0)>0$. Then, for any estimator $\hat{\theta}$ of $\theta$, as $\sigma^2\downarrow 0$, 
$$\Ex[(\hat{\theta}-\theta)^2]\geq \sigma^2 -\ocal(\sigma^2), $$ where, the expectation is over the joint distribution of the prior and the observation.
\end{lemma}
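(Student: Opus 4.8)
The plan is to bypass the Cram\'er--Rao bound entirely---which, as the surrounding text shows, is loose by the factor $1/I(v)<1$---and instead control the \emph{Bayes risk} directly through the posterior. First I would reduce the problem to the expected posterior variance: for the fixed prior $\pi$ and any measurable estimator $\hat\theta=\hat\theta(Z)$, orthogonality of the conditional mean gives $\Ex[(\hat\theta-\theta)^2]\ge \Ex[\Var(\theta\mid Z)]$, with equality at the posterior mean. Hence it suffices to prove $\Ex[\Var(\theta\mid Z)]\ge \sigma^2-\ocal(\sigma^2)$.

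Second, I would localize the posterior. Writing $h$ for the density of $\zeta_0$, the posterior density of $\theta$ given $Z=z$ is proportional to $\pi(\theta)\,\tfrac{1}{\sigma}h((z-\theta)/\sigma)$; changing variables to $s=(z-\theta)/\sigma$ turns this into the density proportional to $\pi(z-\sigma s)\,h(s)$. Since $\theta=z-\sigma s$ is affine in $s$ at fixed $z$, one gets the exact identity $\Var(\theta\mid Z=z)=\sigma^2\,\Var(s\mid Z=z)$, so the whole claim is equivalent to $\Ex[\Var(s\mid Z)]\to 1$. For each fixed $z$ with $\pi(z)>0$ I would show $\Var(s\mid Z=z)\to 1$: the Lipschitz bound $|\pi(z-\sigma s)-\pi(z)|\le L\sigma|s|$ lets me replace $\pi(z-\sigma s)$ by the constant $\pi(z)$ in the normalized numerator and denominator, with remainders controlled by $\int|s|^{j}h(s)\,ds$ for $j\le 3$, finite because $\zeta_0$ has a finite fourth (hence third) moment. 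Using the normalizations $\int s\,h=0$ and $\int s^2 h=1$, the conditional second moment tends to $1$ and the conditional mean to $0$, giving $\Var(s\mid Z=z)\to 1$.

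Third, I would upgrade pointwise convergence to the integrated statement \emph{without losing the constant} $1$ (tightness is the whole point). Fix $\epsilon>0$ and choose $c>0$ so small that $G_c=\{z:\pi(z)\ge c\}$ carries prior mass $\ge 1-\epsilon$; this is possible because $\{\pi=0\}$ is $\pi$-null and $\pi(0)>0$. On $G_c$ the denominator $\int \pi(z-\sigma s)h(s)\,ds$ is bounded below by $c/2$ for small $\sigma$, so $\Var(s\mid Z=z)\to 1$ \emph{uniformly} in $z\in G_c$. Since $Z=\theta+\sigma\zeta_0\Rightarrow\theta$ in law, $\Pbb(Z\in G_c)\ge 1-2\epsilon$ for small $\sigma$. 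Using $\Var(s\mid Z=z)\ge 0$ everywhere and $\ge 1-\epsilon$ on $G_c$, restricting the expectation to $G_c$ yields $\Ex[\Var(s\mid Z)]\ge (1-\epsilon)\,\Pbb(Z\in G_c)\ge (1-\epsilon)(1-2\epsilon)$; letting $\epsilon\downarrow 0$ gives $\liminf_{\sigma}\Ex[\Var(s\mid Z)]\ge 1$, i.e. $\Ex[\Var(\theta\mid Z)]\ge \sigma^2-\ocal(\sigma^2)$.

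The conceptual content---posterior concentration making the single-observation Bayes risk equal to the noise variance---is easy; the real obstacle is the integration step, because both ingredients move with $\sigma$. The marginal law of $Z$ drifts (it converges only to $\pi$), and the prior need not be bounded away from $0$, so the pointwise rate of $\Var(s\mid Z=z)\to1$ degrades as $\pi(z)\downarrow 0$. The $\epsilon$--$c$ truncation, executed strictly in the order ``fix $\epsilon$, then $c$, then send $\sigma\to0$,'' together with nonnegativity of the conditional variance (which prevents the small-$\pi$ region from dragging the integral below target), is exactly what keeps the constant at $1$ rather than the lossy $\Pbb(Z\in G_c)<1$. A secondary technical point---ensuring the posterior moments are well-defined and the remainder integrals are finite---is precisely where the finite-fourth-moment hypothesis is consumed.
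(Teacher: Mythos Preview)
Your argument is correct and, if anything, cleaner than the paper's. Both proofs exploit the same mechanism---the Lipschitz prior makes the posterior of $s=(z-\theta)/\sigma$ converge to the noise law $h$, so the posterior variance of $\theta$ is $\sigma^2(1+o(1))$---but they organize the work differently. The paper localizes at the single point $\theta=0$: it writes the posterior density explicitly, shows that for $|z|\le b=o(1)$ the posterior mean equals $z(1+o(1))$, and then lower-bounds the risk of the Bayes estimator via $\int_{|z|\le b} z^2 f_\zeta(z)\,dz\approx\sigma^2$, leaning on $\pi(0)>0$ to keep the posterior normalizer bounded away from zero near the origin. Your route is global: you pass immediately to $\Ex[\Var(\theta\mid Z)]=\sigma^2\,\Ex[\Var(s\mid Z)]$, prove $\Var(s\mid Z=z)\to1$ uniformly on level sets $\{\pi\ge c\}$, and recover the constant $1$ by the $\epsilon$--$c$ truncation combined with nonnegativity of the conditional variance. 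This makes the role of the fourth-moment hypothesis explicit (the $\int|s|^3h<\infty$ remainder in the second-moment expansion) and sidesteps the paper's somewhat delicate ``without loss of generality $\theta=0$'' reduction, at the cost of having to handle the integration step carefully. One minor patch: in your Portmanteau step take the \emph{open} level set $G_c=\{z:\pi(z)>c\}$ (or choose $c$ so that $\{\pi=c\}$ is $\pi$-null, possible for all but countably many $c$), so that weak convergence $Z\Rightarrow\theta$ yields $\liminf_{\sigma\downarrow0}\Pbb(Z\in G_c)\ge\pi(G_c)$ directly.
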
 
\vspace*{0.2cm} 
  Equipped with this lemma, let us recall our linear system 
\begin{align*} X_{t+1}&=\gamma X_t+\frac{1}{\sqrt{N}}{w}_t,\numberthis\label{eq:lemnonoise-eq1} \\
Y_{t} & = X_t + s_N {v}_t\numberthis\label{eq:lemnonoise-eq2}\end{align*} 
{
Given a fixed (arbitrary) filter, fix a time $t-1$, and let $\hat{x}_{t-1}$ be the estimate of $X_{t-1}$ at time $t-1$. 
}
Then, the prior at time $t$ (before the observation at time $t$) is given by the convolution of the independent variables
$\gamma \hat{x}_{t-1}$ and ${w}_{t-1}$. That is
\begin{align*}
    \hat{x}_{t|t-1} = \gamma  \hat{x}_{t-1} +\frac{1}{\sqrt{N}}{w}_{t-1}.\numberthis\label{eq:lemnonoise-eq3}
\end{align*}

Under the assumptions of this lemma, ${w}_t$ is Lipschitz continuous; let $L_{w}$ be the corresponding Lipschitz constant. The convolution in \cref{eq:lemnonoise-eq3} inherits the Lipschitz constant and bound on the density from ${w}_t$. Indeed, take a random variable $S_1$ and a continuous random variable $S_2$ that has a strictly positive, bounded and Lipschitz continuous density  $f_{2}$. Then, 
$$f_{1+2}(s) = \int_{s_1}f_2(s-s_1)dF_1(s_1)\leq |f_2|_{\infty}\int_{s_1}dF_1({s_1})=|f_2|_{\infty}.$$ 
Similarly, 
\begin{align*} |f_{1+2}(s)-f_{1+2}(s')|& =\left| \int_{s_1}f_2(s-s_1)dF_1(s_1)-\int_{s_1}f_2(s'-s_1)dF_1(s_1)\right|
\\ & \leq   \int_{s_1}|f_2(s-s_1)-f_2(s'-s_1)|dF(s_1)\leq L_2|s-s'|,
\end{align*}  where $L_2$ is the Lipschitz constant of $f_2$. 

In order to apply Lemma \ref{lem:lecamapp}, we next verify that the density of $\hat x_{t|t-1}$ is positive at $0$. 
Following the line of arguments before, suppose $f_2(0)>0$. Then we get 
\begin{align*}
    f_{1+2}(0) = \int_{s_1}f_2(-s_1)dF_1(s_1)=\Ex[f_2(-S_1)]>0,
\end{align*}  where the last inequality follows since $\mathbb{P}\{f_2(-S_1)>0\}=1$ by the assumption of $f_2$. 

We conclude that the distribution $\hat{X}_{t|t-1}$ inherits the boundedness, positivity at $0$, and Lipschitz continuity of the density from ${w}_t$. Letting $Z=Y_t$, $\sigma=s_N$ and $\pi$ be the density of $\hat{X}_{t|t-1}$, Lemma \ref{lem:lecamapp} give us that no estimator can achieve MSE smaller than $s_N^2-\ocal(s_N)$. \end{proofof}

\begin{proofof}{Lemma \ref{lem:balanced}} For the KF, the steady-state information $J=1/P$ can be derived by solving the update equations for $1/P$. It admits a closed form solution given by
 \begin{align*}
      J=\frac{1}{2}\left(\cR+(1-\gamma^2)\cQ + \sqrt{(\cR+(1-\gamma^2)\cQ)^2+4\gamma^2 \cQ\cR}\right)
  \end{align*}
with $\gamma=(1-1/N)$, $\cQ=N$, and $\cR=1/s_N^2$. In turn, 
\begin{align*}
    J\geq \frac{1}{2}\lp (1-\gamma^2)\cQ + \sqrt{((1-\gamma^2)\cQ)^2 + 4\gamma^2 \cR\cQ}\rp
\end{align*}
which we rearrange as
\begin{align*}
    \frac{J}{ (1-\gamma^2)\cQ }\geq \frac{1}{2}\lp1 + \sqrt{1 + 4\gamma^2 \frac{\cR\cQ}{ (1-\gamma^2)^2\cQ^2 }}\rp.
\end{align*}

Substituting the value of $\cQ,\cR$ we have
\begin{align*}
\frac{\cR}{(1-\gamma^2)^2\cQ} 
&= \frac{1}{N s_N^2\left(\frac{2}{N}-\frac{1}{N^2}\right)^2} = \frac{1}{\frac{s_N^2}{N}\left(2-\frac{1}{N}\right)^2}=\Omega(1),
\end{align*}
with which we conclude 
$$\frac{J}{(1-\gamma^2)\cQ} \geq \frac{1}{2}\left(1+ \sqrt{1+\Omega(1)}\right) = 1+\Omega(1).$$
Therefore, there exists $\delta>0$ such that $J\geq (1+\delta)\cQ(1-\gamma^2)$ so that for $\epsilon=\delta/(1+\delta)>0$ 
$$MSE^{KF} = P = \frac{1}{J}\leq (1-\epsilon)\frac{1}{\cQ(1-\gamma^2)}.$$
Recall from \cref{eq:MSE1form} that $MSE\pow 1\geq  1/(\cQ(1-\gamma^2))-\ocal(1)$. Since $\epsilon>0$, this shows that 
\begin{align*}
    MSE\pow 1-MSE^{KF}>\epsilon MSE\pow 1=\Omega\lp MSE\pow 1\rp.\end{align*} 
The argument is almost identical for the other bound noting that 
$$\frac{J}{\cR}\geq \frac{1}{2}\lp 1 + \sqrt{1 + 4\gamma^2 \frac{\cQ}{\cR}}\rp. $$
Using finally the lemma's assumption that $s_N\lesssim \sqrt{N}$, so that $\cQ/\cR = N/s_N^2 = \Omega(1)$, we conclude as before that $J\geq (1+\delta)R$ so that $P\leq (1-\epsilon)s_N^2$. In turn, because $MSE\pow 2 \geq s_N^2-\ocal(s_N^2)$,  
$$MSE\pow 2 -MSE^{KF} =\Omega(MSE\pow 2).$$
\end{proofof}

\begin{proofof}{Lemma \ref{lemma:log-concave}} We are considering densities of the form $f(x) = \frac{1}{Z}e^{-V(x)}$. The score is then $$\phi(x)  = -\frac{f'(x)}{f(x)} = V'(x).$$ By assumption $|V'(x)| \leq A+B|x|$ which gives immediately the growth requirement in Assumption \ref{asum:dissipativity}. Next, because $\phi'(x) = V''(x)$, and our assumption that $|V'''(x)|\leq C$, $|V''(x)| \leq  |V''(0)| + C|x|$. Then, $|\frac{\partial}{\partial y}\phi(v_t+y)|=|\phi'(v_t+y)|\leq C+D|v_t|$, for some $C,D>0$. Since $\Ex[|v_t|]<\infty$ we can interchange differentiation and expectation and write 
\[
h'(y) \;=\; \expec\bigl[\phi'(v_t + y)\bigr].
\]
By assumption, $\inf_{z}\phi'(z)=\inf_{z} V''(z)\geq m>0$ for some $m>0$, hence
\begin{equation}\label{eq:hprime_lower_bound}
h'(y) \;=\; \expec[\phi'(v_t + y)] \;\ge\; m \qquad,  \text{ for all } y \in \mathbb{R}.
\end{equation}
Since $\phi$ is the score function,
\[
h(0) = \expec[\phi(v_t + 0)] = \expec[\phi(v_t)] = 0.
\]

\medskip
\noindent\textbf{Strong dissipativity:} Fix $y > 0$. By the fundamental theorem of calculus and \eqref{eq:hprime_lower_bound},
\[
h(y) - h(0) = \int_0^y h'(s)\,\mathrm{d}s \;\ge\; \int_0^y m\,\mathrm{d}s \;=\; m y.
\]
Multiplying both sides by $y > 0$ gives
\[
h(y)\,y \;\ge\; m\,y^2, \qquad y > 0.
\]

Fixing $y < 0$ we have, similarly,  
\[
h(0) - h(y) = \int_y^0 h'(s)\,\mathrm{d}s \;\ge\; \int_y^0 m\,\mathrm{d}s \;=\; -m y.
\]
so that $h(y) \le m y$ for all $y < 0$. Multiplying by $y < 0$ reverses the inequality:
\[
h(y)\,y \;\ge\; m\,y^2, \qquad y < 0.
\]

For $y = 0$, the inequality holds with equality. Combining all three cases, we conclude that 
\[
h(y)\,y \;\ge\; m\,y^2 \qquad \text{for all } y \in \mathbb{R},
\]
as required with $\zeta=m$ in Assumption \ref{asum:dissipativity}. 

Finally, note that $\phi''(x) = V'''(x)$ and hence, $\|\phi''\|_{\infty}\leq \|V'''\|_{\infty} \leq C$. Also, since $V''(x) \geq m>0,$ there exists $a,b>0$ such that $V(x) \geq b x^2$ for all $x:|x|\geq a$. In turn, for any moment $k$ (in particular $k=4$) $$\Ex[v_1^4]\leq \Gamma \left(a^4 + \int_{x:|x|\geq a}x^4 e^{-V(x)}\right)<\infty.$$ This guarantees that Assumption \ref{asum:primitives-observation} is, as well, satisfied. \end{proofof}

\noindent \begin{proofof}{Lemma \ref{lem:KF}} Let $P^{KF}$ be the steady-state $MSE$ of the KF. Then, $J^{KF}$ satisfies equation 

$$J^{KF} = \frac{1}{s_N^2} + \frac{J^{KF} N}{J^{KF}+\gamma^2 N}.$$

For $1\ll s_N\lesssim \sqrt{N}$, as assumed,  $s_N^2/N=\Ocal(1)$ so that taking $\tau=\sqrt{N}$ in Theorem \ref{thm:CRLB}, we have  $MSE^{\star}\geq \frac{1}{\bar{J}}-\Ocal\lp \frac{1}{\sqrt{N}}\rp$ where $\bar{J}$ (defined with $\tau=\sqrt{N}$) is the stationary point of the recursion in \eqref{eq:barJrecursion}. As in step II in the proof of Theorem \ref{thm:upperbound}--- see specifically \eqref{eq:Jgapfinaform} there 
we have that $|\bar{J} - J^{GF}|=\Ocal(\tau/N)=\Ocal(1/\sqrt{N})$ for $\tau=\sqrt{N}$.

Substituting $\tau=\Ocal(\sqrt{N})$ and $s_N\gg 1$  in \cref{eq:Jgapfinaform}
$|1/J^{GF}- 1/\bar{J}| =\Ocal(1/\sqrt{N}).$ If we can show that $$1/J^{KF}- 1/J^{GF} = \Omega(s_N/\sqrt{N}), 
\text{\, then \,} 1/J^{KF}- 1/\bar{J} = \Omega(s_N/\sqrt{N})-\Ocal(1/\sqrt{N})=\Omega(s_N/\sqrt{N})$$ for $s_N\gg 1$ and $$MSE^{KF} - MSE^{\star} \geq  1/J^{KF} - 1/\bar{J} +\Ocal(1/\sqrt{N}) = \Omega(s_N/\sqrt{N})=\Omega(MSE^{\star})$$ as stated.

Among random variables with the same variance, the Fisher information is minimized by the Gaussian with the given variance \cite{johnson2004information}. In other words, for a random variable $ v$ with unit variance ($\sigma_v^2= 1$), $I({v}) \geq 1$ and the equality is achieved only with the Gaussian distribution. In turn, if ${v}$ is non-gaussian we can define a strictly positive number (not dependent on N) $$\eta = I({v})-1>0.$$

The respective information quantities $J^{KF}$ and $J^{GF}$ solve the quadratic equations given earlier, whose solutions are
\begin{align*}
J^{KF} 
&= \frac{1}{2}\left((1-\gamma^2)N + \frac{1}{s_N^2}
+ \sqrt{\left((1-\gamma^2)N+\frac{1}{s_N^2}\right)^2
+ 4\gamma^2N\frac{1}{s_N^2}}\right), \\
J^{GF}
&= \frac{1}{2}\left((1-\gamma^2)N + \frac{I(v)}{s_N^2}
+ \sqrt{\left((1-\gamma^2)N+\frac{I(v)}{s_N^2}\right)^2
+ 4\gamma^2N\frac{I(v)}{s_N^2}}\right),
\end{align*}
where $1 < I(v)$.

Recalling that $1-\gamma^2 = 2/N + \mathcal{O}(1/N)$, we have
\[
(1-\gamma^2)N = \Theta(1), \qquad
\frac{1}{s_N^2} = o(1), \qquad
\frac{I(v)}{s_N^2} = o(1).
\]
Moreover,
\[
\gamma^2 N \frac{1}{s_N^2} = \Omega\!\left(\frac{N}{s_N^2}\right)=\Omega(1),
\qquad
\gamma^2 N \frac{I(v)}{s_N^2}=\Omega(1),
\]
for all $s_N\le \sqrt{N}$. Thus,
\begin{align}
\nonumber
J^{KF}
&= \frac{1}{2}\left((1-\gamma^2)N + o(1)
+ \sqrt{\left((1-\gamma^2)N+o(1)\right)^2
+ 4\gamma^2N\frac{1}{s_N^2}}\right), \\
\label{eq:comparisonJ}
J^{GF}
&= \frac{1}{2}\left((1-\gamma^2)N + o(1)
+ \sqrt{\left((1-\gamma^2)N+o(1)\right)^2
+ 4\gamma^2N\frac{I(v)}{s_N^2}}\right).
\end{align}

Taking a Taylor expansion of the square-root terms around
$4\gamma^2N/s_N^2$ and $4\gamma^2N I(v)/s_N^2$ respectively yields
\[
J^{GF}-J^{KF}
= \sqrt{4\gamma^2N\frac{I(v)}{s_N^2}}
- \sqrt{4\gamma^2N\frac{1}{s_N^2}}
+ o(1)
= \Omega\!\left(\bigl(\sqrt{I(v)}-1\bigr)\frac{\sqrt{N}}{s_N}\right)
= \Omega\!\left(\frac{\sqrt{N}}{s_N}\right),
\]
where the last equality follows from $I(v)-1>0$.

Using $(1-\gamma^2)N=\Theta(1)$, $N/s_N^2=\Omega(1)$, together with
\[
J^{KF}=\Theta\!\left(\frac{\sqrt{N}}{s_N}\right),
\qquad
J^{GF}=\Theta\!\left(\frac{\sqrt{N}}{s_N}\right),
\]
we obtain
\[
\left|\frac{1}{J^{KF}}-\frac{1}{J^{GF}}\right|
= \frac{|J^{KF}-J^{GF}|}{J^{KF}J^{GF}}
= \Omega\!\left(\frac{\sqrt{N}/s_N}{N/s_N^2}\right)
= \Omega\!\left(\frac{s_N}{\sqrt{N}}\right),
\]
as required.
\end{proofof}

\begin{proofof}{Lemma \ref{lem:algebra}} 

Using second order Taylor expansion of $x^{2\tau}$ around $x=1$ we have 

\begin{equation} \label{eq:ratiosimp}
    \frac{1-\gamma^{2\tau}}{1-\gamma^{2}} = \frac{\frac{2\tau}{N}+\Ocal(\tau^2 / N^2)}{\frac{2}{N}-\frac{1}{N^2}} = \tau +\Ocal\lp\frac{\tau^2}{N}\rp =\Theta(\tau).  
    \end{equation}

Since the information of a vector of independent random variables is the sum of the individual-coordinate Fisher information, we have that 
\begin{align*} 
 {\bf e}' I(\mathbb{V}){\bf e} & = \frac{1}{s_N^2}I(v)\sum_{s=1}^{\tau }\gamma^{-2(s-1)}\\ 
    & = \frac{1}{s_N^2}I(v) \frac{1-\gamma^{-2\tau}}{1-\gamma^{-2}}\\
    & = \frac{\gamma^2}{s_N^2\gamma^{2\tau}}I(v) \frac{1-\gamma^{2\tau}}{1-\gamma^{2}}\end{align*}

For all $\tau\leq N$ (in particular for $\tau=1$), $\gamma^{2\tau}=\Theta(1)$ so that also $\gamma^2/\gamma^{2\tau} =\Theta(1)$. In turn, for all $\tau\leq N$,
    \begin{align} {\bf e}' I(\mathbb{V}){\bf e}=\frac{\gamma^2}{s_N^2\gamma^{2\tau}}I(v) \frac{1-\gamma^{2\tau}}{1-\gamma^{2}} & =\Theta\lp\frac{\tau}{s_N^2}I(v)\rp+\Ocal\left(\frac{1}{s_N^2}\frac{\tau^2}{N}\right)\nonumber\\
    & =\Theta\left(\frac{\tau}{s_N^2}\right)\numberthis\label{eq:VBB-infobound}.
\end{align}

That $\Sigma^{-1}(\calW)$ is as stated follow from \eqref{eq:def-Wcal}. And that $\Sigma^{-1}(\calW)=\Theta(N/\tau)$ then follows from \eqref{eq:ratiosimp}. 

Finally, for any integer $k$ there exists $\Gamma_k>0$ such that $$\Ex[X_t^{2k}]=\sum_{i=0}^{t-1} \gamma^{3(t-1-i)}\Ex[w_i^{2k}/N^{k}]  \leq \Gamma_k \Ex[w_1^{2k}] \frac{1}{N^{k}(1-\gamma^3)} = \Theta\lp \frac{1}{N^{k-1}}\rp,$$ where we use the finite moments of all orders for $w$; see Assumption \ref{asum:primitives-driving}. \end{proofof}

\section{Proofs of Auxillary Lemmas\label{sec:auxproofs}}

\begin{proofof}{Lemma \ref{lem:CLT_gn}}

We recall that 
\[
\Wcal_k = \frac{1}{\sqrt{N}}\sum_{s\in [\tau]}\gamma^{s-1} w_{\tau k-s},
\] so that 
\begin{align*}
   \Var(\Wcal_k)  = \frac{1}{N}\sum_{s\in [\tau]}\gamma^{2(s-1)} 
   = \frac{1}{N}\frac{1-\gamma^{2\tau}}{1-\gamma^2}
   = \frac{\tau}{N}\pm\mathcal{O}\lp \frac{\tau^2}{N^{2}}\rp. \numberthis\label{eq:cltgnprf-eq2}
\end{align*}
The standardized Fisher information for a random variable $U$ is given by 
$$J_{st}(U) = \Var(U)I(U)-1.$$
For the following lemma $R^{\star}(U)$ is the Poincar\'e constant of a random variable $U$; see \eqref{eq:poincaredefin}. 
.  

\begin{lemma}[Proposition 3.2 in \cite{johnson2004information}]\label{prop:johnson_inforbd}
    Let $U_1\dots,U_n$ be independent, $0$ mean random variables with variances $\sigma_1^2,\dots,\sigma_n^2$ respectively. Then, the standardised Fisher information $J_{st}$ satisfies
    \[
    J_{st}\lp\frac{U_1+\dots+U_n}{\sqrt{n}}\rp\leq \sum_{i=1}^n \frac{\alpha_i}{1+c_i} J_{st}(U_i)
    \]
    where 
    \[
    \alpha_i=\sigma_i^2/(\sum_{j}\sigma_j^2) \text{ and } c_i = \frac{1}{2R^\star (U_i)}\sum_{j\neq i}\frac{1}{I(U_j)}.
    \]
\end{lemma}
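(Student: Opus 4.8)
The plan is to prove the stated inequality by the projection-and-variance-drop method of Johnson and Barron \cite{johnson2004information}, isolating the single place where the restricted Poincar\'e constant is used. First I would remove the cosmetic $1/\sqrt n$: because $I(aU)=I(U)/a^2$ in a location family, $J_{st}$ is scale invariant, $J_{st}(aU)=\Var(aU)I(aU)-1=J_{st}(U)$, so it suffices to bound $J_{st}(S)$ for $S:=\sum_{i=1}^n U_i$, where $\Var(S)=\sum_j \sigma_j^2$. The backbone is the Blachman--Stam projection identity for the score of a sum. Writing $\rho_i=f_i'/f_i$ for the score of $U_i$ (so $\Ex[\rho_i(U_i)]=0$ and $\Ex[\rho_i(U_i)^2]=I(U_i)$), independence gives $\rho_S(s)=\Ex[\rho_i(U_i)\mid S=s]$ for every $i$, hence for the weighted combination $T_w:=\sum_i w_i\rho_i(U_i)$ with $\sum_i w_i=1$ one has $\rho_S(S)=\Ex[T_w\mid S]$. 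Since $\rho_S(S)$ is the $L^2$-projection of $T_w$ onto $\sigma(S)$, the law of total variance yields the exact decomposition
$$I(S)=\Var(T_w)-\Ex[\Var(T_w\mid S)]=\sum_i w_i^2\,I(U_i)-\Ex[\Var(T_w\mid S)].$$

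Next I would specialize the weights to $w_i=\alpha_i=\sigma_i^2/\Var(S)$. Multiplying the display through by $\Var(S)$, subtracting $1$, and using $\sum_i\alpha_i=1$ collapses the leading term exactly to the Poincar\'e-free subadditivity bound:
$$J_{st}(S)=\Var(S)\,I(S)-1=\sum_i \alpha_i\,J_{st}(U_i)-\Var(S)\,\Ex[\Var(T_\alpha\mid S)],$$
where $T_\alpha=\sum_i\alpha_i\rho_i(U_i)$. Thus the full claim reduces to showing that the nonnegative variance-drop term is large enough, namely
$$\Var(S)\,\Ex[\Var(T_\alpha\mid S)]\ \geq\ \sum_i \alpha_i\,J_{st}(U_i)\,\frac{c_i}{1+c_i},$$
since the right-hand side is precisely the gap between $\sum_i\alpha_i J_{st}(U_i)$ and the target $\sum_i \frac{\alpha_i}{1+c_i}J_{st}(U_i)$. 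Equivalently, one needs an upper bound on $\Var(\rho_S(S))=\Var(\Ex[T_\alpha\mid S])$ that improves on the trivial $\Var(T_\alpha)=\sum_i\alpha_i^2 I(U_i)$ by the stated fraction.

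The crux is this lower bound on the variance drop, and it is the only non-elementary ingredient. The mechanism is that projecting the mean-zero scores $\rho_i(U_i)$ onto the single aggregate statistic $S=U_i+\sum_{j\neq i}U_j$ necessarily loses a definite fraction of variance, because conditioning on $S$ ``smooths'' each coordinate through the independent companions $U_j$. To make this quantitative I would peel off the Gaussian (linear) part of each score and run a coordinatewise martingale/tensorization decomposition of $\Var(T_\alpha\mid S)$, applying the restricted Poincar\'e inequality to the centered score fluctuations, which by construction satisfy both moment constraints ($\Ex[g]=0$ and $\Ex[g']=0$) defining the admissible class $H_1^*$ of \eqref{eq:poincaredefin}. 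Each application of $R^\star(U_i)$ for the projected coordinate, weighted against the companion informations through $\sum_{j\neq i}1/I(U_j)$, contributes one factor $\tfrac{1}{2R^\star(U_i)}\sum_{j\neq i}1/I(U_j)=c_i$, and summing over $i$ produces exactly the $1/(1+c_i)$ denominators. I expect this variance-drop-via-spectral-gap estimate to be the main obstacle: it requires the Poincar\'e inequality in its \emph{restricted} form and a careful accounting of how each coordinate interacts with the aggregate $S$.

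Finally, I would record the built-in consistency checks: when every $U_i$ is Gaussian, both sides vanish ($J_{st}\equiv 0$ and $R^\star=1/2$), and Assumption \ref{asum:primitives-driving} guarantees $R^\star(U_i)<\infty$, so each $c_i$ is finite and the improvement over the plain subadditivity bound $\sum_i\alpha_i J_{st}(U_i)$ is strict whenever the $U_i$ are non-Gaussian. This is the regime in which the lemma is invoked (via Lemma \ref{lem:CLT_gn}) to replace the Fisher information of the aggregated signal noise by the reciprocal of its variance.
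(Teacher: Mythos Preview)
The paper does not prove this lemma at all: it is quoted verbatim as Proposition 3.2 of \cite{johnson2004information} and used as a black box in the proof of Lemma \ref{lem:CLT_gn}. There is therefore no ``paper's own proof'' to compare against.

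Your outline faithfully follows the Johnson--Barron strategy: the scale invariance of $J_{st}$, the projection identity $\rho_S(S)=\Ex[T_w\mid S]$, and the exact law-of-total-variance decomposition
\[
J_{st}(S)=\sum_i \alpha_i J_{st}(U_i)-\Var(S)\,\Ex[\Var(T_\alpha\mid S)]
\]
for the choice $w_i=\alpha_i$ are all correct and cleanly stated. The reduction to the variance-drop lower bound
\[
\Var(S)\,\Ex[\Var(T_\alpha\mid S)]\ \geq\ \sum_i \alpha_i\,J_{st}(U_i)\,\frac{c_i}{1+c_i}
\]
is the right target. Where your sketch becomes genuinely incomplete is in this last step: you describe the mechanism (``peel off the Gaussian part,'' ``coordinatewise martingale/tensorization,'' ``apply restricted Poincar\'e'') but do not actually carry out the calculation that produces the specific constants $c_i=\tfrac{1}{2R^\star(U_i)}\sum_{j\neq i}1/I(U_j)$. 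In Johnson's argument this step is not a soft tensorization; it proceeds by analyzing, for each $i$, the conditional score $\Ex[\rho_i(U_i)\mid S]$ as a function of $S$, showing that after subtracting its linear part the residual lies in $H_1^*(U_i)$, and then invoking $R^\star(U_i)$ against the convolution smoothing coming from $\sum_{j\neq i}U_j$ (whose information is bounded by $\min_j I(U_j)$, whence the sum of reciprocals). That is where the factor $1/(1+c_i)$ emerges, and it requires more than a gesture. Since the present paper only \emph{cites} the result, your sketch is adequate for the purpose of indicating why the inequality is plausible, but it would not stand alone as a proof without filling in that variance-drop calculation explicitly.
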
 \vspace*{0.2cm} 
We substitute  $n=\tau$, $U_i = \gamma^{i-1}w_{\tau k -i}$ in Lemma \ref{prop:johnson_inforbd}. 
Recall that $\gamma=1-1/N$ and notice that $\gamma^{2\tau} = 1-2\tau/N + \Ocal(\tau^2/N^2)$ and similarly (take the special case $\tau=1$) $\gamma^2 = 1-2/N + \Ocal(1/N^2)$. Thus, if $\tau = o(N)$, there exists a constant $\Gamma$ for which $(1-\gamma^2)/(1-\gamma^{2\tau}) \leq \Gamma/\tau$. If $\tau=\Theta(N)$, then  $1-\gamma^{2\tau} \geq 1-\frac{1}{\Gamma}$ for a re-defined $\Gamma$. Overall, we have a constant $\Gamma$ such that 

\begin{align*}
    \alpha_i & = \gamma^{2(i-1)}/\Bigl(\sum_{i\in[\tau]}\gamma^{2(i-1)}\Bigr)= \frac{\gamma^{2(i-1)}(1-\gamma^2)}{1-\gamma^{2\tau}}\leq \frac{\Gamma}{\tau}, \\
    c_i & = \frac{1}{2R^\star (\gamma^{i-1}w_{\tau k -i})}\sum_{j\neq i}\frac{1}{I(\gamma^{j-1}w_{\tau k -j})}\\ 
    &= \frac{1}{2\gamma^{2(i-1)}R^\star (w_{\tau k -i})}\sum_{j\neq i}\frac{\gamma^{2(j-1)}}{I(w_{\tau k -j})}\\
    & =  \frac{1}{2\gamma^{2i}R^\star (w_1)}\sum_{j\neq i}\frac{\gamma^{2j}}{I(w_1)}, 
\end{align*}
where the second-to-last equality uses since $R^\star(\gamma^t w)=\gamma^{2t}R^\star(w)$ by \cite[Theorem 2, item v]{borovkov1984inequality} as well as that for a random variable $U$ and $a>0$, $I(a U) = I(U)/a^2$. 

Now we have, 
\begin{align*}
     \frac{1}{2\gamma^{2(i-1)}R^\star (w_1)}\sum_{j\neq i}\frac{\gamma^{2(j-1)}}{I(w_1)} = \frac{1}{2R^\star (w_1)I(w_1)}\lp {\frac{1-\gamma^{2\tau}}{\gamma^{2(i-1)}(1-\gamma^2)}}-1 \rp.\numberthis\label{eq:prfcltgn-eq1}
\end{align*}

Arguing as before, and using $\tau \gg 1$ we have a constant $\Gamma>0$ such that  $$\frac{1-\gamma^{2\tau}}{\gamma^{2(i-1)}(1-\gamma^2)}-1 \geq \frac{1}{\Gamma}\tau.$$
Overall, we have that 
\begin{align*}
    \frac{\alpha_i}{1+c_i} & \leq \frac{\Gamma}{\tau} \cdot \frac{1}{1+\frac{1}{2\Gamma R^\star (w_1)I(w_1)}\tau }\leq  \frac{2\Gamma^2 R^\star (w_1)I(w_1)}{2\Gamma \tau R^\star (w_1)I(w_1) + \tau^2} \leq \frac{2\Gamma^2 R^\star (w_1)I(w_1)}{\tau^2}.
\end{align*}

For a random variable $U$, $J_{st}(a U ) =J_{st}(U)$. In particular, we have both 
$J_{st}(\Wcal_k) = J_{st}(\Wcal_k/\sqrt{\tau})$ and $J_{st}(\gamma^{i-1}w_{\tau k -i})=J_{st}(w_{\tau k -i})=I(w_1)-1$. Therefore,
\begin{align*}
    J_{st}(\Wcal_k) = J_{st}(\Wcal_k/\sqrt{\tau}) \leq \sum_{i\in [\tau]}\frac{\alpha_i}{1+c_i}J_{st}(\gamma^i w_{\tau k -i-1})
    \leq \tau J_{st}(w_1)\frac{8R^\star (w_1)I(w_1)}{\tau^2}
    =\Ocal\lp \frac{1}{\tau}\rp,
\end{align*}
and we get $0\leq \Var(\Wcal)I(\Wcal) -1 =\Ocal(1/\tau)$ as stated. 
\end{proofof}

\begin{proofof}{Lemma \ref{lem:KFPXg}}

Recall from Definition \ref{def:goggin-filter} that with $Q=\frac{1}{N}$ and $R=s_N^2I(v)$
\begin{align*}
    P_t & = \frac{(\gamma^2P_{t-1}+Q)R}{R+I^2(v)(\gamma^2P_{t-1}+Q)}.
\end{align*}
In stationarity this admits the steady-state solution 
\begin{align*}
    P & = \frac{(\gamma^2P+Q)R}{R+I^2(v)(\gamma^2P+Q)}.
\end{align*}

This quadratic equation has the solution 

$$P = \frac{1}{2\gamma^2I^2(v)}\lp (1-\gamma^2)R + I^2(v)Q + \sqrt{((1-\gamma^2)R + I^2(v)Q)^2 + 4QR\gamma^2 I^2(v)}\rp  $$ 

Recalling that $\gamma=1-1/N$ so that $(1-\gamma^2) = \Ocal(1/N)$, as well as that $s_N \lesssim\sqrt{N} = o(N)$ we have that (i) $(1-\gamma^2)R + I^2(v)Q=\Theta(s_N^2/N)$, (ii) $((1-\gamma^2)R + I^2(v)Q)^2 =\Ocal(s_N^2/N)$, and (iii) $4QR \gamma^2 I(v) = \Theta(s_N^2/N).$  

Overall, $$P = \Theta(s_N^2/N + s_N/\sqrt{N}) = \Theta(s_N/\sqrt{N}),$$ where the last equality follows since $s_N \lesssim \sqrt{N}$. 

This in turn implies that, in stationarity,
\begin{align*}
    K_t = \frac{P_tI(v)}{R} =\Theta\left(\frac{1}{s_N\sqrt{N}}\right).
\end{align*}

For the upper bound on $\Ex[X_t^2]$ see Lemma \ref{lem:algebra}. This completes the proof.
\end{proofof}

\begin{proofof}{Lemma \ref{lem:remainderterms1}}
 
    Recall from \cref{eq:R_t} that
\begin{align*} 
\remainder_t  & =\underbrace{X_t(\phi'(v_t)-\Ex[\phi'(v_t)])}_{\text{Term 1}} 
\pm\underbrace{\|\phi''\|_{\infty} \frac{1}{s_N}(X_t)^2}_{\text{Term 2}}, \\
A_t & = (1-K_tI(v))[\gamma(X_{t-1}-\hx_{t-1})+ w_t/\sqrt{N}] - K_t s_N v_t.
\end{align*} 

We first show that the variance of $\remainder_t$ is $\Ocal(1)$. 
\begin{align*}
    \Var(\remainder_t) = \Var(\text{Term 1})+\Var(\text{Term 2})+2\Cov(\text{Term 1},\text{Term 2}).
\end{align*}

Because $X_t$ is independent of $v_t$, 
$\Ex[\text{Term 1}] = \Ex[X_t] \Ex[\phi'(v_t)-\Ex[\phi'(v_t)]]=0$ so that $\Cov(\text{Term 1},\text{Term 2}) = \Ex[\text{Term 1}\times \text{Term 2})]$. By 
H\"{o}lder’s inequality and using again the independence of $X_t$ from $v_t$, we have 
\[
|\Cov(\text{Term 1},\text{Term 2})| = |\Ex[\text{Term 1},\text{Term 2}]|\leq \frac{1}{s_N} 
\|\phi''\|_{\infty} \Ex[|X_t^3|]\Ex[|\phi'(v_t)-\Ex[\phi'(v_t)]|] = \Ocal(1/s_N). 
\] 
Here we used two facts. First, with $X_0=0$, $\Ex[X_t^{2k}] = \Ocal(1)$ for all integer $k$; see Lemma \ref{lem:algebra}. Second,  $|\phi'(v_t)-\Ex[\phi'(v_t)]|\leq \|\phi''\|_{\infty}\Ex[|v_t|]<\infty$. 

To bound the variance terms use again the independence of $v_t$ of $X_t$ and  $\expec[\text{Term 1}]=0$, to have 
\begin{align*}
  \Var(X_t(\phi'(v_t)-\Ex[\phi'(v_t)])) 
  &= \Ex[X_t^2(\phi'(v_t)-\Ex[\phi'(v_t)])^2]\\
  &= \Ex[X_t^2]\Var(\phi'(v_t))\\
   &\leq  \Ex[X_t^2]|\phi''|_{\infty}\Ex[|v_t|^2] =\Ocal(1).
\end{align*}
Term 2 is $\Ocal(1/s_N)$ by our bound for the moments of $X_t$.

We turn to $\Cov(K_t\remainder_t,A_t)=K_t\Cov(\remainder_t,A_t)$. 
\[
\Cov(\remainder_t,A_t) = \expec[\remainder_tA_t]-\expec[\remainder_t]\expec[A_t].
\]
And we bound each of the previous terms separately, beginning with $\expec[\remainder_t]$.

By an argument similar to the one in the previous part,
\begin{align*}
    \expec[|\remainder_t|] = \Ocal\lp \frac{1}{s_N}\expec[X_t^2]\rp=\Ocal\lp \frac{1}{s_N}\Var(X_t)\rp =\Ocal\lp\frac{1}{s_N}\rp.
\end{align*}

We turn to $\expec[A_t]$. $K_t$ is non-random and $\expec[v_t]=\expec[w_t]=0$. Thus,
\begin{align*}
    \expec[A_t] = (1-K_t I(v))\lp\gamma\expec[X_{t-1}-\hat x_{t-1}]\rp.
\end{align*}
It follows from \cref{eq:bias-bound} that the bias $\expec[X_{t-1}-\hat x_{t-1}]$ is $\Ocal(1/s_N)$. It also follows from Lemma \ref{lem:KFPXg} that $1-K_tI(v)=\Ocal(1)$. Combining, it follows that $\expec[\remainder_t]\expec[A_t]=\Ocal(1/s_N^2)$. We underline that our arguments used Lemma \ref{lem:KFPXg}, which is proved independent of Lemma \ref{lem:remainderterms1}. Thus, the argument is not circular. 

 To bound $\Cov(\remainder_t,A_t)$, we now only need to bound $\expec[\remainder_tA_t]$.
\begin{align} 
\Ex[\remainder_tA_t]  = & \Ex\left[X_t(\phi'(v_t)-\Ex[\phi'(v_t)])(1-K_tI(v))\gamma(X_{t-1}-\hat{x}_{t-1})\right] + \tag{i}\\
&   \frac{1}{\sqrt{N}}\Ex\left[X_t(\phi'(v_t)-\Ex[\phi'(v_t)])(1-K_tI(v))w_t\right]+ \tag{ii}\\
& s_N\Ex\left[X_t(\phi'(v_t)-\Ex[\phi'(v_t)])(1-K_tI(v))K_t v_t\right]\pm \tag{iii}\\
& \frac{1}{ s_N}\|\phi''\|_{\infty}\Ex\left[X_t^2|1-K_tI(v)||(X_{t-1}-\hat{x}_{t-1})|\right] \pm \tag{iv}\\ 
& \frac{1}{ s_N}\|\phi''\|_{\infty}\Ex\left[X_t^2|1-K_tI(v)||w_t|\right] \pm \tag{v}\\
& \frac{1}{ s_N}\|\phi''\|_{\infty}\Ex\left[X_t^2K_t |v_t|\right].  \tag{vi}
\end{align}

Using the independence of $v_t$, and $\expec[\phi(v_t)-\expec\phi(v_t)]]=0$, it follows that (i) and (ii) are $0$. In (iii), $X_t$ is independent of $v_t$ and $\expec[X_t]=0$ in stationarity so that (iii) is, as well, equal to $0$. 

We turn to (iv)-(vi) starting with (iv). 

\begin{align*} 
\frac{1}{ s_N}\|\phi''\|_{\infty}\Ex\left[X_t^2|1-K_tI(v)||(X_{t-1}-\hat{x}_{t-1})|\right] 
& \leq \frac{1}{s_N}|1-K_tI(v)|\sqrt{\Ex[X_t^4]}\sqrt{\Var(X_{t-1}-\hat{x}_{t-1})}\\ 
& \leq \frac{1}{s_N}|1-K_tI(v)|\sqrt{\Ex[X_t^4]}\,(1+\Var(X_{t-1}-\hat{x}_{t-1}))\\
& =\Ocal\left(\frac{1}{s_N}(1+\Var(X_{t-1}-\hat{x}_{t-1}))\right)\\
& =\Ocal\lp \frac{1}{s_N}\lp 1+\hat{P}_{t-1} \rp \rp.
\end{align*} 

Rows (v) and (vi) are argued similarly. By Lemma \ref{lem:KFPXg}, $K_t= \Ocal(1/(s_N\sqrt{N}))$ and 
we conclude that $$\left|K_t\Cov(\remainder_t,A_t)\right| =\Ocal\lp  \frac{1}{s_N^2\sqrt{N}}(1+\hat{P}_{t-1})\rp,$$ as required.  \vspace{0.5cm}
\end{proofof}

\begin{proofof}{Lemma \ref{lem:preliminary}}
   
Recall that 
$$\hat{x}_{t+1}^c = \gamma\hat{x}_{t}^c + K_{t}s_N\phi\lp \frac{Y_{t+1}-\gamma\hat{x}_{t}^c}{s_N}\rp.$$

As before, let $e_t:=X_t-\hat x_t$. Because $X_{t+1}= \gamma X_{t}+\frac{1}{\sqrt{N}}w_{t}$,   
$$X_{t+1}-\hat{x}_{t+1}^c = \gamma(X_{t}-\hat{x}_{t}^c)+\frac{1}{\sqrt{N}}w_{t} - K_{t}s_N\phi\lp v_{t+1}+\frac{\gamma (X_{t}-\hat{x}_{t}^c)+\frac{1}{\sqrt{N}w_{t}}}{s_N} \rp$$
and with $U_{t}= \gamma e_t+\frac{1}{\sqrt{N}}w_t$ we have, $$e_{t+1} = U_{t} -K_{t} s_N \phi \lp v_{t+1}  +\frac{U_{t}}{s_N}\rp.$$

Fixing an even integer $l\geq 2$ we have, by expanding the power series of $e_t$, 
$$e_{t+1}^l = U_t^l - l U_t^{l-1} K_{t} s_N \phi \lp v_{t+1}  +\frac{U_{t}}{s_N}\rp + \sum_{m\geq 2: l-m\geq 0}C_m U_t^{l-m}K_{t}^m s_N^m \phi^m\lp v_{t+1}  +\frac{U_{t}}{s_N}\rp,$$  where $C_m=(-1)^m{l\choose m}$ are the coefficients of the binomial expansion $(1-x)^l$. 

The terms inside and outside the summation will be handled separately. First, observe from the tower property that
\begin{align*}
    \Ex\left[U_{t}^{l-1}\phi \lp v_{t+1}  +\frac{U_{t}}{s_N}\rp\right]=\expec\lb \Ex\left[U_{t}^{l-1}\phi \lp v_{t+1}  +\frac{U_{t}}{s_N}\rp\Big|U_t\right] \rb.
\end{align*}

By strong dissipativity in Assumption \ref{asum:centered}
$$\Ex\left[U_{t}^{l-1}\phi \lp v_{t+1}  +\frac{U_{t}}{s_N}\rp \Big| U_t\right]=s_N U_{t}^{l-2}\Ex\left[\frac{U_t}{s_N}\phi \lp v_{t+1}  +\frac{U_{t}}{s_N}\rp \Big| U_t\right]\geq s_N \zeta \frac{U_t^{l}}{s_N^2}.$$

In turn, using that $1-\eta K_t\geq 0$ for all $N$ sufficiently large for have some $\eta$ fixed
 $$\left|U_t^l - \Ex\left[lU_{t}^{l-1}K_ts_N \phi \lp v_{t+1}  +\frac{U_{t}}{s_N}\rp U_t\right]\right| \leq (1-\eta K_t)|U_t^l|.$$

For $m\geq 2$ we have, using the growth bound in Assumption \ref{asum:centered}, that for a constant $\Gamma_m$ which depends only on $\expec[v^m]$
$$\left|\Ex\left[\phi^m\lp v_{t+1}  +\frac{U_{t}}{s_N}\rp\Big|U_t\right]\right|\leq \Gamma_m^{intial}\lp 1+\lp \frac{|U_t|}{s_N}\rp^m \rp,$$ so that, recalling $K_t = \Ocal(1/(s_N\sqrt{N}))$, we have, for some different $\Gamma_m$
$$|U_t|^{l-m} K_{t}^m s_N^m \left|\Ex\left[\phi^m\lp v_{t+1}  +\frac{U_{t}}{s_N}\rp\Big|U_t\right]\right| \leq \Gamma_m \lp \frac{1}{\sqrt{N}^m} + \frac{|U_t|^{l-m}}{\sqrt{N}^m} + \lp \frac{1}{s_N\sqrt{N}} \rp^m |U_t|^l\rp $$

Notice that $K_t^m\leq \frac{1}{2}K_t$ for all $N$ large enough. In particular, there exists $N_0$ large enough so that for all $N\geq N_0$, and for some constant $\eta\in (0,1)$ 

Then for all $l\geq 2$ even $$\Ex[|e_{t+1}|^l|U_t] \leq (1-\eta K_t)|U_t^l| +\frac{1}{\eta}\sum_{m\geq 2:l-m\geq 0}\frac{1+|U_t|^{l-m}}{\sqrt{N}^m}.
$$ 

Recall now that $U_t=\gamma e_t+\frac{1}{\sqrt{N}}w_t$, so that---for $l$ even and a possibly re-defined $\eta$  
$$\Ex[e_{t+1}^l]\leq (1-\eta K_t)|\Ex[e_t^l] + \frac{1}{\eta\sqrt{N}^l} + \frac{1}{\eta}\sum_{m\geq 2:l-m\geq 0}\frac{1+\Ex[|e_t|^{l-m}]}{\sqrt{N}^m}.$$

\textbf{Case $l=2$:} Now, let us assign specific values to $l$. For $l=2$ we have $\eta>0$ and a recursion 
$$\Ex[e_{t+1}^2]\leq (1-\eta K_t)\Ex[e_t^2] + \frac{1}{\eta N}.$$ Iterating the recursion, and recalling $K_t= 1/(s_N\sqrt{N})$ we have in stationarity that 
$$P_{t+1}^c := \Ex[e_{t+1}^2] = \Ocal\lp \frac{1/N}{1/(s_N\sqrt{N})}\rp = \Ocal\lp \frac{s_N}{\sqrt{N}}\rp,$$ as stated. 

Using Cauchy-Schwarz inequality, this, in particular, implies that $\Ex[|e_t|]\leq \sqrt{P_t^c} =\Ocal(1)$ for all $s_N\lesssim \sqrt{N}$. 

\textbf{Case $l=4$:} For $l=4$ we have, for a suitable $\eta$, 
$$\Ex[e_{t+1}^4]\leq (1-\eta K_t)\Ex[e_t^4] + \frac{1}{\eta N^4} + \frac{1}{\eta} \left(\frac{\Ex[e_t^2]}{N} + \frac{\Ex[|e_t|]}{\sqrt{N}^3}\right).$$ Thus, we have 
$$\Ex[e_{t+1}^4]\leq (1-\eta K_t)\Ex[e_t^4] + \Ocal\lp \frac{s_N}{N^{3/2}}\rp.$$ Iterating the recursion we get that, in stationarity, 
$$F_t^c = \Ex[e_t^4] = \Ocal\lp \frac{s_N/N^{3/2}}{1/(s_N\sqrt{N})}\rp=\Ocal\lp\frac{s_N^2}{N} \rp=\Ocal((P_t^c)^2).$$

Finally, by Jensen's inequality with the concave function $g(x) = x^{3/4}$,  we have $$\Ex[|e_t|^3] \leq (\Ex[e_t^4])^{3/4} = \Ocal \lp\frac{s_N^{3/2}}{N^{3/4}}\rp$$ as stated. 

 \end{proofof} 

\begin{proofof}{Lemma \ref{lem:remainderterms2}}

Recall that \begin{align*} \varrho_t&:= s_N\phi\lp\frac{Y_t-\gamma\hat{x}_{t-1}^c}{s_N}\rp - s_N\phi(v_t)-\Ex[\phi'(v_t)](X_t-\gamma\hat{x}_{t-1}^c)\\ & ~= 
(X_t-\gamma \hat{x}_{t-1}^c)(\phi'(v_t)- \Ex[\phi'(v_t)])+ \frac{1}{2} \phi''(\varphi_t)\frac{1}{s_N}(X_t-\gamma \hat{x}_{t-1}^c)^2.\end{align*} Because $X_t,\hat{x}_{t-1}^c$ are independent of $v_t$, we have that 
\begin{equation} \label{eq:varrhobound_centered} 
|\Ex[\varrho_t]|\leq \frac{1}{s_N} \|\phi''\|_{\infty}\Ex[(X_t-\gamma\hat{x}_{t-1}^c)^2]\leq \frac{2}{s_N} \|\phi''\|_{\infty}\lp \Ex[(X_{t-1}-\hat{x}_{t-1}^c)^2] + \frac{1}{N}\rp =\Ocal\lp  \frac{1}{\sqrt{N}} \rp, 
\end{equation} where the last equality follows from Lemma \ref{lem:preliminary}. In particular, $(\Ex[\varrho_t])^2 = \Ocal(1/N)$. 

It similarly follows from Lemma \ref{lem:preliminary}, using the fourth moment bound that 
\begin{align} 
\Ex[(\varrho_t)^2]= \Ocal(s_N/\sqrt{N}+s_N^2/N) = \Ocal(s_N/\sqrt{N}),  
\end{align} where the last equality holds because $s_N^2/N\lesssim s_N/\sqrt{N}$ for $s_N\lesssim \sqrt{N}.$

Using Lemma \ref{lem:KFPXg}, specifically that $K_t = \Ocal(1/(s_N\sqrt{N}))$, we then have that
$$Var(K_t\varrho_t) = K_t^2 Var(\varrho_t) =  \Ocal\lp \frac{1}{s_N N^{\frac{3}{2}}}\rp.$$

We turn to $Cov(K_t\varrho_t,A_t)$. 
Recall that 
$$A_t:=  
(1-K_tI(v))(X_t-\gamma \hx_{t-1}^c)) - K_ts_Nv_t.$$
Because $v_t,w_{t-1}$ are independent of each other and of $X_{t-1}-\hat{x}_{t-1}^c$, and because $K_t\ll 1$ for $N$ large enough, we have that 
$$|\Ex[A_t]| \leq |\Ex[X_{t-1}-\hat{x}_{t-1}^c]| =\Ocal(1/\sqrt{N}),$$ where the equality follows from the bound on the bias \eqref{eq:biascentered}. The latter does not use Lemma \ref{lem:remainderterms2} so that there is no circularity. 

Using Lemma \ref{lem:KFPXg} and \eqref{eq:varrhobound_centered} we have 
$$K_t\Ex[\varrho_t]\Ex[A_t] =\Ocal\lp \frac{1}{s_N\sqrt{N}}\frac{1}{\sqrt{N}}\frac{1}{\sqrt{N}}\rp = \Ocal\lp \frac{1}{s_N N^{\frac{3}{2}}}\rp.$$

\begin{align} A_t\varrho_t =& (1-K_tI(v)) (X_{t}-\gamma \hat{x}_{t-1}^c)^2(\phi'(v_t)-\Ex[\phi'(v_t)])\pm  \tag{I} \\ 
&  \tag{II} 
(1-K_tI(v))\frac{1}{2}\|\phi''\|_{\infty}\frac{1}{s_N} (X_{t}-\gamma \hx_{t-1}^c)^3+\\ \tag{III}& 
(-K_ts_Nv_t)(X_t-\gamma\hat{x}_{t-1}^c)(\phi'(v_t)-\Ex[\phi'(v_t)])\pm 
\\  & (-K_tv_t)\frac{1}{2}\|\phi''\|_{\infty} (X_{t}-\gamma \hx_{t-1}^c)^2\tag{IV}
\end{align} The first row has trivially  $\Ex[\text{I}] = 0$ because $v_t$ is independent of $X_t-\gamma\hat{x}_{t-1}^c$. 
Next, using Lemma \ref{lem:preliminary}
$$|\Ex[\text{II}]|\leq \|\phi''\|_{\infty}\frac{1}{s_N} |\Ex[(X_t-\gamma\hat{x}_{t-1})^3]=\Ocal\lp\frac{\sqrt{s_N}}{N^{3/4}}\rp.$$
Here we used that 
$\Ex[(X_t-\gamma\hat{x}_{t-1}^c)^3]\leq 
4 |\Ex[(X_{t-1}-\hat{x}_{t-1}^c)^3]
| +  \frac{4\Ex[|w_1|^3]}{N^{3/2}}$.

Using the independence of $v_t,w_{t-1}$ and $(X_{t-1}-\hat{x}_{t-1}^c)$ we have that 
$$|\Ex[\text{III}]| = \gamma K_ts_N |\Ex[v_t(\phi'(v_t)-\Ex[\phi'(v_t)])]|| \Ex[X_{t-1}-\hat{x}_{t-1}^c]|=\Ocal\lp \frac{1}{N}\rp.$$ Here we used $K_t= \Ocal(1/(s_N\sqrt{N})$, the bias bound $\Ex[X_{t-1}-\hat{x}_{t-1}^c]=\Ocal(1/\sqrt{N})$ (see \eqref{eq:biascentered}) 
and $|\Ex[v_t(\phi'(v_t)-\Ex[\phi'(v_t)])]|\leq \sqrt{\Ex[v_t^2]Var(\phi'(v_t))} =\Ocal(1)$.

The last row has by Lemma \ref{lem:preliminary} that $$\Ex[|\text{IV}|] \leq 2 K_t  \|\phi''\|_{\infty}\lp P_t^c + \frac{1}{N}\rp = \Ocal\lp \frac{1}{N}\rp.$$

Finally, multiplying by $K_t= \Ocal(1/(s_N\sqrt{N})) $we have that 
$$K_t\Ex[\varrho_tA_t] = \Ocal\lp \frac{1}{s_N\sqrt{N}}\frac{\sqrt{s_N}}{N^{3/4}}\rp = \Ocal\lp \frac{1}{\sqrt{s_N}N^{\frac{5}{4}}}\rp,$$ so that we may conclude that 
$$|Cov(K_t\varrho_t,A_t)|= \Ocal\lp \frac{1}{\sqrt{s_N}N^{\frac{5}{4}}}\rp.$$ Because $s_NN^{3/2} \geq \sqrt{s_N}N^{5/4}$ for $s_N=\Omega(1)$ we finally conclude that 
$$K_t^2Var(\varrho_t^2) +|Cov(K_t\varrho_t,A_t)|=\Ocal\lp \frac{1}{\sqrt{s_N}N^{\frac{5}{4}}}\rp, $$ as stated. 
\end{proofof}

\begin{proofof}{Lemma \ref{lem:lecamapp}}

Let us assume without loss of generality that $\theta=0$ and let $f_{\tau}(z):=f_{\zeta}(z-\tau)$ denote the density of $Z$ (the observation) when $\theta=\tau$; by assumption $Z$ has support on the whole real line. Using Bayes rule
\begin{align*}
   \mathbb{P}[\tau(Z)=\tau]=\frac{f_{\tau}(z)\pi(\tau)}{\int_{\tau}f_{\tau}(z)\pi(\tau)d\tau}.\numberthis \label{eq:bayes} 
\end{align*}
Under the hypothesis of the lemma, there exists $L_\pi$ such that 
\begin{align*}
    \sup_{\tau_1,\tau_2}\frac{|\pi(\tau_2)-\pi(\tau_1)|}{|\tau_2-\tau_1|}\leq L_\pi.
\end{align*}
in particular, for any $\tau$, $|\pi(\tau)-\pi(0)| \leq |\tau |L_\pi$. Then, $$f_{\tau}(z)\pi(\tau)= f_{\tau}(z)\pi(0) \pm L_{\pi} |\tau| f_{\tau}(z),$$
and 
$$\int_{\tau} f_{\tau}(z)\pi(\tau)d\tau= \pi(0)\int_{\tau}f_{\tau}(z)d\tau \pm L_{\pi} \int_{\tau}|\tau| f_{\tau}(z)d\tau.$$
To bound the second term, write 
$$\int_{\tau}|\tau| f_{\tau}(z)d\tau =
\int_{y}|z-y| f_{\zeta}(y)dy\leq z + \int_y |y|f_{\zeta}(y)dy,$$
By the Cauchy-Schwarz inequality we have $\int_y |y|f_{\zeta}(y)dy\leq \sigma=\ocal(1)$. Therefore, fixing $b=\ocal(1)$ and using $\int_{\tau}f_{\tau}(z)d\tau = \int_{\tau}f_{\zeta}(z-\tau)d\tau = 1$, we have $z\in [-b,b]$ that 
\[
\int_{\tau} f_{\tau}(z)\pi(\tau)d\tau = \pi(0) + \ocal(1).
\]

In turn, recalling that $\pi(0)>0$, we have for $z\in [-b,b]$ that 
\begin{align*}
\frac{f_{\tau}(z)\pi(\tau)}{\int_{\tau}f_{\tau}(z)\pi(\tau)d\tau}   & = 
\frac{f_{\tau}(z)}{\int_{\tau}f_{\tau}(z)d\tau+\ocal(1)}\pm \frac{L_{\pi}|\tau|f_{\tau}(z)}{\int_{\tau}f_{\tau}(z)d\tau+\ocal(1)}\\
& =\frac{f_{\tau}(z)}{\int_{\tau}f_{\tau}(z)d\tau}(1+\ocal(1))\pm \frac{L_{\pi}}{\pi(0)}|\tau|\frac{f_{\tau}(z)}{\int_{\tau}f_{\tau}(z)d\tau}(1+\ocal(1)) \\
& =f_{\zeta}(z-\tau)(1+\ocal(1))\pm  2\frac{L_{\pi}}{\pi(0)}|\tau|f_{\zeta}(z-\tau),
\end{align*} where in the second equality we use $\int_{\tau}f_{\tau}(z)d\tau =1$, 
Overall for $z\in[-b,b]$, 
integrating over $\tau$ we have for $z\in [-b,b]$  \begin{align*} \Ex[\tau(Z)|Z=z] & = \int_{\tau}\tau f_{\zeta}(z-\tau)(1+\ocal(1))d\tau \pm 2\frac{L_{\pi}}{\pi(0)}\int_{\tau}|\tau|^2f_{\zeta}(z-\tau)d\tau  \\ & = z(1+\ocal(1))+\Ocal(z^2)=z(1+\ocal(1)),\end{align*} In the first equality we use $\int_{\tau}\tau f_{\zeta}(z-\tau)d\tau = \int_y (z-y)f_{\zeta}(y)dy = z$ because $\Ex[\zeta]=0.$. The last equality follows recalling that $b=\ocal(1)$ so that $z^2 = o(z)$. 

Finally, because $\zeta = \sigma\zeta_0$ (with $\zeta_0$ having a fourth finite moment), we can choose $b=\ocal(1)\gg \sigma$ so that $\Ex[Z^21\{|Z|>b\}] =\ocal(\sigma^2).$ Recalling that $\theta=0$ and that theconditional expectation minimizes the MSE, 
\begin{align*} \Ex[(\hat{\theta}-\theta)^2]& \geq \Ex[(\tau(Z))^2]\geq \int_{z\in [-b,b]}z^2 f_{\zeta}(z)(1+\ocal(1)) dz  = \sigma^2(1+\ocal(1)),
\end{align*} as stated. 
\end{proofof}
\ifCLASSOPTIONcaptionsoff
  \newpage
\fi

\sloppy
\bibliography{biblio,itaibib}
\fussy

\end{document}